\renewcommand\expandafter\subsection\expandafter{%
    \expandafter\@fb@secFB\subsection
  }%
\renewcommand\expandafter\subsubsection\expandafter{%
    \expandafter\@fb@secFB\subsubsection
  }%
\title{Modeling the subjective perspective of consciousness and its role in the control of behaviours}
\author[1,2,3]{Rudrauf, D.\thanks{Corresponding author: david.rudrauf@unige.ch}}
\author[2]{Sergeant-Perthuis, G.}
\author[2,4]{Belli, O.}
\author[1,2]{Tisserand, Y.}
\author[5,3]{Di Marzo Serugendo, G.}
\affil[1]{FPSE, Section of Psychology, University of Geneva, Geneva, Switzerland}
\affil[2]{Swiss Center for Affective Sciences, University of Geneva, Geneva, Switzerland}
\affil[3]{Computer Science University Center, University of Geneva, Geneva, Switzerland}
\affil[4]{Evolutio, Geneva, Switzerland}
\affil[5]{SDS, University of Geneva, Geneva, Switzerland}
\date{July 3 2021}
\theoremstyle{definition}
\newtheorem{defn}{Definition}[section]
\newtheorem*{defn*}{Definition}
\theoremstyle{plain}
\newtheorem{thm}{Theorem}[section]
\newtheorem{prop}{Proposition}[section]
\newtheorem*{prop*}{Proposition}
\newtheorem*{theo*}{Theorem}
\theoremstyle{remark}
\newtheorem{rem}{Remark}[section]
\newtheorem{ex}{Example}[section]
\DeclareMathOperator{\argmin}{\operatorname{\mathbf{argmin}}}
\newcommand{\R}{\mathbb{R}}
\newcommand{\E}{\mathbb{E}}
\DeclareMathOperator{\DKL}{\operatorname{DKL}}
\DeclareMathOperator{\FE}{\operatorname{FE}}
\newcommand{\N}{\mathbb{N}}
\newcommand{\Path}{\mathcal{P}}
\newcommand{\update}{\textit{Pred}}
\newcommand{\scb}{\textit{Scb}}
\begin{document}

\maketitle

\newpage

\section*{Highlights}

\begin{itemize}

\item Novel operationalisation of the role of the subjective perspective of consciousness in the control of behaviour.

\item Novel account of appraisal, drive and theory of mind based on that operationalization.

\item Model accounts for psychophysical relationships between appraisal and distance.

\item Model generates adaptive and maladaptive behaviours as a function of interpretable parameters.

\end{itemize}

\newpage

\tableofcontents

\newpage

\begin{abstract}

Consciousness has been hypothesized to operate as a global workspace, which accesses and integrates multimodal information in a unified manner, supports expectation violation monitoring and reduction, and the motivation, programming and control of action. One important yet open issue concerns how the subjective perspective at the core of consciousness, and subjective properties of manifestation of the environment in such perspective as an embodied experience, plays a role in such process. We operationalised the concept of subjective perspective using the principles of the Projective Consciousness Model (PCM), based on the projective geometrical concept of Field of Consciousness. We show how these principles can account for documented relationships between appraisal and distance as an inverse distance law, yield a generative model of affective and epistemic drives based on purely subjective parameters, such as the apparent size of objects, and can be generalised to implement Theory of Mind, in a manner that is consistent with simulation theory. We used simulations of artificial agents, based on psychological rationale, to demonstrate how different model parameters could generate a variety of emergent adaptive and maladaptive behaviours that are relevant to developmental and clinical psychology: the ability to be resilient in the face of obstacles through imaginary projections, the emergence of social approach and joint attention behaviours, the ability to take advantage of false beliefs attributed to others, the emergence of avoidance behaviours as observed in social anxiety disorders, the presence of restricted interests as observed in autism spectrum disorders. The simulation of agents was applied to a specific robotic context, and agents' behaviours were demonstrated by controlling the corresponding robots.

\end{abstract}

\bigskip

\section{Introduction}

The cyberneticist Valentino Braitenberg, a pioneer in synthetic psychology, used simple toy vehicles as models of animal navigation (the so-called ``Braintenberg vehicles"), to demonstrate how basic sensorimotor rules, can give rise to the emergence of relatively complex behaviours \cite{braitenberg1}. We followed in his footsteps, while acknowledging that simple sensorimotor rules have limited explanatory power. We used a similar approach to investigate the role of more sophisticated cognitive, affective and social processing, present in higher animals, and particularly developed in humans, such as consciousness, in the control of adaptive and maladaptive behaviours.

More specifically, we sought to better understand and model the possible causal role of the subjective perspective at the core of consciousness in functions attributed to consciousness \cite{dehaene2017consciousness}, and explicitly relate core aspects of the phenomenology of subjective experience to the motivation and control of behaviour. We used the principles of the Projective Consciousness Model (PCM) \cite{rudrauf3, rudrauf4, rudrauf5, williford}, and its concept of Field of Consciousness, to design and implement artificial agents with an explicit subjective perspective, and simulate adaptive and maladaptive behaviours governed by such perspective, in a manner that was consistent with consciousness theories. We applied the model to a specific robotic context, using Cozmo (Anki) robots, to demonstrate more concretely the behaviours of the agents. 

Our main contributions in this report are:
\begin{enumerate}
    \item to show how the PCM can offer an operational way to understand and model the role of the subjective perspective of consciousness, in a manner that is consistent with the current understanding of the operation and functions of consciousness; 
    \item to show how the PCM can account, based on projective geometry and embodied criteria, for documented psychophysical laws involved in appraisal, and offer a generative model of projective drives for the motivation of action;
    \item to show on this basis how the PCM can be naturally extended to social-affective processing and Theory of Mind (ToM) for behavioural planning, leveraging projective transformations for perspective taking; 
    \item to present a first proof-of-concept implementation of the PCM allowing us to simulate, in an interpretable manner, a variety of relevant behaviours in artificial agents;
    \item to reproduce a spectrum of documented adaptive and maladaptive behaviours that are relevant to developmental and clinical psychology, such as approach-avoidance and joint attention behaviours, based on our operationalisation of consciousness subjective perspective and classical psychological hypotheses. 
\end{enumerate}

\section{Background}

\subsection{Active inference}

Active inference can be described as an iterative cycle, defined by an inference about the causes of sensory information that influences the choice of action in an agent, which outcome in turn influences inference, and so on. Any process that integrates cycles of perception, prediction, and action, in order to confirm or update prior beliefs and satisfy preferences according to sensory evidence, can be called active inference. As an approach to the control and adjustment of behaviour, active inference is relevant to the general problem of cybernetic governance of autonomous agents \cite{haugeland,varela,bach1}. It offers a general framework to reformulate classical Belief-Desire-Intention (BDI) models \cite{georgeff1998belief}, and appraisal models \cite{broekens2008formal}, in a comprehensive psychological and computational framework.

A prominent implementation of active inference is the so-called Free-Energy Principle (FEP). In that framework, an agent has a model of the causes of its perceptions. Each new sensory input induces a posterior on these causes and the agent approximates this probability distribution on the causes. It then acts according to its belief on those causes, in order to further minimize the surprise induced by the input (see \cite{apps,friston2,friston3,limanowski,seth4}). Importantly, internal models of beliefs used for driving the action of agents may include models of preferences and desires, which can be encoded as expectations, e.g. an agent may expect that its actions will be rewarding following some criterion, and as such embedded in a FE function. FEP has been used in multiple contexts, from accounting for visual search and saccadic eye movements \cite{feldman,friston4,brown,parr3,parr4,parr5,parr6,veissiere}, to the modeling of the emergence of affective, affiliative and communicative behaviours \cite{friston5,constant,constant2,veissiere2020thinking,joffily,rudrauf3}(see also \cite{cunningham1}). Although based on a different initial framework, FEP is compatible with Reinforcement Learning (RL) \cite{friston3}, which is based on reward and punishment signals and value functions \cite{hassabis,bach}. 
 
The question that interests us here is to understand and model key aspects of consciousness as part of a process of active inference, in the general sense of the term. 

\subsection{Theories of consciousness}\label{princ-mach-consc}

The mechanisms governing information processing and action regulation in biological systems such as humans are largely unconscious
\cite{velmans1991human, kihlstrom1996perception,doyle2011architecture,merker2013body,van2012unconscious}. Consciousness is only the tip of the iceberg. Nevertheless, the role of consciousness in that cybernetics remains a fundamental question. 

Theories and models of consciousness developed over the last three decades encompass five broad conceptual frameworks \cite{reggia}: integrated information theories, global workspace theories \cite{baars}, internal self-model theories, higher-level representations, and attention mechanisms; the two first frameworks being the most prominent. 

Integrated Information Theory holds that consciousness is a process of information integration featuring integration and differentiation within a coherent dynamical structure associated with a unique subjective perspective \cite{tononi,tononi2016integrated}. As formulated however, the theory lacks specificity and leads to paradoxical predictions imbuing systems such as power grids, metabolic networks, and simple systems of logical gates with consciousness \cite{merker2021integrated}.

In the Global Workspace Theory \cite{baars, baars1}, consciousness is compared to a theatre with a scene on which an attentional beam is directed to highlight different aspects of information used for decision-making. The concept has further been operationalized in cognitive neuroscience \cite{dehaene2011global}. In recent refinements and extensions, which synthesize a broad spectrum of theories of consciousness, two broad classes of complementary yet potentially independent computational functions of consciousness have been distinguished \cite{dehaene2017consciousness}. Following \cite{dehaene2017consciousness}, at a first level, consciousness is conceived of as a mode of access to, and exploitation of multimodal information within a global workspace. According to the theory, the workspace has limited, imperfect representational capacities, is subject to illusions, and entails a dimensional reduction, which yields a single, contextual conscious representation at a time. The workspace is used for serial and sustained processing, selecting and pooling information from memory and multimodal sensing. Its role is to coordinate and broadcast information for further computation in cognitive and sensorimotor systems, for (verbal and non-verbal) reports, and ultimately for high-level decision making, such as the explicit planning and selection of action to achieve long-term goals. Keeping with \cite{dehaene2017consciousness}, at a second level, consciousness is conceived of as an ensemble of mechanisms subtending self-monitoring, or “meta-cognition” of internal representations in the workspace. It relies on memory and imagination, building upon self-models, models of others and the world, and the computation of expected outcomes. It notably aims at reducing uncertainty, prospectively and retrospectively, about sensory evidence, at detecting and correcting errors of predictions in reference to prior knowledge and expectations, and at facilitating social sharing of relevant signals. Social sharing, which is based on the collective broadcasting of information, is thought to be important for the optimization of decision-making, notably by allowing the conscious system to infer that others may have different points of view \cite{bahrami2010optimally}. As a mechanism of reduction of uncertainty, consciousness is understood as fostering curiosity, including through long-term reflection and imagination. Its overarching function is to further optimize decision and action based on a probabilistic sense of confidence and value. It interacts in a complex manner with exogeneous and endogenous attentional processes.

Along similar lines, Aleksander \cite{aleksander} has proposed five "axioms" that must be met to consider artificial systems as including a model of consciousness : 1) presence, which involves mechanisms representing the world and the organism in it; 2) imagination, which corresponds to internal simulations of state trajectories in the absence of sensory input; 3) attention, which involves mechanisms guiding sensors during perceptual acts, and modulating imaginary processing; 4) planning, which relies on the internal exploration of possible actions through imaginary processing; 5) emotion, which participates in the evaluation of plans.  

According to prominent theories, another central aspect of consciousness (which is compatible with the above theories) is the integration of a representation of the body in space and of its relations to the environment, aimed at fostering homeostasis, survival and well-being, through embodied appraisal and emotion \cite{seth2012interoceptive,damasio1999feeling,man2019homeostasis,blanke2012multisensory}.

\subsection{The prominent yet elusive place of a subjective perspective in consciousness theory}\label{princ-mach-consc}

It remains largely mysterious however, how information is specifically accessed, shaped and exploited through consciousness, and in particular, how unified principles could account for the phenomenology of subjective experience and its role in the functions of consciousness. Understanding how qualitative experience (or subjective character) participates in the process of information integration and behavioral control carried out by consciousness is an important challenges for consciousness theory \cite{seth,seth2,seth3,revonsuo,chella,manzotti}.

From classical theories to more recent ones, much of consciousness research has brought to the fore the phenomenologically salient and central role of a "subjective perspective", as a non-trivial, viewpoint-dependent, unified internal representation of the world in perspective \cite{james1890principles, nagel1974like, trehub2007space, velmans1990consciousness, lehar2003world, merker1, merker2021integrated, godfrey2019evolving, tononi, rudrauf3}.

Although the notion connects to research in visual awareness \cite{hering1942spatial, roelofs1959considerations, howard1966human, starmans2012windows}, such view-point dependent subjective perspective in fact pertains more generally to multimodal or supramodal representations of space in relation to the situated body \cite{rudrauf4}. It would play a pivotal role not only in the framing of perception, but also in the shaping of non-social and social perspective taking (see below) through imaginary projections, for the integration and exploitation of information \cite{trehub1977neuronal, cox1999initial, neelon2004isoazimuthal, sukemiya2008location, limanowski2011we, blanke2012multisensory, berthoz2010spatial, cattaneo2011blind, metzinger2005out, merker1, merker2013body, heller1990perspective, tinti2018my}.

Theories such as Integrated Information Theory \cite{tononi} have attempted to address the question at a basic level, in a manner that reduces the subjective perspective under a unique point of view to purely information theoretic considerations, and in doing so largely fails to capture the phenomenon \cite{dehaene2017consciousness, merker2021integrated}. While acknowledging the importance of subjective experience as characterized by its phenomenology for consciousness theory, others have preferred to leave the problem aside \cite{dehaene2017consciousness, crick1990towards} (see: \cite{merker2021integrated}). It thus remain to be understood how a subjective perspective may be amenable to a computational operationalisation, which would make its features an integral part of the cybernetic functions that consciousness is thought to play, in a manner that subsumes current theories of consciousness \cite{dehaene2017consciousness} and Aleksander's axioms \cite{aleksander}.

\subsection{Perspective taking abilities at the core of human consciousness}\label{intro}

In this section, we further review the concept of perspective taking and its relation to social inference and theory of mind, as, in light of the above, the concept appears central to consciousness and its relations to motivated behaviors. Perspective taking relies on the spatial framing of multimodal sensory and affective information, relating egocentric and allocentric frames with respect to the body, and conditions how embodied agents represent and interact with objects and others (see review in \cite{riva1, rudrauf4, mchugh1}). Interestingly, abstract reasoning has been linked to the representation of space in situated autonomous agents \cite{constantinescu}. Furthermore, humans, and other animals, are social beings that perform theory of mind (ToM), which is itself grounded in perspective taking \cite{Premack1}. These mechanisms are essential for the development of adaptive and maladaptive behaviours. 
 

Visual perspective taking (VPT) is the process through which one can simulate how the world may appear from another point of view \cite{hamilton1}. VPT enables people to infer that a specific object can or cannot be seen from another point of view (VPT $level-1$, or VPT1), e.g. due to obstructions, and that the same object can be seen differently by two persons (VPT $level-2$, or VPT2). Performance in VPT2 has been linked to social mentalising abilities or Theory of Mind (ToM) \cite{hamilton1}. The terminology might be somewhat misleading as VPT does not imply an exact representation of the detailed visual experience. Instead, it can be sketchy and imprecise, and nevertheless play an important functional role. Moreover, it is not necessarily purely visual but in fact should be conceived as a supramodal process of spatial cognition \cite{rudrauf3} (see also above). 


ToM is the ability to make inferences about others' metal states, beliefs and desires, in order to understand and predict their overt behaviours. It encompasses cognitive and affective dimensions, and relies on imaginative processes \cite{kalbe1,baroncohen1,wimmer1}. Mentalising is another term for ToM \cite{frith1}, used in a psychodynamic perspective \cite{gergely1}. The cognitive and psychodynamic literature emphasizes the integral link between perspective taking and ToM or mentalising, in connection to developmental and clinical rationale aiming at understanding and treating psychopathology (for a review on this link see for instance \cite{barnesholmes1}).  

Six levels of perspective taking underlying ToM have been distinguished as components of the normal development of children, which disruption may lead to differentiated symptomatic behaviours \cite{hadwin1}. $Level-1$ and $-2$ respectively correspond to VPT1 and VTP2 described above. $Level-3$ is the ability to understand that knowing requires to reduce uncertainty through direct sensory evidence (to know for sure what a box contains, one has to look inside it). $Level-4$ corresponds to the use of true beliefs about past experiences to correctly predict others' behaviour (one may predict that another knowing that a box contains a desired object will approach it). $Level-5$ is the ability to understand false beliefs and their consequences. A seminal task to probe $level-5$ ToM is the so-called Sally-Anne test \cite{baroncohen2,leslie1}. Typically, two puppets, Sally and Anne are presented to a child. Sally puts a marble in a basket while Anne is watching. Sally goes out for a walk. Anne takes the marble out of the basket and puts it into a box next to the basket. The question to the child is then: where will Sally look for her marble when she comes back? A child with $level-5$ ToM will answer in the basket, whereas a child without it will answer in the box (where it is now). Non-verbal tasks have been used to demonstrate that $level 5$ may be developed much earlier than previously thought (as soon as 15-month-old if not earlier) \cite{onishi1}. $Level-6$ ToM, also called second-order ToM, corresponds to the ability to understand embedded beliefs, i.e. that other people can themselves make inferences about what other people are thinking. It is the ability to perform ToM about others' ToM.  

Simulation theory, i.e. the internal simulation of self and others by projection of self-models, has been proposed to account for essential aspects of social perspective taking, empathy and affective learning \cite{lamm, berthoz2010spatial}. The adaptive use of perspective taking is believed to play a critical role in emotion regulation and social-affective development, according to both cognitive-behavioural \cite{gross,clement} and psychodynamic traditions \cite{gergely,beckes,fonagy}. It also connects to appraisal theory \cite{ellsworth,moors,scherer}, and in particular reappraisal, which fosters resilience \cite{kalisch}. 


An integral link between the capacity for perspective taking at multiple levels and the emergence of behaviours such as joint attention behaviours has been hypothesized. As a behaviour, joint attention or shared attention corresponds to a situation in which two or more individuals look at the same object (or person) with expressed interest. Psychologically, it requires that the individuals establish a shared interest for the item through ToM. It is typically triggered by an individual signaling the interest of the item through orientation, eye-gazing, pointing, vocalizations or verbal utterances. The ability to perform joint attention normally appears during the first year of age \cite{scaife1}. According to a prominent theory, it implies the ability to represent another person's representation of items, in terms of both the spatial relationships between that person and the items, and how they appear tagged for this person with a positive or negative valence, which underlines an interest in, or preference for attending to the items \cite{baroncohen3, ciompi1991affects}. For this reason, we use the term ``social-affective perspective taking" throughout this article. This process could represent a powerful method for view-point dependent, situated systems, to make inference, learn and expand knowledge adaptively about their environment through the experience of others \cite{dawson1}.         


Disturbances of social-affective perspective taking have been hypothesized to play a central role in the emergence of certain maladaptive behaviours and disorders. Thus Autism Spectrum Disorders (ASD) and Social Anxiety Disorders (SAD) (among other disorders) have been related to disturbances of social-affective perspective taking at different levels, which would impact adaptive emotion regulation strategies, approach-avoidance and joint attention behaviours.

ASD is characterized by deficits in social interactions and communication, restricted interests, and repetitive behaviours \cite{dsm5}. One of the key early diagnostic criteria for ASD is a deficit in joint attention behaviours. ASD has been associated with impairment in social-affective perspective taking or more generally ToM, and their interactions with emotion regulation \cite{samson1,hamilton1}. According to the Empathizing-Systemizing (E-S) theory, ASD would result from a developmental delay or disturbance of ToM, causing a ``mindblindness" and reduced empathetic abilities, combined with tendencies for systemizing, i.e. seeking to explain the world with limited, rigid rules \cite{baroncohen4}. Mindblindness would be related to asymmetries in processing the self \textit{versus} others \cite{lombardo1}. Social motivation theory emphasizes a primary deficit in interest for social stimuli, effectively leading to mindblindness \cite{chevallier1}. 

SAD is characterized by a debilitating fear and avoidance of social situations, in which the performance of an individual might be judged in an humiliating way \cite{dsm5}. SAD individuals tend to over-mentalise and attribute negative beliefs about them to others, which altogether would underlie social avoidance behaviours \cite{washburn1,moscovitch1}. At the same time, they maintain high social standards and strong affiliation needs (for review, see \cite{morrison1,steinstein1}). In other words, SAD individuals find themselves in a conflict of approach-avoidance with respect to others, which they fear but care about \cite{clarkwells1}. They appear to struggle with shifting their attention towards positive cues that could disconfirm their negative beliefs \cite{morrison1}, which is reinforced by their avoidance of social exposure \cite{moscovitch1}.

\section{Bridging the gap: what is thus the task at hands?}\label{bridgegap}

\subsection{Epistemological considerations}

Considering the features and functions of consciousness reviewed above \cite{dehaene2017consciousness, aleksander}, and the issue of the binding of these through a subjective perspective \cite{merker2021integrated, rudrauf4}, the task at hand is thus to integrate the following into an operational model. The model should subsume a subjective perspective, representing objects and others, under a unique point of view, in a manner that is consistent with the phenomenology of subjective experience. That subjective perspective should serve as a mechanism of access to, and integration of information, and operate as a global workspace playing a causal role in active inference and the control of behaviour. Following \cite{dehaene2017consciousness, aleksander}, this entails understanding how such subjective perspective could participate: 1) in the appraisal of values and monitoring of uncertainty, as a basis to inform (affective and epistemic) motivational drives, and 2) in programming action through internal simulations encompassing counterfactual non-social and social imaginary perspective taking \cite{baroncohen3, lamm}. 

The problem requires to understand how a subjective versus objective representation of the environment could in and of itself contribute to the process. More specifically, one must consider how the perspectival manifestation of contents (which does not respect Euclidean properties of objects in ambient space, and thus does not correspond to their objective existence), may support information processing and control, and serve as an intermediary level of representation and as a mechanism for the programming of action. Figure \ref{fig:Fig_0} presents a synthetic illustration of the problem.  

\begin{figure}
    \centering
    \includegraphics[width=1\textwidth]{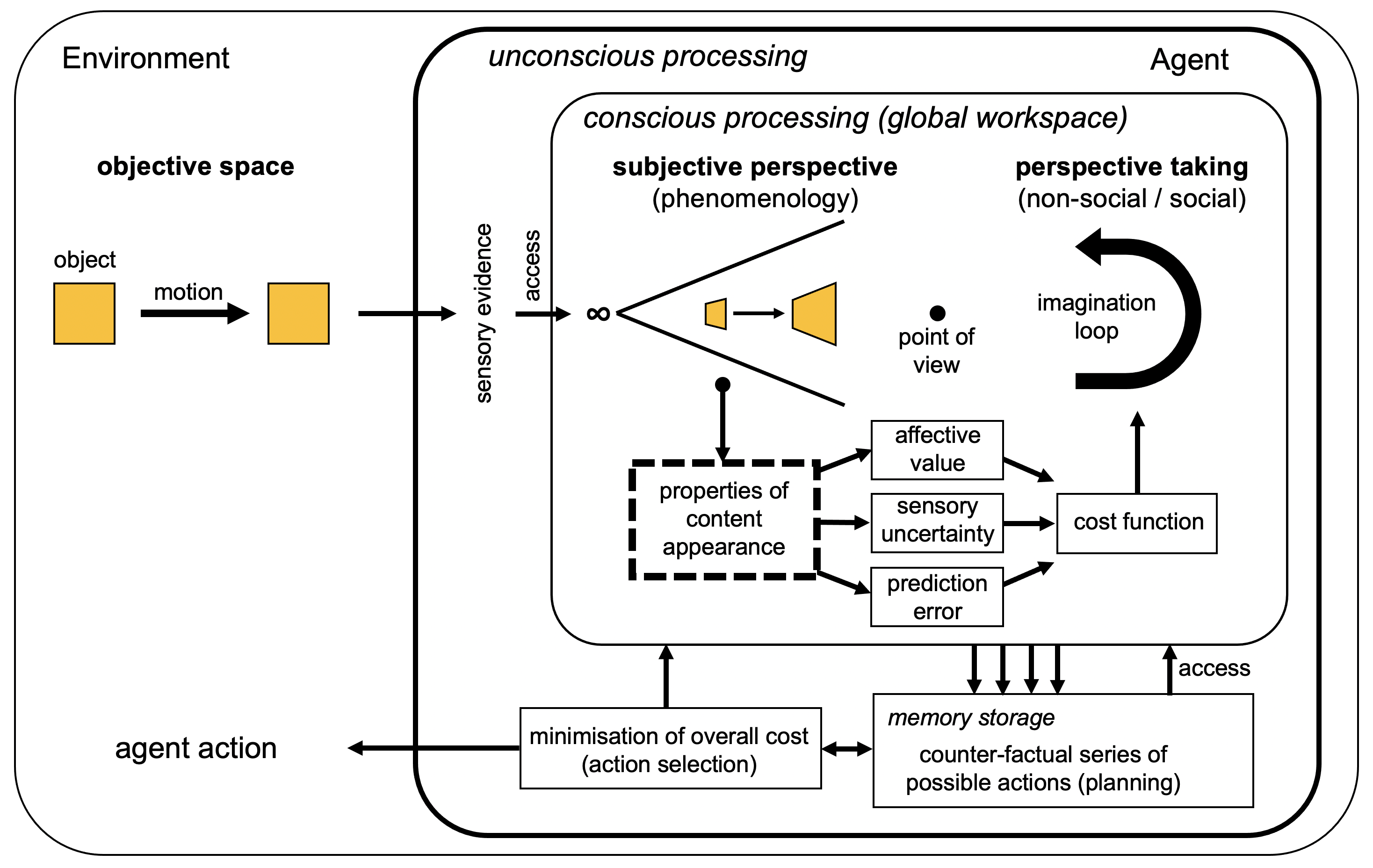}
    \caption{\textbf{Concepts to integrate in a model of consciousness}}
    \medskip
    \small
    \raggedright
    Following \cite{dehaene2017consciousness, aleksander, merker2021integrated, rudrauf4, baroncohen3}, the model should access objective information, e.g. the Euclidean properties of objects in space, within a viewpoint-dependent subjective perspective serving as a global workspace, and compute, from intrinsic (phenomenal) properties of content appearance within that perspective, quantities that capture affective values, sensory uncertainty, and prediction error, in relation to prior beliefs and preferences. It should do so in order to evaluate possible actions through an imagination loop, implementing non-social and social perspective taking. The overall cost of actions should be evaluated on this subjective basis in order to select actual actions, so that the subjective perspective itself can be considered to play a causal role in the process.   
    \label{fig:Fig_0}
\end{figure}

\subsection{The projective consciousness model}

The principles of the Projective Consciousness Model (PCM) are precisely meant to pursue that goal, by modeling the subjective perspective based on a view-point dependent projective geometrical framing of information \cite{rudrauf4,williford,rudrauf5,rudrauf3}. The approach emphasizes the importance of combining geometrical and information theoretic concepts for a fuller operational account of consciousness. The PCM is a general model based on two overarching principles: 1) consciousness is at the core of active inference as a process, 2) it integrates information within a global workspace, i.e. the Field of Consciousness (FoC), which is framed in a viewpoint-dependent manner and governed by 3-dimensional projective geometry. The second principle derives from considering the geometrical features of the phenomenology of the subjective experience of space and its contents, which appear, both in perception and imagination, to be organized according to a 3-dimensional space in perspective from an elusive point of view \cite{rudrauf4}.

Projective geometry encompasses some of the properties of perspective. It extends affine geometry with points at infinity, which are considered as any other points, at which all "parallel" lines meet (as in perspective drawing), thus defining a projective space. It results from (projective) transformations that preserve the incidence structure of points, lines, planes and hyperplanes, but do not preserve angles. The 2-dimensional projective space, or projective plane, is the space of all lines of the 3 dimensional Euclidean space going through the origin. It is a model of how the space of all light beams converging towards the eye of an observer behaves under different perspectives. A 3-dimensional projective space $P_3(\R)$ can be defined from a 4-dimensional vector space in which all collinear vectors are considered equivalent. Transformations of $P_3(\R)$ are governed by the action of the group of projective transformations in 4-dimensions $PGL(4)$. As we shall see, one interest of perspective taking as governed by projective geometry is that it discriminates objects of the euclidean space by ordering them according to a point of view such that the apparent size of objects changes depending on choices of focalisation. 

In the PCM, such a 3-dimensional projective space is the key component that defines the FoC as a globale workspace. It explicitly models a subjective perspective from a unique point of view, as a mechanism of access and processing of information. The FoC is used to access and represent an internal, unconscious world model, and integrates cognitive and affective parameters associated with it, in a view-point dependent manner. It can be transformed to take multiple perspectives on that world model based on projective transformations. Thus one key feature of consciousness according to the PCM is that it accesses and frames unconscious representations by transforming their encoded, viewpoint-independent Euclidean ``coordinates" into viewpoint-dependent projective (homogeneous) ``coordinates".

We recently demonstrated how the principles of the PCM could explain perceptual illusions such as the Moon Illusion and Ames Room, as a result of the calibration of a 3-dimensional projective chart under free energy (FE) minimisation (\cite{rudrauf5}, see also \cite{rudrauf4}). We previously offered preliminary descriptions of how the same principles could integrate affective processing and imaginary projections to support active inference \cite{rudrauf4, rudrauf3}.

The general rationale is that the FoC provides an agent with a situated, perspectival representation of the world, including the agent itself and its environment, in three dimensions, and directly supports the analysis, appraisal and reappraisal of spatial-temporal, epistemtic, affective and social information within that subjective referential, in which orientation, directions, and relations of incidence are essential, in order to motivate and program action. Through the FoC, an agent can relate perception (representations directly informed by sensory evidence), and the imagination (representations informed by prior beliefs and preferences only) as part of a process of active inference, by accessing unconscious representations, encoded in memory, of objects, self and others, and their relations, which are weighted by expectations, and updated based on sensory evidence. The process is governed by changes of projective charts, enabling an agent to take different perspectives on its internal model, which would result from different choices of actions, in order to simulate and optimize action outcomes. 

\subsection{Beyond qualitative considerations}

Until now, these considerations have remained largely qualitative and no computational model integrating these different aspects had explicitly been implemented yet. 

As we shall see, the projective framing of the world by an agent in its FoC, can be a basis for deriving affective and epistemic values that are explicitly subjective and dependent on how the world appears to the agent. These in turn can be associated with a FE to minimise in order to appraise and select the best overall course of action. Recursive, imaginary perspective taking through the FoC can be used to minimise FE, with a variable depth of processing (here simply the number of successive perspectives being anticipated as a result of possible actions), in order to envision better states beyond the immediate surrounding of the agent, and in a manner that can take into account others' perspectives.

\section{Model}\label{mat-and-meth} 

In the following, we present an initial implementation of the model as a proof-of-concept. Our approach embeds a projective model of the subjective perspective of consciousness within a process of optimization based on the concepts of active inference and free energy minimization. However, we consider the problem of active inference within a broader framework than Friston's FEP, and do not follow the FEP formally, in part for convenience and flexibility in modeling. Our model is divided into two steps of inference and actions, but we depart from FEP in particular with respect to the action part, as we consider that agents have precise goals or value-functions defining the satisfaction of affective and epistemic drives. We use a cost function for action that is related to a free energy (FE), and formulated in relation to probability densities, in a manner that allows agents to optimize both preferences and uncertainty reduction (yielding a form of intrinsic curiosity in the agents). In this initial implementation of the model, we used an approach that is related to classical optimal control. In the future, we may extend it to stochastic optimal control, e.g. to use a larger set of algorithms to find critical points of the cost function, and relate the approach more explicitly to RL, but this is beyond the scope of this contribution. Likewise, we used simplifications for certain components of the processing pipeline we implemented, e.g. we do not claim to compete with state of the art models of emotion processing and expression. In the following sections, we describe the main components of the model (see appendix \ref{model-tech-details} for further technical details). 

Let us remark that in our model we call FE a mean of Kullback-Leibler divergence. It is justified as there is a close link between FE and this divergence, specifically the distributions that optimize one or the other are the same when considering the true FE or the Kullback-Leibler divergence in the action phase (see also remark \ref{appendix-rem-fe-dkl} for more details on the link between the two quantities). We present our model using the Kullback-Leibler divergence as we find it more directly interpretable than FE itself.

\subsection{General settings}\label{general_settings}

We consider three different types of entities: objects, agents, and subjects. Let us note $E$ the collection of objects $O$ and agents $A$. Objects can be seen as a subset of the surrounding space $X_o\subseteq \mathbb{R}^3$. Agents are not objects, but as they are embodied, they also occupy a certain portion of space, like an object. We will use the term subject, in order to distinguish an agent under consideration from other agents. 

Two main processing phases are emphasized as parts of active inference. During the first phase, the agent assesses its surrounding contextually, e.g. whether preferred given entities appear large or small in its FoC, and thus close or far. Then, during the second phase, the agent decides which action to perform, e.g. how it could approach preferred entities while avoiding disliked ones based on perspectival projections of states, which anticipate the consequences of action. Actions then influence sensory inputs, which closes the loop of perception and action.  

Formally there are 3 spaces, $S$ the space of sensory input, $\Gamma$ the space of different contexts for the world, and $N$ the space of actions, and two cost functions, $F_1:S\times \Gamma \to \R$, $F_2: N\times \Gamma \to \R$. This can be summarized as the following scheme, for a given input $s\in S$,

\begin{enumerate}
    \item  Inference: 
    \begin{equation}
        \hat{\gamma}=\argmin_{\gamma\in \Gamma}{} F_1(s,\gamma)
    \end{equation}
    
    \item  Action:
    
    \begin{equation}
    \hat{m}=\argmin_{m\in N}{} F_2(a, \hat{\gamma})
    \end{equation}

\end{enumerate}

This general scheme applies to Friston's FEP as well as to our model, even though our approach departs in part from the FEP, as in FEP,  $\Psi$ is a space of causes for sensory inputs and $\Gamma$ parametrizes probability distributions on $\Psi$. In FEP, for some sensory input $s$ the cost function for the inference is, 

\begin{equation}
F_1(s,\gamma)=\DKL(Q_{\gamma}\Vert P(.\vert s))= \sum_{x\in \Psi} Q_\gamma(x) \ln \frac{Q_\gamma(x)}{P(x\vert s)}
\end{equation}

The cost function for action, in the case of FEP, is,

\begin{equation}
 F_2(m,\hat{\gamma})=E_{Q_{\hat{\gamma}}}[-\ln P(s(m),.) ]
\end{equation}

In our model, a subject embeds internal models of preferences or affective expectations, regarding objects and agents, $p_s\in [0,1]^{A\times E}$, which define a preference tensor $(p_{abe}\in[0,1], a\in A\quad  b\in A\quad e\in E )$. The true preferences of a subject $a\in A$ for an entity $e\in E$ is given by $p_{aae}$ denoted simply as $q_{se}$, and its beliefs about the preferences of another agent $b\in A$ for an entity $e\in E$ is given by $p_{abe}$. The sensory inputs are the emotions expressed and the configuration of the entities in the physical space (entities are explicitly encoded as 3-dimensional structures in ambient space). The space of the preferences of the subject, $[0,1]^{A\times E}$, plays the role of $\Gamma$. 

\subsection{The Field of Consciousness}

\subsubsection{Projective geometry, psychophysics of appraisal and affective drives}\label{intro:foc-motive}

The PCM assumes that 3-dimensional projective geometry is essential for capturing key features of consciousness. Following defining properties of consciousness as reviewed above (see Section \ref{princ-mach-consc}; see also Figure \ref{fig:Fig_0}), it is thus important to assess whether that geometry could intrinsically induce effects on subjective appraisal. More specifically, we should assess how these effects could relate to the psychophysics of appraisal in space, and whether they could in turn explain how the subjective perspective may contribute to the motivation of action as a result.

A documented effect, pertaining to the subjective appraisal of danger, is that reported subjective fear depends on the distance between a participant and a dangerous object presented in front of the participant, e.g. a snake, in a manner that approximately follows an inverse distance law \cite{teghtsoonian1}. This effect was linked to Stevens' psychophysical distance, which is a power law, such that when its exponent is equal to $-1$, the function follows an inverse distance law \cite{teghtsoonian1}. Let us consider how this relation may relate to the FoC.
In what follows when considering a vector $v$ in $\R^n$, we will denote $v[k]$, for $k\leq n$, its $k$-th coordinate.\\

The 3-dimensional projective space, $P_3(\R)$ is the set of lines of $\R^4$, any bijective linear transformation from $\R^4$ to $\R^4$, i.e. any invertible $4\times 4$ matrix denoted as $M$, defines a projective transformation; the projective transformation can also be expressed as a partial map from $\R^3$ to $\R^3$ as follows, for $(\lambda_1,\lambda_2,\lambda_3)\in \R^3$ such that $M(\lambda_1,\lambda_2,\lambda_3,1)[4]\neq 0$,
\begin{equation}
\phi(\lambda_1,\lambda_2,\lambda_3)=  \left(\frac{M(\lambda_1,\lambda_2,\lambda_3,1)[1]}{M(\lambda_1,\lambda_2,\lambda_3,1)[4]},\frac{M(\lambda_1,\lambda_2,\lambda_3,1)[2]}{M(\lambda_1,\lambda_2,\lambda_3,1)[4]},\frac{M(\lambda_1,\lambda_2,\lambda_3,1)[3]}{M(\lambda_1,\lambda_2,\lambda_3,1)[4]}\right)
\end{equation}

We shall denote $\phi$ or $M$ the projective transformation depending on whether it is seen in $\R^3$ or $\R^4$. We do not want to go into the detail of what a projective frame is, so here when referring to a frame we mean a choice of a basis of \footnote{Not their image in $P_3(\R)$, which would need to specify one more point} $\R^4$ that is used to define coordinates of $P_3(\R)$.

The locations of objects and other agents are expressed in coordinates that are proper to the subject, which correspond to what is located in front or behind of it, on its left or right, below or above. Each subject must select a perspective as part of its representation of the external world in its internal world, which is encoded by a 3-dimensional projective transformation. Following \cite{rudrauf5}, this transformation relates to a subjective, internal method of organization of an internal world model, based on sensory data and prior beliefs, which defines the FoC. There is no canonical choice of projective transformations for moving from the external world to the internal world. However, the problem of the choice of a possible family of projective transformations can be constrained based on considerations about an agent embodiment, according to the following properties:

\begin{enumerate}

\item The subject is centered in $0$ after projective transformation, i.e. in its perspective it is at the center of its frame.

\item  The three axes $x,y,z$ are preserved, i.e. the axes of the Euclidean frame associated with the agent (up-down, left-right, and back-front) must be preserved after projective transformation.

\item  No points in ambient space appears, to the subject, to be truly at infinity; this constraint being satisfied when the subject only directly represents a half space (in perception or imagination).

\item Objects that are near the agent appear to have the same size as in the reference Euclidean frame. 
\end{enumerate}

These properties limits the choice to transformations that have the following matrix expression in homogeneous coordinates,

\begin{equation}
M=\begin{pmatrix}
1 & 0 & 0 & 0 \\
0 & 1& 0& 0 \\
0 & 0  & 1& 0 \\
0 & 0 & C & 1
\end{pmatrix}
\end{equation}

where $c\in \R_+$ is a positive real number (see appendix \ref{projective-results} for a proof of this result) that represents the inverse of a depth of field.\\

The entities of the `real world' as encoded in the agent's world model are sent to the subject's internal subjective workspace by expressing the coordinates of the points that constitute the entities in the basis attached to the subject and then by applying the above projective transformation. We call this transformation the subject's projective chart and denote it as $\psi$. This applies to both perception and imagination of the subject. The subject could thus consider locations in the Euclidean space that are not accessible through immediate sensory evidence, and perform imaginary simulations of the consequences of anticipated motion toward those locations based on prior beliefs, which would entail the computation of an imaginary projective chart.

In the projective chart of the subject, the perceived size of objects, $r_p$, varies (asymptotically) as, 

\begin{equation}
\frac{r_p}{r}\simeq \frac{1}{cz}
\end{equation}

where $z$ is the coordinate of the direction along which the subject is aiming and $r$ the length of the object in ambient space; an exact expression also holds if the object position is orthogonal to the direction of sight of the subject (see Appendix \ref{projective-results}). The size evolves as a Stevens' Law with exponent $-1$, as found in (\cite{teghtsoonian1}) in the context of the appraisal of threat. See Figure \ref{fig:Fig_1a} for an illustration of these principles.

This suggests that, at some level, the subjective experience of the apparent, subjective size of the objects in the FoC itself (and not an actual estimation of objective Euclidean distance) might serve as a criterion for affective appraisal at the level of consciousness. Although the documented effects concern the appraisal of danger (negative valence), we will assume in the following that the property relating projective apparent size to Euclidean distance, is used both in the appraisal of negatively valenced and positivily valenced entities. In other words, the closer the object I fear or like, the bigger the object feels, and thus the fear or joy I experience. This is an hypothesis that is motivated by the geometrical properties of the model, which offer a potential mechanism for basic affective appraisal in general. Future studies should assess this prediction empirically, but this is beyond the scope of this theoretical contribution. 

Thus, in our operationalization, the perceived value of objects directly depends on the perspective being taken. Such relationship between affective appraisal and projective size can then be used to induce a direct effect on the motivation of approach-avoidance. Indeed, when an agent projects itself, in an imaginary manner, closer to an object it likes, respectively farther from an object it dislikes, the agent will experience a greater satisfaction of its preferences, respectively a greater dissatisfaction, due to the effect of the projection on apparent size, according to the current rationale. As specified in section \ref{mat-and-meth}, in the first case, expected FE will be lower, and in the second case, it will be higher, which will drive the agent to act accordingly, so that it will approach the liked object and avoid the disliked one. In other words, in this framework, the FoC acts as a non-trivial ``lens" on information, which it transforms and funnels in a subjective frame, in a manner that entails a mechanism of projective drive, as part of the processing pipeline underpinning the overall cybernetics of the agent (see section \ref{cost-function}, which introduces the specific cost function implementing such drive). 

\begin{figure}
    \centering
    \includegraphics[width=0.8\textwidth]{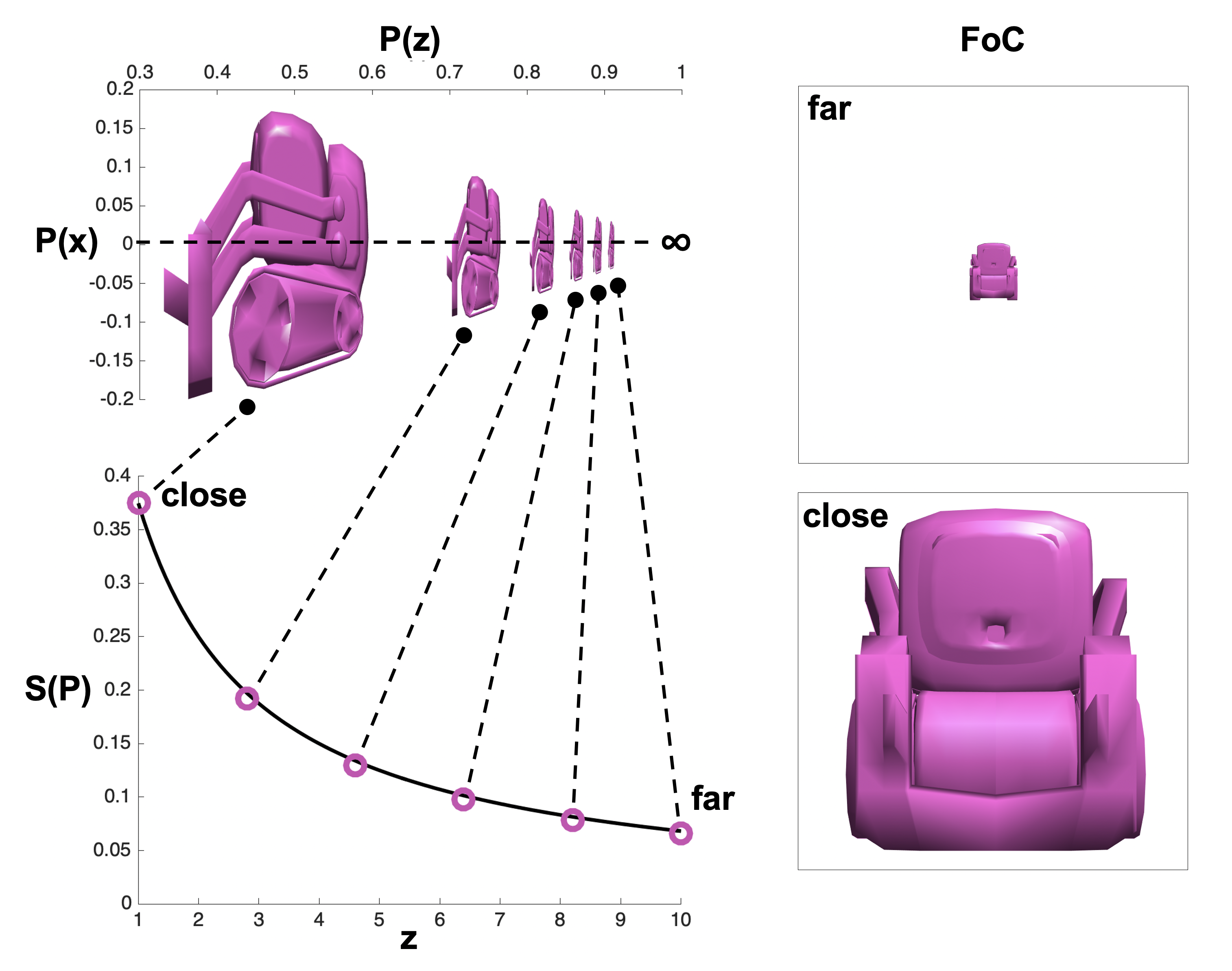}
    \caption{\textbf{The Field of Consciousness and psychophysics of appraisal}}
    \medskip
    \small
    \raggedright
    Effect of 3-dimensional projective transformations on the representation of an object (the mesh of a Cozmo robot) as a function of its distance from the observing agent. \textit{Left-tier}. The upper chart shows how the Euclidean coordinates $(x,z)$ are transformed by the 3-dimensional projection $P(.)$. The direction of projection is $z$. The chart shows the projective space as if one could see it from side to reveal the effect of transformation on a 2-dimensional viewing plane. The lower chart shows how the size of the transformed object (here the quadric root of its volume in the projective space) varies as a function of Euclidean distance $z$. The linear size decreases approximately as the inverse of the distance. \textit{Right-tier.} Views showing the first person perspective (FoC) on the same transformations corresponding to close \textit{versus} far distances. Note that the views, which here are reduced to an image plane, are in fact 3-dimensional, with an explicit parameter of depth $P(z)$.    
    \label{fig:Fig_1a}
\end{figure}

\subsubsection{Uncertainty with respect to sensory evidence and epistemic drives}\label{subsec:uncertainty}

In the previous section, we considered the relation of the FoC to affective drives, but epistemic drives related to uncertainty with respect to sensory evidence can be embedded in the process using similar principles to define the cost function described in section \ref{cost-function} implementing the drive mechanism.

The FoC frames information by accessing projectively an unconscious world model, which contains prior beliefs updated based on sensory evidence. For perception, the FoC has to be localized and oriented with respect to the agent's body in ambient space, so that its field of view encompasses regions of space that can be sampled based on sensory processing. For imaginary projections, this constraint is released, and the FoC can explore its world model beyond regions of space that can be sampled based on sensory processing. In the latter case, the FoC has to rely entirely on prior beliefs. 

Thus, uncertainty with respect to sensory evidence can be taken into account. In our approach, FE will tend to decrease for actual or anticipated states of the FoC associated with lower amounts of sensory uncertainty than the current one. Such states of the FoC will acquire as a result an intrinsic interest, competing with other factors determining the value of FE. We integrated such uncertainty in our model the following way. For a state of the FoC that samples regions of space compatible with sensory evidence acquisition, we defined uncertainty as a function ranging from $0$ to $1$, increasing from the center of the field of view towards the back of the agent, and from proximal locations towards more distant locations in space (we are more uncertain about what we cannot perceive well). We leveraged psychophysical functions derived from vision science for the effect of excentricity on uncertainty (building upon the decreasing resolution of the visual field from to fovea to the periphery)\cite{loschky2005certainty}. Of course, the sampling of space in a multimodal context is richer and more complex (think about source localization from auditory processing), but we restricted ourselves to simple vision-inspired models concerning the uncertainty at hand for this report. 

This lead us to consider the uncertainty $\sigma$ of an object or an agent. $\sigma_e$ is the uncertainty a subject has towards an entity $e$ relative to its field of view. The more remote and off-centered $e$ is, the higher the associated uncertainty $\sigma_e$. Figure \ref{fig:uncertainty} shows the behaviour of $\sigma$ in a 2-dimensional plane from above the subject. The metric is computed following \cite{loschky2005certainty}, as
\begin{equation}
    \sigma_e = 1 - \rho_e
\end{equation}
$\rho_e$ is the certainty, as 
\begin{equation}
    \rho_e = \frac{2\zeta}{\omega}
\end{equation}
where $\omega$ is the distance between the subject $s$ and $e$. The acuity, $\zeta$, is defined as,
\begin{equation}
    \zeta = e^{-\kappa\epsilon}
\end{equation}
with $\epsilon$ as the eccentricity metric and $\kappa$ a constant that represents the acuity of the subject's vision.\\

The uncertainty depends on $s$ through $\psi_s$ which we shall denote as $\sigma_{s, \psi_s}$.\\  

\begin{figure}
    \centering
    \includegraphics[width=0.6\textwidth]{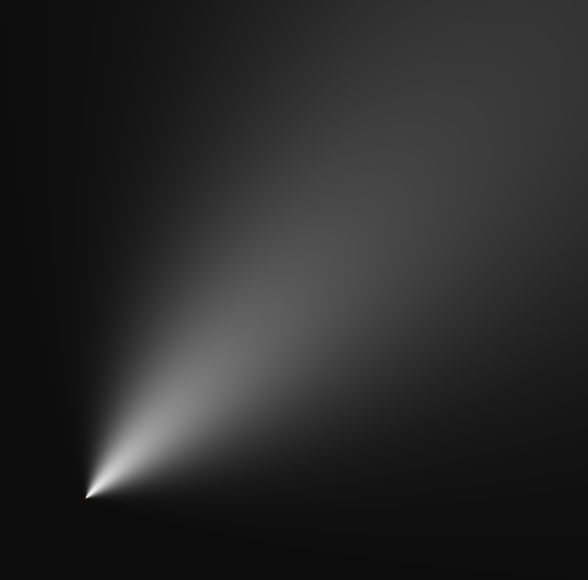}
    \caption{\textbf{Illustration of $\sigma$}}
    \medskip
    \small
    \raggedright
    Parameter $\sigma$ (uncertainty with respect to sensory evidence) as seen from above. The base of the white beam is the point of view of the subject. The whiter a point is, the lower its $\sigma$.
    \label{fig:uncertainty}
\end{figure}

\subsection{Perceived value and basic cost function}\label{cost-function}

We introduce the perceived value, $\mu_e$ of an entity by a subject, $s$, which plays a pivotal role in its drives. It is one innovation of our model, as it is the quantity that connects perspective taking and action. For an entity $e\neq s$, $\mu$ is a weighted mean of the intrinsic preference associated with the entity and a neutral preference associated with ambient space, weighted by the readjusted perceived volume $v_p$ of the entity in the subject's projective chart,

\begin{equation}
\mu= p\gamma\frac{v_p^{1/4}}{v_{tot}^{1/4}}+ q_n(1-\gamma\frac{v_p^{1/4}}{v_{tot}^{1/4}})
\end{equation}

with $\gamma\in [0,1]$ that captures an attentional focalisation that is centered with respect to the body of the agent and its orientation, and $v_{tot}$ the total volume of the bounded field of view of the subject in its projective chart. The adjustment of the volume with a power $\frac{1}{4}$ is introduced to be in agreement with the law of inverse distance found experimentally (see above), since the ratio between the perceived volume and the true volume, $v$, varies as (see appendix \ref{projective-results} for further details),

\begin{equation}
\frac{v_p}{v}\sim \frac{1}{(cz)^4}
\end{equation}

Importantly, this property of the apparent volume directly relates to the subjective perspective of the subject and only indirectly to objective properties of ambient space, such as Euclidean distance.

The perceived value is a quantitative estimation of the impact an entity has in the subjects perspective. When choosing how to move, the aim of the subject is to maximize positive outcomes, which correspond to high values of perceived value and low uncertainty about it. In order to achieve this aim, the subject needs to quantify the extent to which its state departs from the ideal configuration of high perceived value and low uncertainty, $(\mu_0,\sigma_0)$. 

We do so by embedding these parameters into a space of probability distributions over perceived values, i.e. probability distributions over $[0,1]$, and to quantify the ``distance" of $(\mu,\sigma)$ from $(\mu_0,\sigma_0)$ using the relative entropy also called Kullback-Leibler divergence. $\mu$ and $\sigma$ parameterise a probability law on $[0,1]$ centered on $\mu$ and of dispersion $\sigma$, which we shall note as $Q(\lambda|\mu, \sigma)$ with $\lambda\in [0,1]$. We chose to consider that $Q(.|\mu,\sigma)$ is a truncated Gaussian of mean $\mu$ and variance $\sigma$. The law attributed to the ideal configuration will simply be denoted as $P$, and for simplicity, we take it to be the same for every agent. It is important to emphasize that each actual or imagined move changes the subject's projective chart, i.e. FoC, which directly influences the perceived value and uncertainty on it, $\mu$ and $\sigma$.

For each move $m\in N$, $\DKL(Q\left(.|\mu_{s,\psi_s(m),q_s}, \sigma_{s,\psi_s(m)}\right)\Vert P)$ is the building block of the cost function for the choice of moves (see Figure \ref{fig:Fig_1b}, and appendix \ref{def-perceived-value-uncertainty} for further details).  

As we shall see below, when there are several entities, and the subject simulate and take into account others' perspectives, a mean of the different Kullback-Leibler divergence can be taken. 

\begin{figure}
    \centering
    \includegraphics[width=1\textwidth]{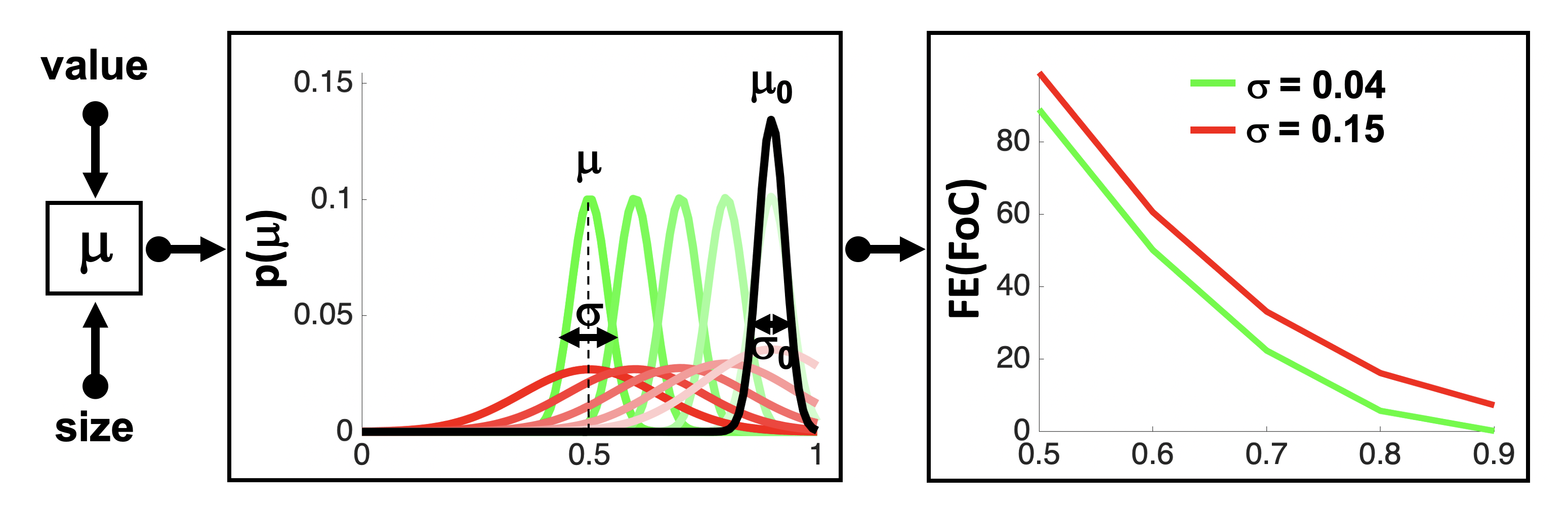}
    \caption{\textbf{Perceived value and cost function}}
    \medskip
    \small
    \raggedright
    Simplified presentation of the overall appraisal processing pipeline (see section \ref{mat-and-meth} or Appendix \ref{model-tech-details} for details). The projective size parameter and prior on the intrinsic value of an object are integrated into parameter $\mu$. \textit{Left chart.} The value of $\mu$ is used to define the maximum of a probability density $p(\mu)$, representing the actual or expected state of the agent, with dispersion proportional to the parameter $\sigma$, which quantifies uncertainty with respect to sensory evidence. Several examples are shown with red and green traces. Likewise, the preferred or ideal state is also defined as a probability density (black trace) with maximum at $\mu_0$ and dispersion $\sigma_0$. \textit{Right chart.} FE(FoC), the FE associated with a given FoC, is computed by comparing the agent's state density to the preferred prior probability density, using Kullback-Leibler Divergence. The closer the current state to the preferred one, the lower the FE. Importantly, when the preferred prior density imposes low uncertainty (small $\sigma_0$), for the same value of $\mu$, state densities with larger uncertainty (red traces) tend to be associated with higher FE than those with lower uncertainty (green traces). In other words, minimizing free energy also favors choosing actions minimizing uncertainty, which entails a form of intrinsic curiosity in the agent.    
    \label{fig:Fig_1b}
\end{figure}

\subsection{Algorithm: social inference and action selection steps}\label{intro:foc-tom}\label{model:perceive-value-cost}\label{model:action-optimization}

In the PCM, an agent has an internal model of its own beliefs and preferences, i.e. expectations. This model is used to quantify FE as a function of perspective taking and action based on the FoC. One contribution of this report is to demonstrate how the PCM principles explained above can be naturally generalised to social-affective perspective taking as a method for ToM in embodied agents. Different components of ToM have to be considered and related to the model: 
\begin{enumerate}
    \item the ability to infer preferences of others about entities from their spatial and emotional behavior (see section \ref{model:inference-step}); 
    \item the ability to predict other agents' behaviours through simulation of their FoC;
    \item factors of influence of other agents on the subject and its behaviors related to ToM (see section \ref{model:action-optimization}). 
\end{enumerate}

Internal models of beliefs and preferences attributed to other agents, including of other agents' beliefs about the beliefs and preferences of other agents, as well as the degree to which they are expected to perform ToM and to influence the agent, can be integrated in the agent's belief system. The agent can then use the same mechanisms it uses for itself recursively, in order to make inferences about others' states and predict their actions, by attributing perspectives, i.e. states of the FoC, to others, and using the corresponding internal models of (true or false) beliefs attributed to them to simulate how the others would appraise their environment based on their FoC. These principles relate to simulation theories of empathy, social perspective taking, affective learning, and emotion regulation, as reviewed above. In other words, a PCM agent will assume that other agents are equipped with the same general algorithm (or mind) as themselves, but with potentially different prior beliefs and preferences, and of course, different perspectives. The beliefs and preferences attributed to others can be updated via social feedback, by combining contextual, affective and spatial sensory evidence, e.g. another agent's emotion expression and orientation in relation to an object.

\subsubsection{Inference and update of preference step}\label{model:inference-step}

In the inference step of the active inference process, the subject infers an agent's internal preference about a given entity through the integration of the emotions (valence) that agent expresses and the spatial relationships between the subject, the agent and the entity. For this report, emotion expression and the inference of preferences from expressed emotions was implemented in simple and deterministic manner, in order to reduce complexity and computational cost. Emotion expression is a simple function of perceived value (see appendix \ref{emo-express} for technical details). Thus, as a first approximation, we do not assume that the expression of emotions by the subject is itself part of its optimisation process (a subject will always express an emotion that is consistent with its state of appraisal). The inference of preferences in another agent is a function of the other agent's expressed emotional valence, previous preferences attributed to the agent, and of the relative spatial uncertainty in terms of access to sensory evidence existing between the perspective of the subject on the agent and that of the agent toward a definite entity. See Figure \ref{fig:Fig_2a} for an illustration of the process (see appendix \ref{pcm:inverse-inference}, \ref{action-decision} for technical details). 
To account for the influence other agents have on the update of preferences of a subject, we consider a matrix $(I^p_{sab}\in [0,1], a,b\in A)$: it takes into account the influence that the subject believes another agent $b$ has on agent $a$ (which could be the subject itself) during the inference of preferences. The true influence that agent $a$ has on $s$ is given by the vectors $(I^p_{ssa},a\in A)$. The collection of all influence matrices over the subject $s\in A$ is a 3 dimensional tensor, which, in this implementation, is not updated during active inference, but fixed once and for all (see also appendix \ref{tom-factors}).

This tensor stem from the following psychological considerations. A subject's own preferences can be modified according to the preferences it attributes to others. For instance, a person can start liking something initially indifferent or disliked because another person, e.g. an admired or respected figure, expresses such liking. In other words, a person can embrace, or be inspired by the preferences of somebody else. Since a subject's preferences influence its drive, this influence of others' preferences on its own preferences will also impact its action. We will call this tensor the tensor of influence on preferences.

\begin{figure}
    \centering
    \includegraphics[width=0.8\textwidth]{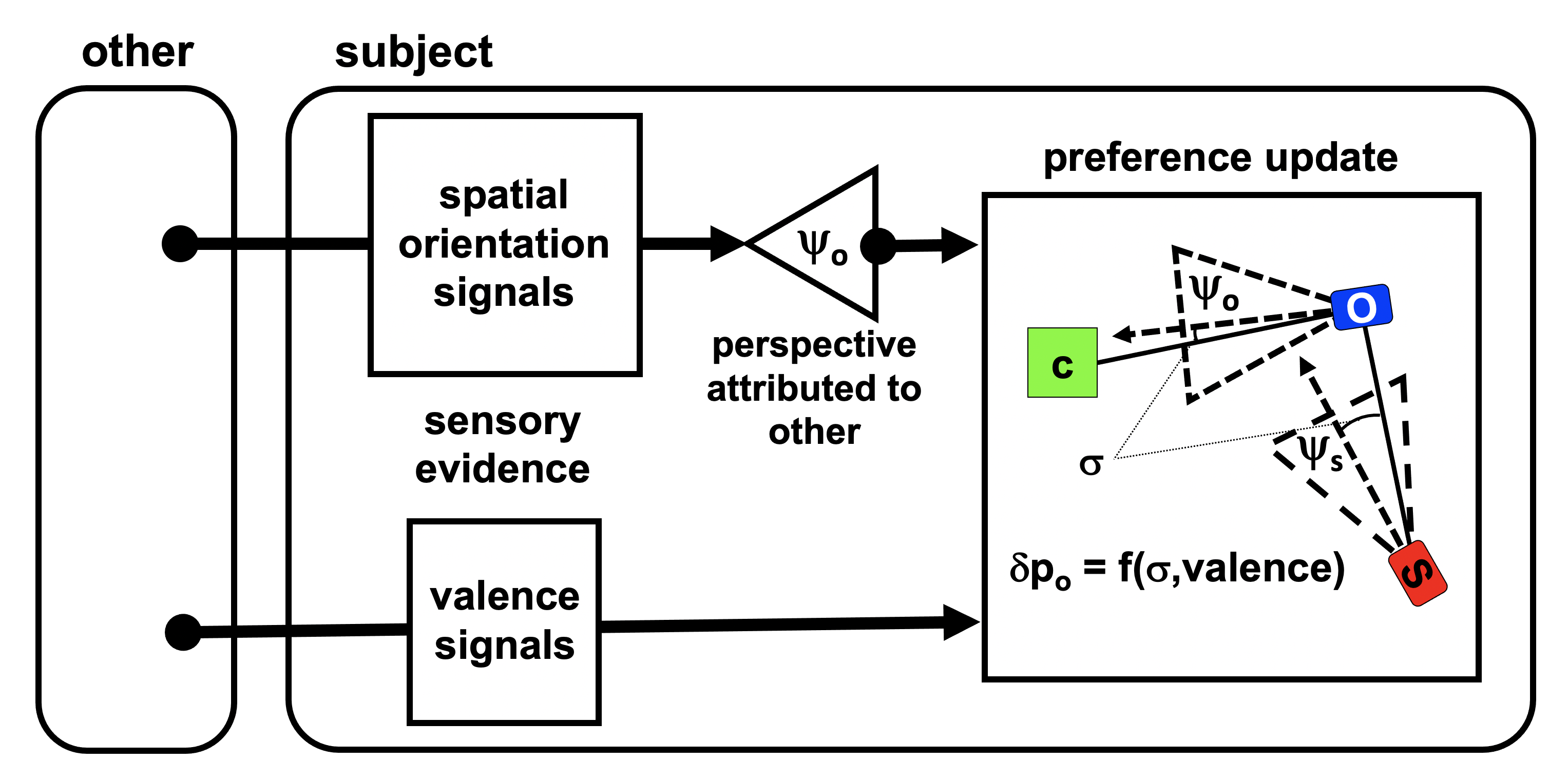}
    \caption{\textbf{Illustration of the preference update process}}
    \medskip
    \small
    \raggedright
     Prior beliefs about the preferences of another agent are updated by combining spatial orientation and affective valence signals as sensory evidence. Parameters of the subject's (S) current projective transformation $\psi_s$ are compared to the inferred parameters of the projective transformation of the other (O) $\psi_o$ to estimate an uncertainty $\sigma$ combining the uncertainty $\sigma_sa$ of the subject with respect to the other agent, and the uncertainty $\sigma_ae$ regarding the spatial relationships between the other agent and its direction of aiming towards an entity $e$ (here object $c$), which weighs in the update.        
    \label{fig:Fig_2a}
\end{figure}

Of note, this mechanism of update of preferences attributed to others implements a basic form of error correction regarding predictions of others' behaviours. For instance, if a subject believed that another agent considered an entity as neutral, but then observes the other agent expressing a positive emotion while being oriented toward the entity, it will progressively attribute to the other agent a positive preference for the entity.  

\subsubsection{Action selection steps}
\label{model:action-optimization}

The action selection step of the active inference process includes two components: the move of the subject and the expression of its emotion. The space of actions is the space of moves a subject can make times the possible emotions it can express, emotions taking values in $[0,1]$ (see appendix \ref{emo-express}). For both components, the organisation of the internal world plays a crucial role for action selection. 

As we shall see, the total FE associated with multiple perspectives taken through the FoC can then be quantified by integrating not only the FE associated with the actual or anticipated actions of an agent, as an isolated system, but also the FE that the agent expects to be associated with others' FoCs, as a function of actions and world contingencies. In other words, the action of the subject may be driven by how it expects others to act and react, including by taking into account what the subject believes about what other agents believe about it and each other, and so on (up to the limit of its own computational power). 

An agent thus can take the opinion of other agents into account in the planning of its own actions. It expects the other agents to infer and express emotions in the same way as it does. In order to simplify presentation and simulations, here we assume that the moves of the subject only take into account other agents that can at most perform ToM of order 1, that is subjects can at most perform ToM of order 2. Thus, in this simplified model, according to the subject, the other agents do not develop a strategy for future actions based on ToM, i.e. another agent can simulate the state of another agent at a given time, but do not use this simulated state to predict its future behaviour. 

Similarly to the tensor of influence on preferences, which was introduced above in the section \ref{model:inference-step} on inference and update of preferences, we introduce an influence matrix on actions (or moves), $(I^m_{sab}\in [0,1], a,b\in A)$, for a subject $s$, which takes into account the influence that the subject believes another agent $b$ has on agent $a$ (which could be the subject itself), during the choice of its moves, irrespective of the agent's $a$ own preferences. This influence tensor is not updated during active inference but fixed at the beginning of the simulations. The psychological motivation for considering such tensor is that a subject may not change its own preferences based on different preferences expressed by others, but nevertheless take them into account in the way it programs its actions. For instance, a person may choose not to act according to personal preferences because this may displease another person, e.g. a figure of authority. This tensor, which we will call the tensor of influences on actions, is independent from the tensor of influences on preferences. 

Let us denote configurations of entities in the Euclidean space as $(X_e\subseteq \R^3,e\in E)$.
An agent $a$ with first order ToM chooses its moves by minimizing the following cost function:


\begin{equation}
C(m,p_{a..}, I^m_{aa.},X)=\underset{b\in A}{\sum}\underset{\substack{e\in E\\e\neq b}}{\sum}  \omega_{b,e}\DKL(Q(.|\mu_{a,\psi_b(m),p_{a,b,.}}(X_{e,m}), \sigma_{b,\psi_b(m)}(X_{e,m})\Vert P),
\end{equation}

which is defined for moves $m$ in a set of moves $N$ and where $X_e(m)$ is the configuration of the entity once the move of the agent $a$ is accomplished; the only projective chart that changes when an agent moves is its own, in other words only $\psi_a(m)$ really depends on $m$, and similarly the only configuration that depends on the move of the agent is the agents own configuration $X_a(m)$. Here

\begin{equation}
\omega_{b,e}=I^m_{a,a,b}\alpha_{b,e}
\end{equation}

and $\alpha_{b,e}=\frac{1}{\vert E \vert-1}$.

Let us breakdown what the cost $C$ is: up to a multiplicative constant, it is a weighted average over all agents of how pleasant and certain a situation appears in the perspective of the agents, i.e. a mean over all agents $b\in A$ of,
\begin{equation}
\sum_{e\in E} \alpha_e \DKL(Q(.|\mu_{b,\psi_b,p_{a,b,.}}(X_e), \sigma_{b,\psi_b}(X_e))\Vert P) 
\end{equation}

The subject does not have access to the true $p_{a,..}$ nor does it have access to $I_{aa.}^p, I_{aa.}^m$, instead, as a first approximation, it estimates it by $\tilde{p}$, $\tilde{I}^p$, $\tilde{I}^m$ as follows: 
$\tilde{p}_{a..}= p_{s..}$, $\tilde{I}_{aa.}^p= I^p_{sa.}$, $\tilde{I}^m_{aa.}=I^m_{sa.} $.

We consider subjects $s\in A$ that can plan $n$ cycles of perception and action ahead in order to decide what move to make in the first cycle. To do so, a subject imagine the way other agents would react to its moves at each time step, based on preference and influence tensors, $\tilde{p}_{a..}, \tilde{J}_{aa.}^p, \tilde{I}^m_{aa.}$, as described above, which are attributed at step $0$. From this prediction, it weights each sequence of its moves $m=(m^{k}_s, k\in [1,n])$ accroding to the following energy,

\begin{equation}\label{PCM:total-free-energy}\tag{FE}
\FE(m)=\underset{k\in [1,n]}{\sum}a_{k} C_s(m^k,p^k_s,X^k)
\end{equation}

The best first move is the first move of the best sequence of moves $m^*$ (see Appendix \ref{action-decision}).

The number of steps the subject can predict, $n$, is what we call, for short, the depth of processing, and denote as $dp$. (Of note, this concept does not directly relate to the notion of "depth of processing" from the classical level of processing model in psychology and its derivatives \cite{craik1972levels}, which is the subject of a broad specialized literature). 

Figure \ref{fig:Fig_2b} summarizes the general algorithmic principles of the model (see appendix \ref{section:algorithm-pc} for technical details with pseudo-code), and illustrates their expected effect in our robotic context (see also appendix \ref{appendix:cozmos}).

\begin{figure}
    \centering
    \includegraphics[width=1\textwidth]{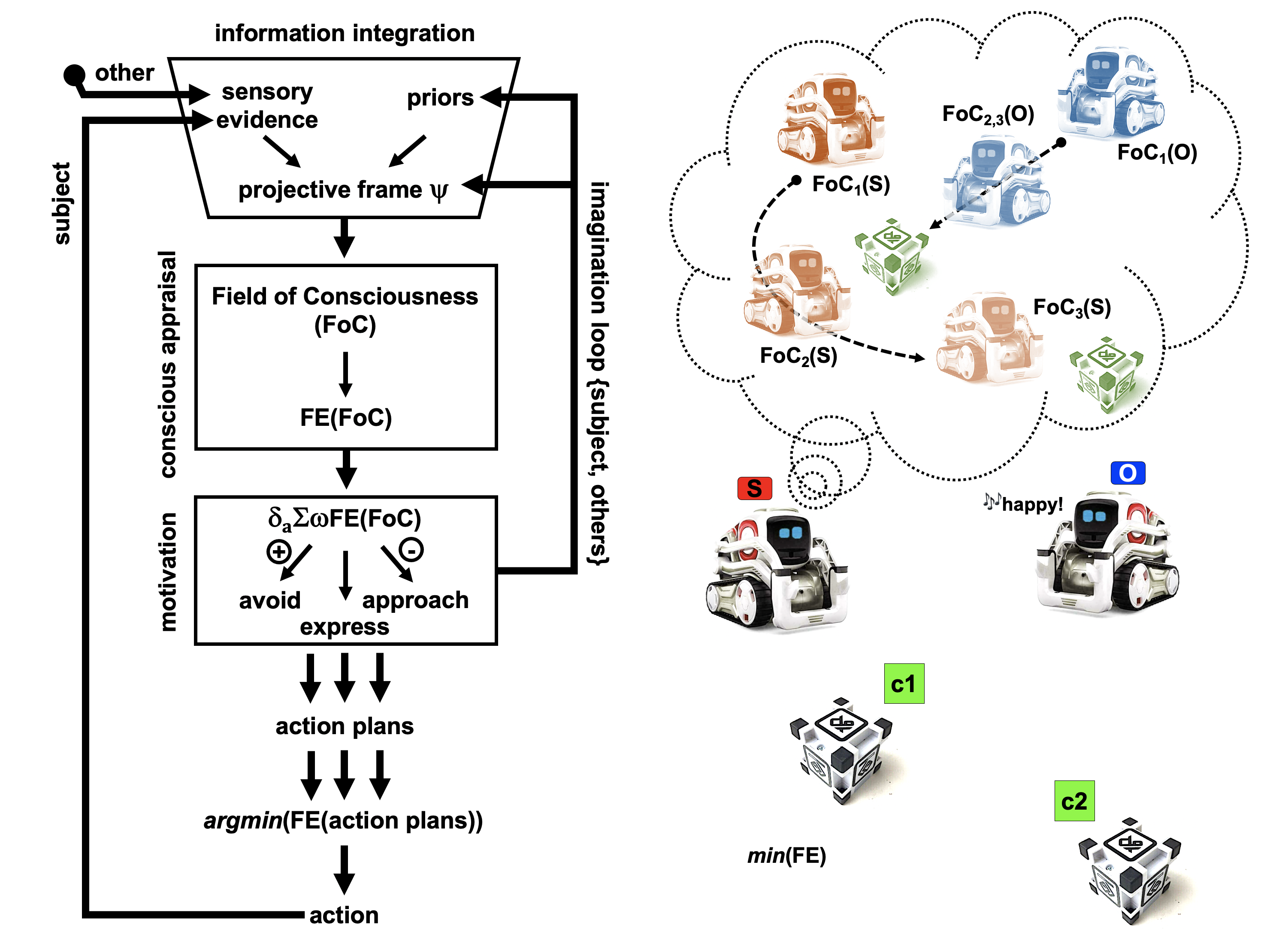}
    \caption{\textbf{Illustrations of overall algorithm}}
    \medskip
    \small
    \raggedright
    \textbf{Left-tier.} Information is integrated by combining prior beliefs and sensory evidence, and framed through a projective transformation $\psi$, for access and appraisal in the FoC. FoCs can be attributed to the subject or other agents. A value of free energy FE(FoC) is associated with the current FoC and averaged with previous values as part of a cumulative process of optimisation aimed at selecting alternate action plans through an internal imagination loop. The contribution of the current FE(FoC) to the average is weighted by the importance $I^m$ attributed by the agent to itself or others, depending on which agent the simulated FoC represents. In other words, agents use, in a variable manner, projective social-affective perspective taking to integrate their updated beliefs about others’ preferences and anticipated behaviours. The current change $\delta_a$ of the average FE with respect to anticipated actions and action consequences (in time and space) motivates behavioural tendencies of approach (if $<0$) or avoidance (if $>0$), and related emotion expression. Imagined action plans are compared based on their overall FE in order to select the next actual action using the plan with minimal FE. Action outcomes from self and others feedback into the process. \textbf{Right-tier.} Illustration of the principles of the algorithm in the robotic context. Robot S (the subject) likes cube $c2$, and robot O (the other agent) cube $c1$. Robot O expresses a positive emotion towards $c1$. Robot S updates its beliefs about the preference of robot O towards $c1$, and starts taking perspective to predict the behaviour of robot O and plan its own actions, by computing FoCs and associated FE(FoC) within its imagination loop. It starts with $FoC_1(S)$ and $FoC_1(O)$, representing respectively robot S and O current perspectives. It iterates to select $FoC_2(S)$ and $FoC_2(O)$, and then $FoC_3(S)$ and $FoC_3(O)$ (note that on the figure, $FoC_2(O)$ and $FoC_3(O)$ are conflated into $FoC_{2,3}(O)$ due to space limitations). Altogether, these series of expected FoCs are associated with the smallest overall FE $min(FE)$, and define the best action plan for robot S, in which robot S goes around $c1$ to avoid running into robot O, and approach $c2$.       
    \label{fig:Fig_2b}
\end{figure}

\subsection{Summary of parameters}

Table \ref{table:1} summarizes the parameters and functions integrated in the model.

\begin{table}
\centering
\begin{tabular}{|l|c|}
\hline 
$p_s$ & Preference tensor of a subject $s$\\
\hline 
$q_s$ & Preference vector of a subject $s$\\
\hline
$q_n$ & Neutral preference\\
\hline
$e_+$,$e_{-}$, $v$& Positive, negative emotion and valence\\
\hline
$I_s^p, I_s^m$ & Preference and action Influence tensor for the subject $s$\\
\hline 
$\psi$ & Projective chart with respect to the Euclidean frame\\
\hline
$\frac{1}{c}$ & Depth of Field of the Projective Chart of a subject.\\
\hline 
$\gamma$  & Attentional focalisation\\
\hline 
  $\mu$ & Perceived preference\\
\hline
$\zeta$ & Acuity\\
 \hline
  $\sigma$ &  Uncertainty on perceived preference\\
    \hline
    $dp$ &  Depth of processing\\
\hline 
 \end{tabular}
\caption{\textbf{Summary of parameters and functions}}
\label{table:1}
\end{table}

\pagebreak

\section{Simulations of adaptive and maladaptive behaviours}\label{simusection}

\bigskip

Our approach followed a scientific rationale, from which we wanted to assess whether theoretical principles, independently motivated, could be the basis of a general generative model that would adequately operate, when implemented in artificial agents, in multiple relevant contexts and in a robust manner. Our aim with the following simulations was to demonstrate how the manipulation of relevant model parameters, based on classical, established psychological rationale and hypotheses, could generate the emergence of a variety of complex, well-document, and relevant adaptive and maladaptive behaviours, predicted by those rationale and hypotheses. It was not to try to predict specific datasets based on optimizations of the model and analyse their predictive power.

We manipulated the depth of imaginary processing, prior beliefs and preferences of the subjects about others and objects, its beliefs about others’ beliefs and preferences, and the implication of social-affective perspective taking, via the preference, preference influence and action influence tensors (see Secions \ref{method:pref-tensor}, \ref{tom-factors}). 

We defined the simulations within different situational contexts (similar to experimental or clinical assessment setups), aimed at demonstrating target model properties, namely: how deeper imaginary perspective taking may help agents to overcome unpleasant obstacles towards a better end state; how the degree of social-affective perspective taking may lead agents to enter joint attention behaviours, and broaden their interest; how negative prior beliefs attributed to others about the subject may lead to social avoidance; and how the understanding of false beliefs in other agents may be exploited by a subject towards a better end state. 

The results focus on the behaviours and states of the subject more than on those of the other agent, both to limit the length of this report, and because the subject is the agent under evaluation, as in an experimental or clinical setting. Results are presented by both analyzing the virtual simulations corresponding to the mind of the agents, and by showing snapshots of replay using the physical robots for more concrete demonstration (see also videos online). We wanted to study the consequences of the parametric manipulations on FE, valence, approach-avoidance and joint attention behaviours. We characterized the overt behaviour of the agents (approach-avoidance, and joint attention) using a set of metrics, which approximated metrics that could be used to characterize the behaviour of real human participants in the context of experimental protocols, as a function of time (see Appendix \ref{section:behav-met-analysis}). 

\subsection{General setting of simulations}

All simulations took place within the same general setting. There was always two agents. The environment comprised a floor plane and two objects (cubes). One agent, the subject, served as an agent of interest for evaluation. We refer to the two agents as the "subject" and the "other agent". The parameters and functions in Table \ref{table:2} were fixed for all simulations and agents. \\

\begin{table}
\centering
\begin{tabular}{|l|c|}
\hline
$q_n$ & 0.5\\
\hline
$e^0_+$,$e^0_{-}$, $v^0$& Initialized at $(0,0,0)$\\
\hline
$\frac{1}{c}$ & 10\\
\hline 
$\gamma$  & The function is the same in every setting\\
\hline 
 $\zeta$ & with $\kappa=20$\\
 \hline
\end{tabular}
\caption{\textbf{General parameters and fixed functions for the simulations}}
\label{table:2}
\end{table}

Let us remark that the influence tensors are used here in a limited manner, as many values in these tensors are set to zero, in order to simplify simulations and their interpretation in the situational contexts we consider.\\

Let us introduce two more parameters:\\

\begin{enumerate}
    \item $\update$ is a parameter that can be True or False and that represents whether or not the subject predicts others' behaviour.
    \item $\scb$ is a parameter that can be True or False and that represents whether or not the subject features self-centered beliefs (see Section \ref{method:sit3}).
\end{enumerate}

By convention, the subject is indexed as agent $1$, the other agent as agent $2$, and the two objects (cubes) as objects $1$ and $2$; in the parameter table they are enumerated (in tuples) in the following order: agent 1, agent 2, object 1, object 2. Furthermore, let us recall the convention that two vectors $u,v\in \R^n$ are equivalent if there is a multiplicative constant $\lambda\in \R^{*}$ such that $u=\lambda v$.

The simulations were performed over $n = 70$ iterations of actions (each including embedded iterations according to the depth of imaginary processing parameter). For each situation, we repeated the simulations $N = 100$ times, with different initial positions and orientations of the agents and cubes across trials (under some constraints related to the ``experimental" design of the situations), in order to characterize the robustness of the behavioural outcomes. All conditions within a given situation and for a given trial shared the same initial positions and orientations, in order to render them comparable on a trial-to-trial basis.    

\subsection{Situation 1: imagining a better place to overcome an obstacle}

\subsubsection{Rationale and design}

The aim of this situation was to assess how projective imagination can contribute to regulate emotional states and the motivation of approach-avoidance behaviour, in order to cope with a situation of approach-avoidance conflict. 

The agents did not perform social-affective perspective taking. The subject was placed on the floor plane in the presence of two cubes, one associated with positive prior preferences (liked-cube 1), and one associated with negative prior preferences (disliked-cube 2). Initial conditions were defined so that the subject started close and facing the disliked cube, while the liked-cube was located farther away behind the disliked cube, which acted as an obstacle, generating a situation of approach-avoidance conflict. The other agent was neutral for the subject. The other agent disliked the two cubes and the subject was neutral for it.

We contrasted three sets of parameterisations of the model:

\begin{itemize}
    \item \textit{Agents with low projective imagination}. The subject had a low depth of imaginary processing ($dp = 1$), i.e. the number of iterations of imaginary projections before action was small. We hypothesized that agents with limited imagination would demonstrate reflex-like behaviours of avoidance in order to minimise their free energy by turning their back on the disliked-cube.
    \item \textit{Agents with intermediary projective imagination}. The subject had an intermediary depth of imaginary processing ($dp = 2$). We hypothesized that agents with intermediary imagination would rarely overcome the obstacle and most of the time would avoid it by running away.
    \item \textit{Agents with higher depth of projective imagination}. The subject had a higher depth of imaginary processing ($dp = 3$). We hypothesized that the subject, which was capable of deeper exploration of possible actions than in the previous case, would appear more resilient, and get around the obstacle, i.e. the disliked-cube, to approach the liked one, and thus further minimise its free energy.
\end{itemize}

Relevant simulation parameters are shown in table \ref{table:3}. 

\begin{table}
\centering

\begin{tabular}{|l|c|}
\hline 
$p_1$ & For agent $1$, $p_{11.}=[0.5,0.5,0.75,0.1]$, $p_{12.}$ is irrelevant. \\
\hline
$p_2$ & For agent $2$, $p_{21.}$ is irrelevant, $p_{22.}=[0.5,0.5,0.1,0.1]$\\
\hline
$I_s^p$ & $I_{saa}^p=0$ for all $a\in A$ such that $a\neq s$ and any $s\in A$ \\
\hline 
$I_s^m$ & $I_{saa}^m=0$ for all $a\in A$ such that $a\neq s$ and any $s\in A$ \\
\hline
$dp$ &  Depth of processing $1$, $2$ or $3$\\
\hline
$\update$ & False \\
\hline 
$\scb$ & False \\
\hline
\end{tabular}
\caption{\textbf{Specific parameters for situation $1$}}
\label{table:3}
\end{table}

\subsubsection{Results}

Figure \ref{fig:Fig_3} shows the results of the simulations. Within the end state, as the depth of processing ($dp$) increased from $1 to 3$, free energy ($FE$) decreased ($t(298) = -23.19; p = 1.06\cdot10^{-68}$), approach towards the preferred cube ($c1)$ increased, both in terms of motion ($t(298) = -12.87; p = 1.89\cdot10^{-30}$) and look at ($t(298) = 26.04; p = 1.89\cdot10^{-79}$), and valence increased ($t(298) = 22.87; p = 1.48\cdot10^{-67}$). The joint attention parameter was not relevant for this simulation, but is indicated in the figure. Note that for the depth of processing of $3$, free energy showed a transitory increase and valence decreased at the beginning of the simulation, expressing the fact that the subject had to overcome a period of disliking while getting around the cube ($c2$) it did not like, motivated by the promise of a better outcome further down the road (see end state).     

\begin{figure}
    \centering
    \includegraphics[width=1\textwidth]{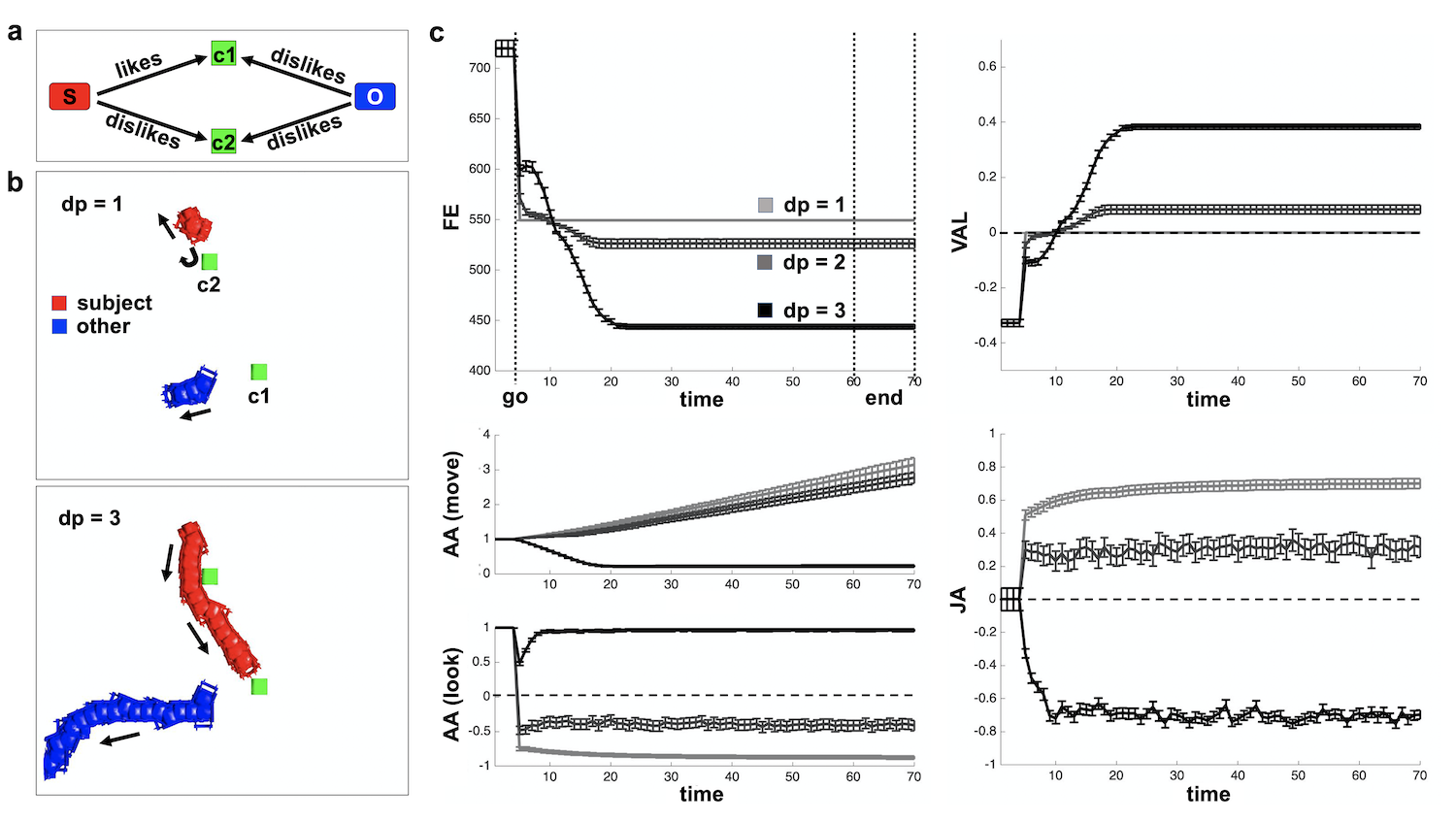}
    \caption{\textbf{Results of simulations for situation 1}}
    \medskip
    \small
    \raggedright
    \textbf{a.} Summary of initial configuration of preferences. S indicates the subject, and O the other agent. Arrows indicate whether an agent likes or dislikes a cube. The absence of arrows between items (cubes or agents) implies a neutral preference. \textbf{b.} Actions of the agents across iterations. Agents are represented across iterations to indicate their paths. Arrows indicate main directions of motion. \textit{Upper-tier.} Depth of processing $(dp) = 1$. \textit{Lower-tier.} Depth of processing $(dp) = 3$. \textbf{c.} Charts of dependent variables characterizing the subject as a function of time (iterations), for the three conditions of depth of processing. Average across trials. Error-bars are standard errors. \textit{Upper-left quadrant.} Free energy (FE). Agents were forced to stay in place for $3$ iterations, before they could 'go' (dashed vertical line). Window between iterations $60$ and $70$ (dashed vertical lines) indicates 'end' state period for statistics. \textit{Upper-right quadrant.} Valence (VAL). \textit{Lower-left quadrant.} Approach-avoidance metrics (AA), type 'move' ($0$ indicates that the subject is closest to cube 1), and type 'look' ($1$ indicates the subject is looking straight at cube 1). \textit{Lower-right quadrant.} Joint attention metrics (JA) (if the two agents are simultaneously looking straight at cube 1 $JA = 1$).       
    \label{fig:Fig_3}
\end{figure}

Figure \ref{fig:Fig_3r} shows snapshots of robot behaviours for a representative trial of situation $1$. 

\begin{figure}
    \centering
    \includegraphics[width=1\textwidth]{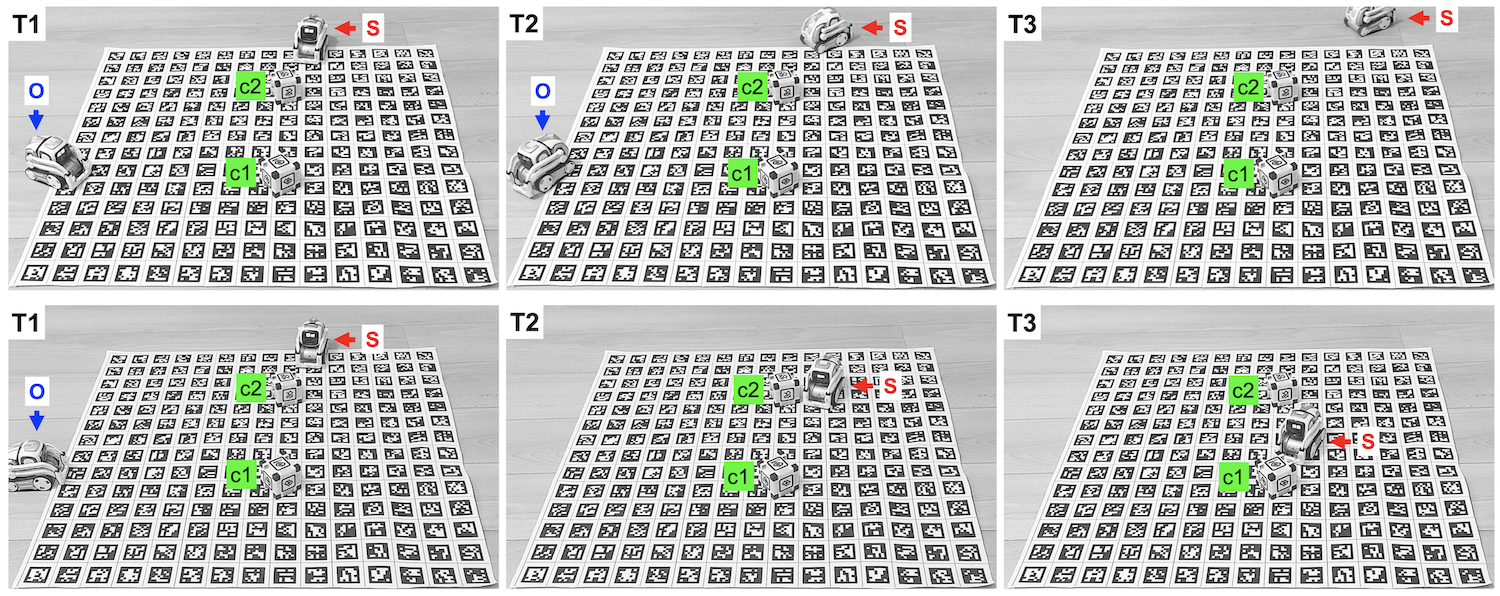}
    \caption{\textbf{Illustration of situation 1 with robots}}
    \medskip
    \small
    \raggedright
    S indicates the subject, and O the other agent. Snapshots are shown for three time points: T1, T2, T3 (note that the time points with the same label, e.g. T1, across conditions, do not necessarily correspond exactly to the same time instant in the simulations, and were chosen for their illustrative value). 
    \textbf{Upper-row.} Condition $dp=2$. Robot S cannot project itself far enough and prefer to fly away from the disliked cube $c2$ ($T2$, $T3$). \textbf{Lower-row.}. Condition $dp=3$. Robot S can project itself far enough, goes around the disliked cube $c2$ ($T2$) to approach and enjoy the liked cube $c1$ ($T3$).      
    \label{fig:Fig_3r}
\end{figure}

\subsection{Situation 2: social-affective perspective taking and joint attention behaviours}

\subsubsection{Rationale and design}

The aim of this situation was to assess the role of social-affective perspective taking, while being influenced by the preferences expressed by others, in the emergence of adaptive joint attention behaviours versus maladaptive restricted interests. 

Initial conditions were as follow. The subject had positive prior preferences about cube 2. The other agent had positive prior preferences about cube 1. 

We contrasted two sets of parameterisations of the model: 

\begin{enumerate}
    \item \textit{Agents without social-affective perspective taking}. None of the agents performed social-affective perspective taking. This condition aimed at simulating impairments typically associated with ASD. We hypothesized that the subject would not develop joint attention behaviours with the other agent, and manifest restricted interest by remaining focused on its initial cube of interest.
    \item \textit{Agents with social-affective perspective taking}. The subject was highly receptive to the other agent interest (non-zero values in the preference influence tensor). We hypothesized a shift of interest in the subject, leading the subject to approach the other agent and its preferred cube, resulting in the emergence of joint attention behaviours.
\end{enumerate}

Relevant simulation parameters are shown in table \ref{table:4}. \\

\begin{table}
\centering
\begin{tabular}{|l|c|}
\hline 
$p_1$ & For subject $0$, $p_{11.}=[0.5,0.5,0.3,0.9]$, $p_{12.}=[0.5,0.5,0.9,0.5]$. \\
\hline
$p_2$ & For agent $2$, $p_{21.}$ is irrelevant, $p_{22.}=[0.5,0.5,0.9,0.5]$\\
\hline 
$I^m$ & $I_{saa}=0$ for all $a\in A$ such that $a\neq s$ and any $s\in A$ \\
\hline
1) $I^p$ & condition 'no tom': $I_{saa}^p=0$ for all $a\in A$ such that $a\neq s$ and any $s\in A$.\\
\hline
2) $I^p$ & condition 'tom': $I_{11.}^p\sim [1,0.075] $, $ I_{12.}^p\sim [0,1]$, $I_{2..}$ is irrelevant.\\
\hline
$dp$ &  Depth of processing $3$\\
\hline
$\update$ & conditions: 'no tom' = False, 'tom' = True\\
\hline 
$\scb$ & False \\
\hline
\end{tabular}
\caption{\textbf{Specific parameters for situation $2$}}
\label{table:4}
\end{table}

\subsubsection{Results}

Figure \ref{fig:Fig_4} shows the results of the simulations. There was a significant difference at the end state, between free energy ($FE$) with \textit{versus} without ToM ($t(198) = 3.88; p = 0.0001$). Though a relatively small difference across means (see Figure \ref{fig:Fig_4}), this suggests that the use of ToM implies a certain cost on the optimisation of preferences and behaviours. Approach by the subject towards the cube preferred by the other agent ($c2)$ increased with ToM, both in terms of motion ($t(198) = -56.36; p = 6.10\cdot10^{-124}$) and look at ($t(198) = 64.04; p = 2.67\cdot10^{-134}$). There was a significant though small difference in valence, which was larger without than with ToM ($t(198) = -2.30; p = 0.02$). Joint attention increased to a maximum when performing ToM ($t(198) = 63.94; p = 3.62\cdot10^{-134}$). Note that, in the middle of the simulations, when the subject performed ToM, $FE$ showed a transitory increase, and look at, valence and joint attention a transitory decrease, expressing the fact that the subject had to overcome a period of disliking while experiencing a change of preferences and the corresponding need to change its action, resulting from the influence exerted by the other agent.   

\begin{figure}
    \centering
    \includegraphics[width=1\textwidth]{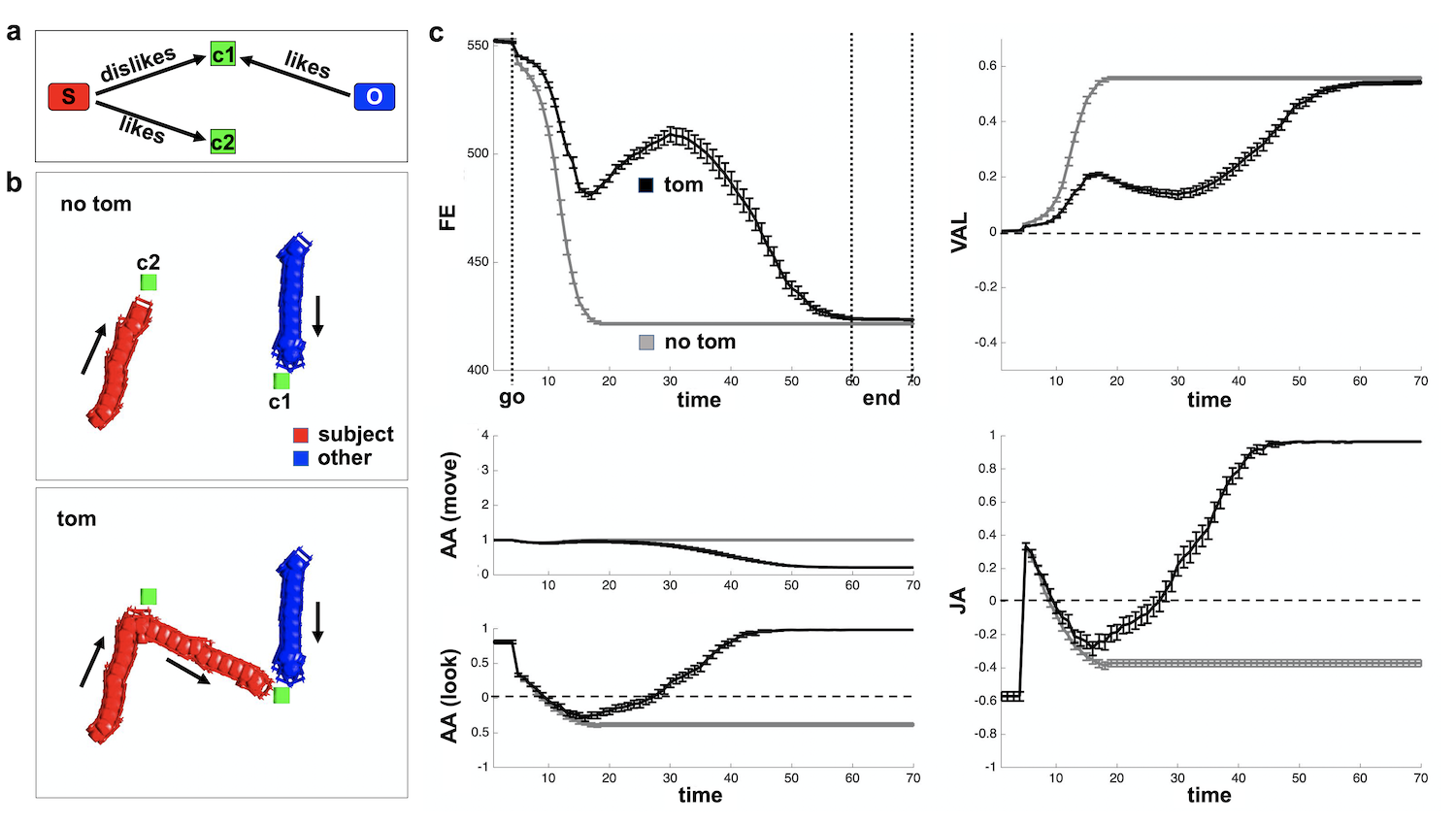}
    \caption{\textbf{Results of simulations for situation 2}}
    \medskip
    \small
    \raggedright
    \textbf{a.} Summary of initial configuration of preferences. S indicates the subject, and O the other agent. Arrows indicate whether an agent likes or dislikes a cube. The absence of arrows between items (cubes or agents) implies a neutral preference. \textbf{b.} Actions of the agents across iterations. Agents are represented across iterations to indicate their paths. Arrows indicate main directions of motion. \textit{Upper-tier.} Condition without ToM (no influence of the other agent on preferences). \textit{Lower-tier.} Condition with ToM (influence of the other agent on preferences). \textbf{c.} Charts of dependent variables characterizing the subject as a function of time (iterations), for the two conditions of ToM. Average across trials. Error-bars are standard errors. \textit{Upper-left quadrant.} Free energy (FE). \textit{Upper-right quadrant.} Valence (VAL). \textit{Lower-left quadrant.} Approach-avoidance metrics (AA), type 'move' ($0$ indicates that the subject is closest to cube 1), and type 'look' ($1$ indicates the subject is looking straight at cube 1). \textit{Lower-right quadrant.} Joint attention metrics (JA) (if the two agents are simultaneously looking straight at cube 1 $JA = 1$).        
    \label{fig:Fig_4}
\end{figure}

Figure \ref{fig:Fig_4r} shows snapshots of robot behaviours for a representative trial of situation $2$. 

\begin{figure}
    \centering
    \includegraphics[width=1\textwidth]{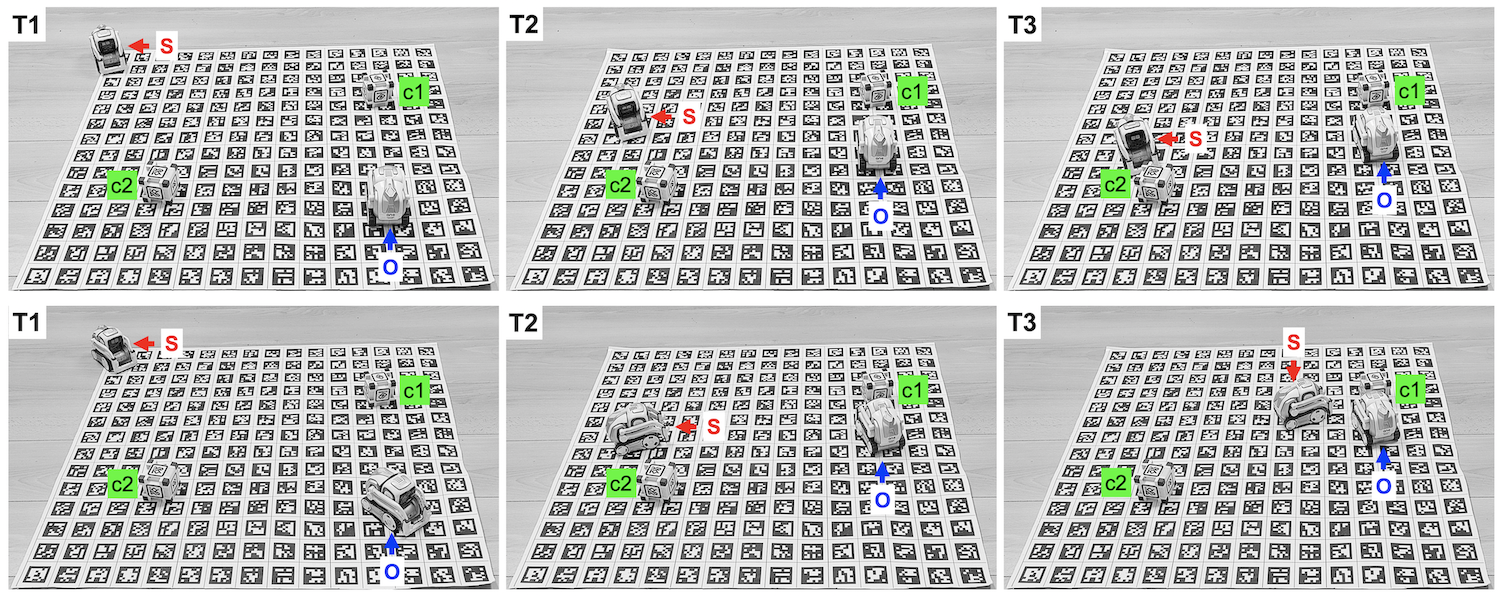}
    \caption{\textbf{Illustration of situation 2 with robots}}
    \medskip
    \small
    \raggedright
     S indicates the subject, and O the other agent. Snapshots are shown for three time points: T1, T2, T3 (note that the time points with the same label, e.g. T1, across conditions, do not necessarily correspond exactly to the same time instant in the simulations, and were chosen for their illustrative value). 
    \textbf{Upper-row.} Condition 'no tom'. Robot S did not perform ToM. It approached its preferred cube $c2$ ($T2$), and remained only interested in that cube ($T3$). \textbf{Lower-row.} Condition 'tom'. Robot S performed ToM. After approaching $c2$, it oriented towards the cube preferred by the other robot O $c1$ ($T2$), approached it, and showed joint attention behaviours with robot O towards $c1$ ($T3$).            
    \label{fig:Fig_4r}
\end{figure}

\subsection{Situation 3: attribution to others of negative prior about the self and maladaptive avoidance behaviours}\label{method:sit3}

\subsubsection{Rationale and design}

The aim of this situation was to assess the role of the attribution to others of negative prior preferences about the self in the emergence of maladaptive avoidance behaviours. We designed the situation as a toy model of SAD.

Initial conditions were as follow. The subject had positive prior preferences about cube 1. The other agent had positive prior preferences about cube 2. The other agent was neutral about the subject. Both agents performed social-affective perspective taking. The subject was influenced by the other agent at the level of the action influence tensor. 

We contrasted three sets of parameterisations of the model: 

\begin{enumerate}
    \item \textit{No attribution of negative priors about itself by the subject to the other agent}. The subject did not attribute negative prior preferences about itself to the other agent, i.e., was not defiant of the other agent (condition: 'no def'). We hypothesized that the subject would not avoid the other agent whenever it would get closer to the subject.  
    \item \textit{Attribution of negative prior about itself by the subject to the other agent without self-centered beliefs}. The subject attributed negative prior preferences about itself to the other agent, i.e., was defiant of the other agent beliefs about itself but did not believe that the other agent was constantly thinking about (imagine) the subject (condition: 'def'). We hypothesized that the subject would avoid the other agent whenever it would actually be in its field of view and get closer to the subject, but would be able to sneak towards its preferred cube through less direct paths than for the first parameterisation of the model.   
    \item \textit{Attribution of negative priors about itself by the subject to the other agent with self-centered beliefs}. The subject attributed negative prior preferences about itself to the other agent, and did believed that the other agent was constantly thinking about (imagine) the subject (condition: sad). This was implemented by making the subject, when performing social-affective perspective taking, always imagine that the other agent was aiming at the subject, i.e. in its imagination the subject attributed to the other agent a FoC that was always oriented towards it. This condition aimed at simulating processes and behaviours typically associated with SAD. We hypothesized that the subject would avoid the other agent whenever it would be around. 
   
\end{enumerate}

Relevant simulation parameters are shown in table \ref{table:5}. \\

\begin{table}
\centering
\begin{tabular}{|l|c|}
\hline 
$p_1$ & For agent $1$, $p_{11.}=[0.5,0.5,0.9,0.5]$, $p_{12.}=[0.15,0.6,0.5,0.9]$. \\
\hline
$p_2$ & For agent $2$, $p_{21.}=[0.5,0.5,0.9,0.5]$ is irrelevant, $p_{22.}=[0.5,0.6,0.5,0.9]$\\
\hline
1) $I^m$& 'no def' $I_{saa}=0$ for all $a\in A$ such that $a\neq s$ and any $s\in A$\\
\hline
2),3)$I^m$& 'def' and 'sad' $I_{11.}^m\sim[1,12.5]$, $I_{12.}$ irrelevant\\
\hline
$I^p$ & $I_{saa}=0$ for all $a\in A$ such that $a\neq s$ and any $s\in A$ \\
\hline
$dp$ &  Depth of processing $3$\\
\hline 
 $\update$ & True\\
\hline 
$\scb$ & In conditions 'no def' and 'def' is False, in condition 'sad' is True.\\
\hline
 \end{tabular}
\caption{\textbf{Specific parameters for situation $3$}}
\label{table:5}
\end{table}

\subsubsection{Results}

Figure \ref{fig:Fig_5} shows the results of the simulations. As the level of social anxiety increased from condition 'no def', to 'def' and 'sad' (see Section \ref{method:sit3} for interpretation), free energy ($FE$) increased ($t(298) = 7.39$ ; $p = 1.52\cdot  10^{-12}$), approach towards the preferred cube ($c1$) decreased, both in terms of motion ($t(298) = 23.69$; $p = 1.73\cdot 10^{-70}$) and look at ($t(298) = -24.96$; $p = 5.25\cdot 10^{-75}$), and valence decreased ($t(298) = -23.52$; $p = 7.21\cdot 10^{-70}$). The joint attention parameter was not relevant for this simulation, but is indicated in the figure. Note that for the condition 'def' (defiant without self-centered beliefs), free energy showed a transitory increase at the beginning, expressing the fact that the subject had to overcome a period of disliking, and choose a more indirect path to approach its preferred cube ($c1$) while avoiding the other agent, which the subject expected to approach its own preferred cube ($c2$), implying the other agent would get closer to the subject. 
\begin{figure}
    \centering
    \includegraphics[width=1\textwidth]{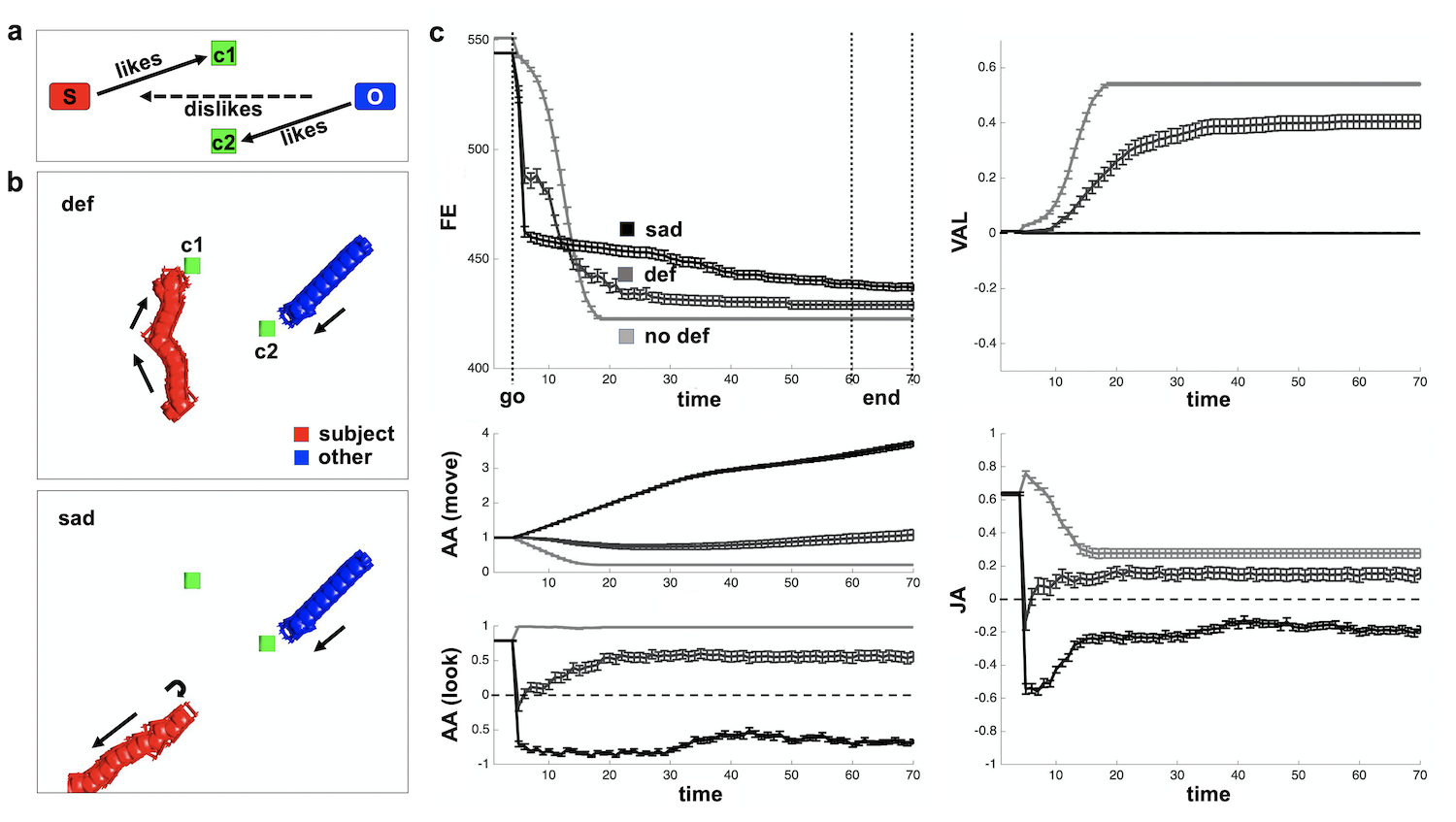}
    \caption{\textbf{Results of simulations for situation 3}}
    \medskip
    \small
    \raggedright
    \textbf{a.} Summary of initial configuration of preferences. S indicates the subject, and O the other agent. Arrows indicate whether an agent likes or dislikes a cube or an agent. The dashed arrow pointing from O to S indicates beliefs held by S about O. The absence of arrows between items (cubes or agents) implies a neutral preference. \textbf{b.} Actions of the agents across iterations. Agents are represented across iterations to indicate their paths. Arrows indicate main directions of motion. \textit{Upper-tier.} Condition 'def' in which S is defiant of O (see Section \ref{method:sit3} for interpretation). \textit{Lower-tier.} Condition 'sad' in which S is 'socially anxious' concerning O. \textbf{c.} Charts of dependent variables characterizing the subject as a function of time (iterations), for the three conditions 'no def', 'def', and 'sad'. Average across trials. Error-bars are standard errors. \textit{Upper-left quadrant.} Free energy (FE). Agents were forced to stay in place for $3$ iterations, before they could 'go' (dashed vertical line). Window between iterations $60$ and $70$ (dashed vertical lines) indicates 'end' state period for statistics. \textit{Upper-right quadrant.} Valence (VAL). \textit{Lower-left quadrant.} Approach-avoidance metrics (AA), type 'move' ($0$ indicates that the subject is closest to cube $c1$), and type 'look' ($1$ indicates the subject is looking straight at $c1$). \textit{Lower-right quadrant.} Joint attention metrics (JA) (if the two agents are simultaneously looking straight at cube $c1$ $JA = 1$).        
    \label{fig:Fig_5}
\end{figure}

Figure \ref{fig:Fig_5r} shows snapshots of robots behaviours for a representative trial of situation $3$. 

\begin{figure}
    \centering
    \includegraphics[width=1\textwidth]{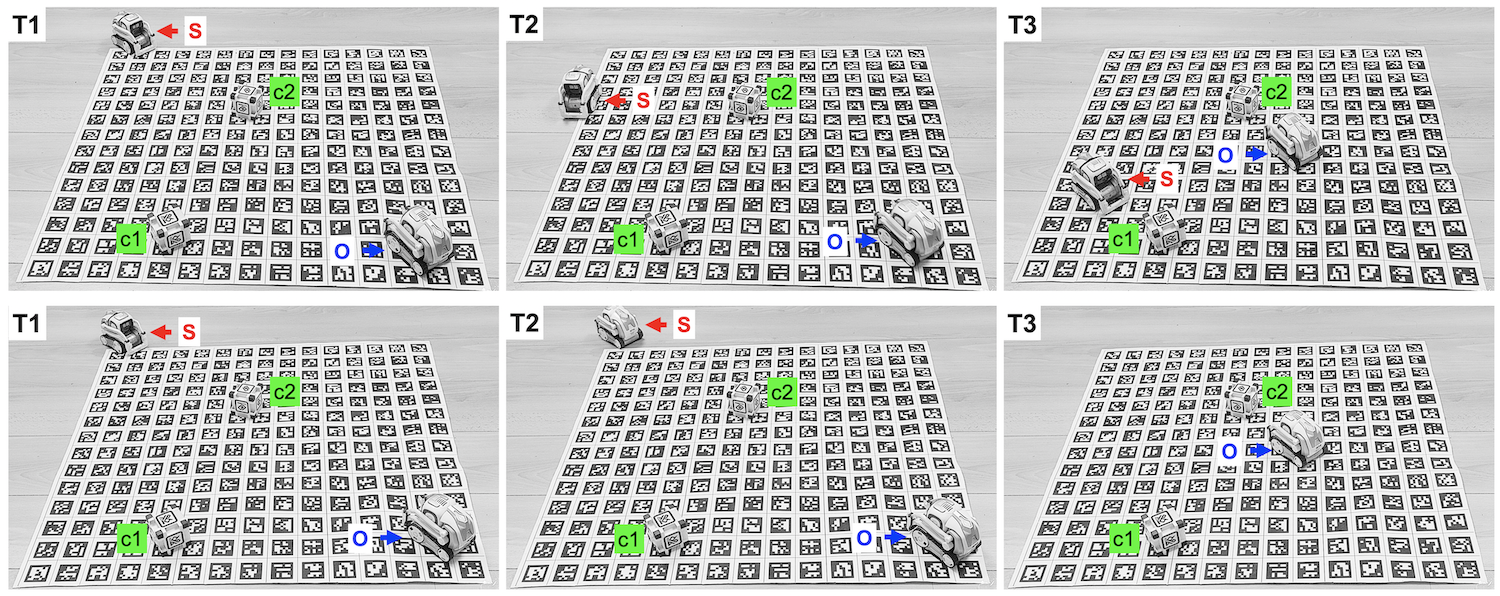}
    \caption{\textbf{Illustration of situation 3 with robots}}
    \medskip
    \small
    \raggedright
    S indicates the subject, and O the other agent. Snapshots are shown for three time points: T1, T2, T3 (note that the time points with the same label, e.g. T1, across conditions, do not necessarily correspond exactly to the same time instant in the simulations, and were chosen for their illustrative value). 
    \textbf{Upper-row.} Condition 'def'. Robot S approached its preferred cube $c1$ via a detour to avoid being seen too directly by robot O ($T2$-$T3$). \textbf{Lower-row.} Condition 'sad'. Robot S could not approach $c1$, but only fly away ($T2$-$T3$), as it assumed that robot O was constantly paying attention to it even if robot O was not overtly oriented towards it (self-centered beliefs, see Section \ref{method:sit3} for interpretation).     
    \label{fig:Fig_5r}
\end{figure}

\subsection{Situation 4: taking advantage of false beliefs in others}

\subsubsection{Rationale and design}

The aim of this situation was to assess the ability of an agent to naturally (automatically) take advantage of its implicit understanding of another agent's false beliefs based on its ability to infer the expected behaviour of the other through social-affective perspective taking.

We modeled the situation based on Sally and Anne's paradigm (see Section \ref{intro}), which we adapted to a non-verbal context focusing on approach-avoidance behaviours. The two cubes were made indistinguishable by the agents except based on their locations. Cube 2 was associated with positive prior preferences for both agents. Cube 1 was neutral. Both agents believed that the other agent also had positive preferences for cube 2. The subject performed social-affective perspective taking, and was influenced by the other at the level of the action influence tensor. 

We created a competitive situation between the agents, in order to generate an initial conflict of approach and avoidance for the subject, so that the subject would avoid approaching the preferred cube. We did so by introducing prior beliefs in the subject that the other agent held negative prior preferences about it, and by having the subject dislikes the other agent too. In fact, the subject was neutral to the other agent. Thus, even though approaching cube 2 would minimise FE for the subject in isolation, the prediction that the other agent would approach cube 2 would prevent the subject to further approach the cube, by anticipation of the negative reaction of the other agent and of the subject itself if the subject and the other agent would find themselves close to each other.

The simulations were divided in two phases. At some point (transition between phase $1$ and $2$, at iteration $30$), the locations of Cube $1$ and $2$ were switched, and initial conditions were reset (the subject and the other agent were repositioned at their initial location). 

\begin{itemize}
    \item \textit{The other agent has true belief about the location of cube 2}. In this condition, both agents could see at all times what was happening, i.e., they could both witness the switching of the cubes. Thus both the subject and the other agent would maintain true beliefs about the location of Cube 2, and believed that the other had true beliefs about it as a result. We hypothesized that the subject would avoid approaching Cube 2 in both phase $1$ and phase $2$.   
    \item \textit{The other agent has false belief about the location of cube 2}. In this condition, just before the switching between cubes 1 and 2, the other agent was turned around so it could not see the cubes for a short amount of time, while the subject could still observe the cubes and the fact that the other agent was not able to see the switching. As a result, the other agent would still believe that cube 2 was at its initial location, whereas the subject would update its own beliefs about the location of cube 2 but not the beliefs attributed to the other agent about it. In other words, for the subject, the other agent would hold false beliefs about the location of cube 2. We hypothesized that the following behaviours would ensue in the second phase. The subject would approach cube 2, expecting the other agent to (mistakenly) approach cube 1 (believing it is cube 2), and anticipating that it would not find itself close to the other agent as a result.     
\end{itemize}

Relevant simulation parameters are shown in table \ref{table:6}. \\

\begin{table}
\centering
\begin{tabular}{|l|c|}
\hline 
$p_1$ & For agent $1$, $p_{11.}=[0.5,0.1,0.5,0.7]$, $p_{12.}=[0.1,0.5,0.5,0.7]$. \\
\hline
$p_2$ & For agent $2$, $p_{21.}=[0.5,0.5,0.5,0.7]$, $p_{22.}=[0.5,0.5,0.5,0.7]$\\
\hline
$I^p$ & $I_{saa}=0$ for all $a\in A$ such that $a\neq s$ and any $s\in A$ \\
\hline 
$I^m$& $I_{11.}^m\sim [1,6]$ and $I_{12.},I_{2..} $ are irrelevant\\
\hline
$dp$ &  Depth of processing $3$\\
\hline
$\update$& True\\
\hline 
$\scb$ & False\\
\hline
\end{tabular}
\caption{\textbf{Specific parameters for situation $4$}}
\label{table:6}
\end{table}

\subsubsection{Results}

Figure \ref{fig:Fig_6} shows the results of the simulations. When the other agent had true beliefs (condition 'tb') about the location of the preferred cube ($c2$) as compared to when it had false beliefs (condition 'tb'), at the end state, the subject had a higher free energy ($FE$) ($t(198) = 5.39; p = 2.00\cdot10^{-07}$), a lower approach (or higher avoidance) towards the preferred cube ($c2)$, both in terms of motion ($t(198) = 53.77; p = 4.34\cdot10^{-120}$) and look at ($t(198) = -59.10; p = 9.86\cdot10^{-128}$), and lower valence ($t(298) = -65.08; p = 1.28\cdot10^{-135}$). The joint attention parameter was not relevant for this simulation, but is indicated in the figure. In the condition 'tb' the other agent would always approach the preferred cube ($c2$), in both $phase 1$ and $phase 2$ of the simulation (see Figure \ref{fig:Fig_6}), forcing the subject to avoid approaching the cube. In the condition 'fb', the subject could take advantage of the false beliefs of the other agent, and approach the preferred cube during $phase 2$. 

\begin{figure}
    \centering
    \includegraphics[width=1\textwidth]{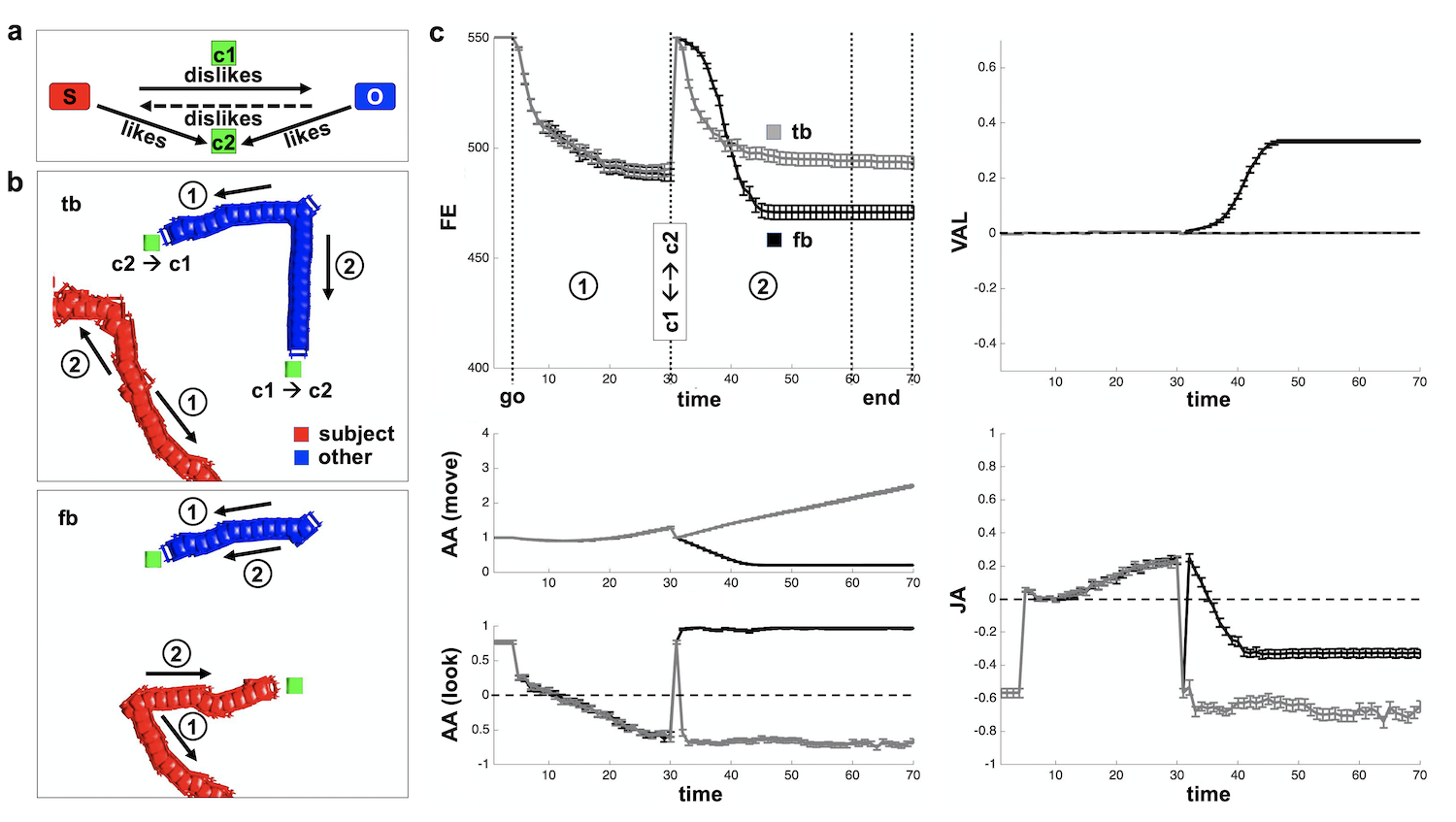}
    \caption{\textbf{Results of simulations for situation 4}}
    \medskip
    \small
    \raggedright
    \textbf{a.} Summary of initial configuration of preferences. S indicates the subject, and O the other agent. Arrows indicate whether an agent likes or dislikes a cube or an agent. The dashed arrow pointing from O to S indicates beliefs held by S about O. The absence of arrows between items (cubes or agents) implies a neutral preference. \textbf{b.} Actions of the agents across iterations. Agents are represented across iterations to indicate their paths. Arrows indicate main directions of motion. $Phase 1$ before the switching of the locations of the cubes, and $phase 2$ after the switching, are indicated with numbers in circles. \textit{Upper-tier.} Condition 'tb' in which O has true beliefs about the location of cube 2. \textit{Lower-tier.} Condition 'fb' in which O has false beliefs about the location of cube 2. \textbf{c.} Charts of dependent variables characterizing the subject as a function of time (iterations), for the two conditions 'tb' and 'fb'. Average across trials. Error-bars are standard errors. \textit{Upper-left quadrant.} Free energy (FE). Agents were forced to stay in place for $3$ iterations, before they could 'go' (dashed vertical line). Window between iterations $60$ and $70$ (dashed vertical lines) indicates 'end' state period for statistics. \textit{Upper-right quadrant.} Valence (VAL). \textit{Lower-left quadrant.} Approach-avoidance metrics (AA), type 'move' ($0$ indicates that the subject is closest to $c2$), and type 'look' ($1$ indicates the subject is looking straight at $c2$). \textit{Lower-right quadrant.} Joint attention metrics (JA) (if the two agents are simultaneously looking straight at $c2$ $JA = 1$).
    \label{fig:Fig_6}
\end{figure}

Figure \ref{fig:Fig_6r} shows snapshots of robots behaviours for a representative trial of situation $4$. 

\begin{figure}
    \centering
    \includegraphics[width=1\textwidth]{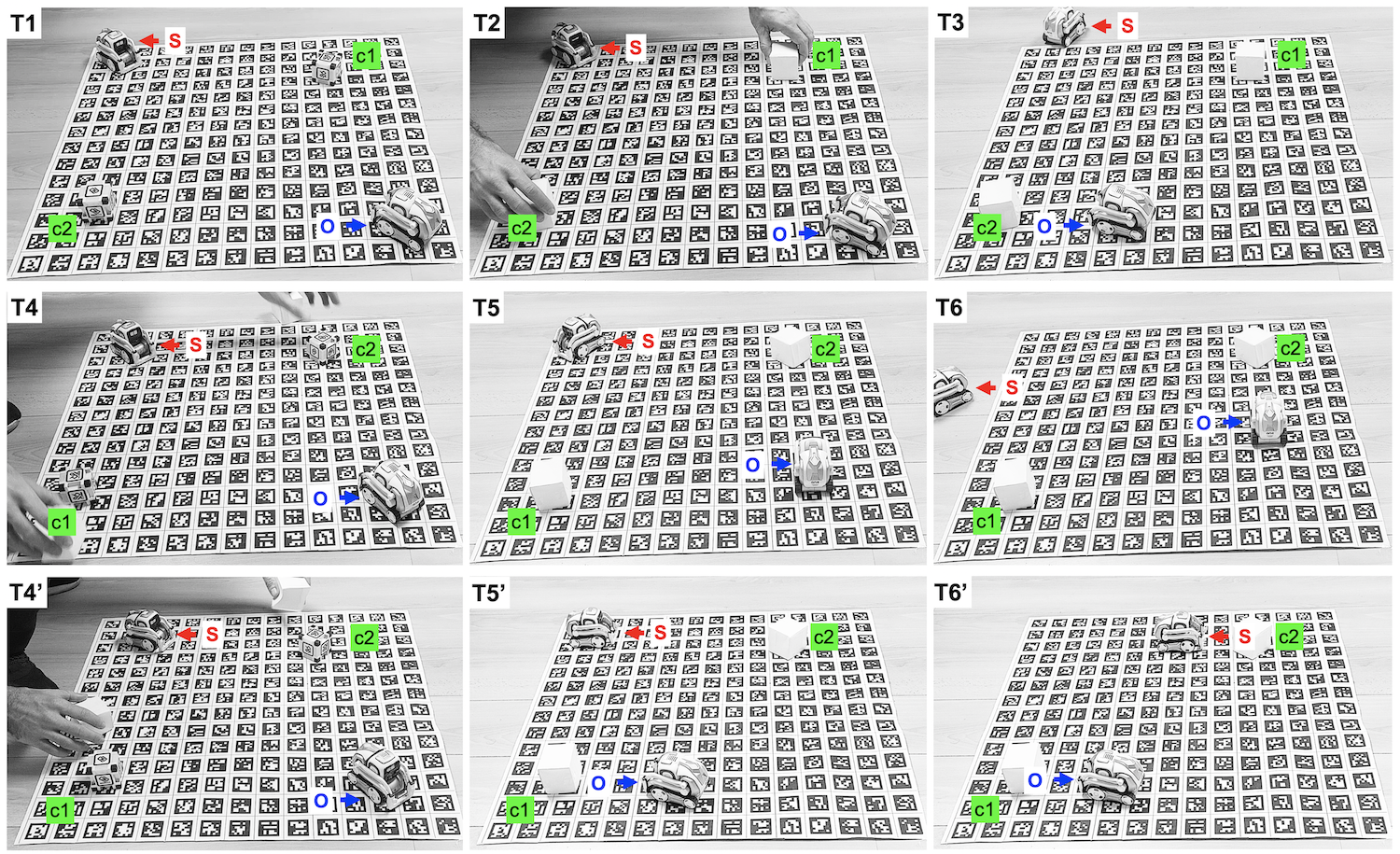}
    \caption{\textbf{Illustration of situation 4 with robots}}
    \medskip
    \small
    \raggedright
    S indicates the subject, and O the other agent. Snapshots are shown for three time points: T1, T2, T3 (note that the time points with the same label, e.g. T1, across conditions, do not necessarily correspond exactly to the same time instant in the simulations, and were chosen for their illustrative value). 
    \textbf{Upper-row.} Condition 'tb' $phase 1$. Robot S could not approach its preferred cube $c2$ and had to fly away ($T3$) as it expected robot O to approach it. \textbf{Middle-row.} Condition 'tb' $phase 2$. After switching the locations of $c1$ and $c2$, robot S still could not approach its preferred cube $c2$ and had to fly away ($T6$) as it expected robot O to approach it. \textbf{Lower-row.} Condition 'fb' $phase 2$. After switching the locations of $c1$ and $c2$, robot S could approach its preferred cube $c2$ ($T6'$) as it expected robot O to approach the wrong cube because of its false beliefs about the location of $c2$.     
    \label{fig:Fig_6r}
\end{figure}

\bigskip

\section{Discussion}

\bigskip

\subsection{Scope and aim of modeling}\label{disclaimer}

First, we discuss the scope and aim of our modeling approach and implementation.

Advances in artificial Intelligence (AI) have lead to the development of systems, based for instance on deep-learning neworks \cite{lecun2015deep}, which sometimes surpass human performance, e.g. in games such as Go \cite{silver}. However these architecture do not entail that these systems operate in a manner that imitates human cognition, let alone consciousness. Instead, current artificial systems most often implement computations and mechanisms of inference, which better correspond to unconscious processing in the brain \cite{dehaene2017consciousness}. 

Progress in modeling human cognition, including consciousness, and its possible role in biological cybernetics, requires to develop systems that primarily aim at mimicking how the human mind operate internally, above and beyond questions of performance in specific tasks and contexts. It does not necessarily entail to tackle the modeling of general intelligence, which is highly variable throughout development and across individuals (of note, current AI models still struggle with achieving many aspects of human infant intelligence \cite{hassabis}). As reviewed in section \ref{princ-mach-consc}, this is particularly true for consciousness, which has a limited capacity of information integration, and is subject to illusions \cite{rudrauf5} and to a variety of errors \cite{dehaene2017consciousness}. Furthermore, one could conceive of conscious systems with very limited intelligence.

Nevertheless, these limitations of consciousness are themselves of scientific interest. Indeed, in our perspective, it is important not only to understand how consciousness may contribute to adaptive behaviors but also to the emergence of maladaptive behaviors. Of note, this would not be typically expected from more general AI approaches, which in pursuing performance, are generally not meant to learn maladaptive functions. However, explicit and interpretable theories and models are warranted to study the mechanisms that could contribute to explain how consciousness may play a role in adaptive and maladaptive development and behavior \cite{rudrauf3}. 

Furthermore, we emphasize that our aim here was not to claim or demonstrate the superiority of the specific method of optimization we used to implement our agents. Our goal was not to implement a model competing with state-of-the-art Artificial Intelligence (AI) algorithms (from Reinforcement Learning and Deep learning, to Markov Decision Processing (MDP)), for instance in learning and performing specific tasks. We are not either claiming that the behaviours that proceed from our algorithm could not be generated through a multiplicity of different algorithmic approaches and methods, from classical to stochastic inference and control approaches. For instance, adaptive control theory and probabilistic learning have been applied to classical game theory frameworks \cite{Freire2019}, demonstrating high predictive power based on learning specific policies. More generally, predictive and machine learning frameworks have been applied to tackle specific practical problems, relevant to adaptive behaviors and survival \cite{hassabis,bach}. For instance, Yoshida et al. \cite{Yoshida2008game} used Reinforcement Learning (RL) and recursive predictions in a two-dimensional competitive digital board game between two agents, which tested capacities of ToM to learn strategies maximizing expected reward in the game. They used their approach for the assessment of ToM in Autism Spectrum Disorder by comparing simulated and empirical performance in the game \cite{yoshida2010cooperation}. However, these approach do not aim at developing a model that would necessarily be faithful to human psychology, and certainly not to consciousness, in its internal modes of representation and operation.

Our goal was to study through an integrative model how the subjective perspective of consciousness may contribute to the mechanisms and functions delineated by consciousness theories, as reviewed in section \ref{princ-mach-consc}, and in doing so might play a causal role in the generation of canonical adaptive and maladaptive behaviours. Our approach did not aim at developing a model of a specific phenomenon or problem that offers an optimal account for that phenomenon or problem, but to develop a more general model, based on a scientific rationale about consciousness, capable of accounting for a variety of phenomena, in a heuristic and constructive manner. We sought on this basis to delineate relevant mechanisms for the control of embodied agents, and wished to model agents that would generate relevant behaviours in a variety of contexts, based on such general mechanisms. As a result, our approach was not framed within specific game theoretic contexts. We chose to focus on modeling agents as subjective systems, capable of modeling and predicting other agents based on their own internal mode of operation following simulation theory \cite{lamm}. But in our approach, an agent is not trying to explicitly cooperate or compete with other agents. It can perform ToM, but do so without a goal of policy optimization as would be found in a game theory setting. Although beyond the scope of this contribution, it might be interesting to study how our approach could apply to different contexts relevant to game theoretic settings, e.g. in a collaborative search context, which would be a somewhat natural setting for our model.

At this point, our approach to the planning of future actions by an agent is naive, as it explores the tree of possible future actions while assuming that other agents react in a deterministic manner to its actions. Therefore, our current method is far from state of the art algorithms such as Monte-Carlo Tree Search \cite{Silver2016} or depth-limited search algorithms for perfect information games. We intend in future developments to consider methods for imperfect information games \cite{Emery-Montemerlo2004}\cite{Brown2018}, e.g. when considering collaborative agents, as it is indeed more realistic to consider that agents only have only access to partial information about the preferences, and more generally beliefs of other agents.

Most importantly, what we wanted to emphasize in this contribution was how the projective modeling of the subjective perspective of consciousness could play a meaningful role within an adaptive or maladatptive cybernetic process in embodied systems, in a manner that could capture known psychological mechanisms, and relate to perspective taking, appraisal, imagination and motivation. Mimicking how consciousness process information and use information to control behaviors may however prove relevant for social and empathetic artificial agents and robotics, and complement other approaches in AI interested in emulating human cognition.

\subsection{Interest of the model regarding the role of consciousness in the generation of adaptive and maladaptive behaviours}

Pursuing this modeling aim, we showed how the principles of the Projective Consciousness Model (PCM) could be applied to derive and implement a model of the subjective perspective of consciousness, operating as a global workspace
\cite{dehaene2017consciousness, merker2021integrated}, which could play a causal role in appraisal, affective and epistemic drives, and social-affective perspective taking. We demonstrated through simulations how that subjective perspective could play a causal role in a process of active inference (see Sections \ref{mat-and-meth}, \ref{intro:foc-motive}, \ref{intro:foc-tom}; and see Figures \ref{fig:Fig_1a},  \ref{fig:Fig_1b}, and \ref{fig:Fig_2b}). The approach was based on the concept of Field of Consciousness (FoC), which puts forth the integral role of 3-dimensional projective geometry in conscious access and processing, in a manner that is consistent with and makes sense of key aspects of the phenomenology of subjective experience.

The projective model could account for, operationalize, and exploit the viewpoint-dependent, conspicuously perspectival phenomenology of the subjective experience of space. In doing so it could also account for documented relationships between the intensity of experienced emotions and the distance of stimuli \cite{teghtsoonian1}, based on subjective properties of the projective manifestation of objects that do not simply conform, in and of themeselves, to the objective Euclidean properties of those object in ambient space (see \ref{bridgegap}). This yielded principles for relating basic mechanisms of appraisal, the imagination, social-affective perspective taking, and the affective and epistemic motivation of action, within a unified framework.  

The application of the concept of FoC to counter-factual perspective taking lead to an operational model of the imagination and its role in the motivation and programming of action. The possibility of attributing different states of FoC and expectations (beliefs, preferences) to other agents enabled agents to perform and leverage Theory of Mind (ToM) \cite{hadwin1}, in a manner that was consistent with simulation theory \cite{lamm}, and theories emphasizing the importance of the combination of spatial and affective information in such process \cite{baroncohen3,ciompi1991affects,berthoz2010spatial}. As a result, we obtained a generative model of basic approach-avoidance and joint attention behaviours. The psychologically inspired manipulation of the model's parameters yielded differentiated adaptive and maladaptive behaviours in simulations.

\subsubsection{Simulations of adaptive behaviours}

 The simulations demonstrated the potential adaptive value of projective imagination in fostering resilience when facing an obstacle ($Situation 1$), the role of social-affective perspective taking in adaptive approach and joint attention behaviours ($Situation 2$), and its strategic effectiveness in the context of competition between agents, when agents could leverage false beliefs attributed to others ($Situation 4$). ToM appeared in our simulations to come with a certain cost, related to the complexity and uncertainty of social-affective perspective taking and its consequences on the need to adapt behaviours accordingly in a flexible manner (see $Situation 2$). Although this is beyond the scope of this contribution, a more systematic and broad analysis of the impact of the model parameters on covert (internal dynamics) and overt (observable behaviours) metrics could yield interesting insights and predictions about the psychological processes encompassed by the model, and lead to new experiments to test its specific predictions.   
 
\subsubsection{Simulations of maladaptive behaviours}

Simulated disruption or biases within the projective process generated behaviours mimicking characteristic features of Autism Spectrum Disorders (ASD) and Social Anxiety Disorders (SAD). 
Our simulations showed that dysfunctional mechanisms of social-affective perspective taking implied, on the one hand, a reduction of the processing complexity and uncertainty entailed by flexible choices of action, and, on the other hand, impaired joint attention behaviours and generated more restricted interests ($Situation 2$). These outcomes are consistent with clinical theories linking this type of psychological mechanisms to symptomatic behaviours in ASD \cite{baroncohen4, lombardo1, chevallier1}. Future research using the PCM framework could further help disentangle the role of different factors and mechanisms in the emergence of differentiated symptomatic behaviours, by analyzing detailed effects of parameter manipulations related to specific, relevant mechanisms. For instance, such approach might potentially contribute to better understand the diversity of clinical presentations in the spectrum. Likewise, it might help assess competing, though not necessarily mutually exclusive theories, from theories emphasizing a primary deficit in perspective taking \cite{baroncohen4} to theories emphasizing deficits in social motivation \cite{moscovitch1}, based on parametric manipulations and the analysis of differentiated behavioural predictions.    

Our simulations also showed how a combination of pro-social motivation (interest in others) and prior beliefs about the negative opinion another agent may have about the subject, could lead to maladaptive social avoidance, in a manner that was consistent with current psychological hypotheses about SAD \cite{moscovitch1}. More targeted simulations, e.g. studying the role of parameters impacting social-affective learning and the ability to update beliefs, but also projective parameters affecting perspective taking and attentional focalisation, might contribute to better understand the role of specific mechanisms and their interactions, as aggravating versus mitigating factors in SAD. 

More generally, research based on the PCM framework might contribute to advance our understanding of psychopathology, and the role of the subjective perspective in it, through a systematic exploration of the impact of parameters on symptomatic behaviours (see \cite{rudrauf3}).

\subsection{Validation rationale and status of the simulations}\label{validation-rationale}

Our main goal was to offer a proof-of-concept showing the relevance and interest of the PCM principles as a basis for human-inspired, interpretable methods of active inference (and more generally cybernetics), based on the integration of a subjective perspective in models of embodied agents. 

We chose to demonstrate that we could reproduce a rather broad spectrum of differentiated, canonical behaviours, characterized by their adaptability, and well-described in developmental and clinical psychology. Somewhat in the spirit of building a non-verbal Turing test \cite{saygin2000turing}, we considered behaviours that would be criteria of evaluation in the context of classical experimental and clinical setups, i.e. approach-avoidance and joint attention behaviours, used to distinguish between groups and individuals, based on clearly discriminant outcome variables. 

The challenge was to show that a pipeline of processing implementing the core projective principles of the PCM could provide a robust generative model of such canonical behaviours. Importantly, these principles were initially derived in an independent manner from the consideration of these canonical behaviours, based on aspects of the phenomenology of consciousness (perspectival structure, perspective taking and imagination). Our fundamental modeling assumptions about consciousness as a projective mechanism allowed us, in a manner that was consistent with global workspace theory \cite{dehaene2017consciousness} and Alexander's axioms \cite{aleksander}, to naturally connect and unify multiple processes that are often studied separately, such as perception, attention, imagination, emotion, motivation and action, and to derive a specific, even though simplified implementation framework. From there, we were in a position to test whether mechanisms central to our model, could play a causal role in the generation of canonical behaviours, further motivated by the fact that mechanisms of that type had long been hypothesized, but mostly in a qualitative manner, to play a critical role in such behaviours. 

As explained above in section \ref{disclaimer}, because human behaviours are context dependent, approaches only attempting to learn those behaviours with variable functions and highest possible precision may not be always the most adequate when the modeling aim is to capture aspects of human psychology, in particular related to consciousness. Instead, we wanted to use a parsimonious theoretical framework based on psychological principles, as the basis of an integrative model, to show that a small number of functions and dimensions may be sufficient to generate a variety of highly relevant, intrinsically variable yet distinct behaviours.  

The ability to generate and discriminate well-known behaviours that are clearly distinguishable experimentally, based on a novel theoretical framework accounting for these behaviours through unified principles and integrating a model of subjective perspective, is not trivial. If the presiding principles could not generate these canonical behaviours in a robust manner, then the theory would have been automatically falsified. Generally speaking, we would like to emphasize that the evaluation of the validity of a model with such modeling aims, should not only be based on its predictive power and accuracy about specific data, but also based on the processes according to which it can predict such data. Such approach, from theory to predictions, is standard science, and has a well-established heuristic value. 

The implemented model components constitute a rather parsimonious if not simplistic implementation of the PCM principles, based on simple functions, with many specifications that were minimalist and somewhat arbitrary, as for emotion processing for instance. However, the generated behaviours were robust. We went through several versions of the detailed implementation before the one we presented here, but in general, the different details on cost functions and emotion processing had a limited impact on the results qualitatively. This is because the general projective framework, the mechanisms of appraisal and motivation through projective geometry, had a stronger impact on outcomes than the specifics of the different subcomponents of the implementation, and constrained the results robustly (specific cost functions were subordinate to the problem, and might perhaps be different in different species or individuals). Moreover, theoretical choices in our approach lead to explanations and robust predictions of behaviours in a parsimonious manner, but also entailed their interpretability. Every part of the model is indeed interpretable. 

Finally, because the model generates behaviours, as a function of internal parameters, which could be empirically observed and quantified, future work, which is beyond the scope of this contribution, should pursue the design of experiments that could be combined with the model, to test and apply its predictions. Model parameters with best predictive power could be retrospectively interpreted from a psychological standpoint, e.g. in the context of psychometric evaluations and training, and to test a variety of hypotheses about relationships between psychological mechanisms and behaviours, for instance using Virtual reality (VR) (see preliminary versions of the approach in our article on the Moon Illusion \cite{rudrauf5}).

\subsection{Current limitations and perspectives} 

Synthetic psychology is an iterative and constructive process, with the overarching goal of modelling psychology in the most encompassing and valid way. However, our goal in this report was only to integrate the principles of the PCM into a proof-of-concept, cybernetic pipeline, for the control of artificial agents. Multiple components had to be implemented in order to render the pipeline operational. But we did not aim at developing each component of the pipeline in a manner that the would systematically compete with state-of-the art implementations of each relevant component separately, e.g. computer vision, machine learning or emotion processing, including appraisal. Instead, we were interested in approaching the problem in a more holistic manner, in order to obtain a first functional solution emphasizing the role and relevance of projective processing, while limiting complexity and computational burden. A non-exhaustive list of current limitations, challenges and perspective, regarding the development of the PCM, can be highlighted (see also sections \ref{disclaimer} and \ref{validation-rationale}).

The version of the model formulated and implemented in this report focuses on a subset of mechanisms. We have not for instance integrated the perceptual inference steps of calibration of projective charts that we modeled in the context of explaining the Moon Illusion and other perceptual illusions \cite{rudrauf5}. In other words, the agents, as we modeled them, would not be subject to perceptual illusions such as the Moon Illusion. Likewise, the generative models of appraisal, emotion and emotion expression we used were limited to liking (intrinsic pleasantness versus unpleasantness), and more generally were too direct and simplistic. Further developing such models was beyond the scope of this contribution, which focused on the role of projective processes. We are working on integrating more sophisticated models of appraisal and emotion processing (see \cite{rudrauf3}), and advanced generative models of emotion expression in virtual humans for VR experiments \cite{tisserand1}. One of our aims is to incorporate emotion expressions as elements of the repertory of actions that is optimised through the projective imagination loop. The goal is to enable agents to control their emotion expressions in a strategic manner as part of the process of active inference. 

Our model yields a general model of agents that could potentially be applied to multiple embodied and implementation contexts, from virtual humans to robotic platforms. Here, we used a specific toy robotic context, based on two Cozmo (Anki) robots and cube objects that come with the platform. The conscious representation of the environment was simulated by instantiating 3-dimensional models of the robots' body and of their cubes as tessellations at a 1:1 scale (see \ref{fig:Fig_1a} for an example of Cozmo robot's high-resolution mesh). The actual physical robots were then controlled based on the outputs of these virtual simulations. However, because our algorithm was computationally intensive, we chose to down-sample the robot meshes to simple bounding boxes. More generally, because of the heavy computational cost of the current implementation, we could not run the simulations in real-time, based on our current access to computational resources, in order to enable direct interactive interfacing with the robotic platform, although such interfacing was implemented (see Appendix \ref{appendix:cozmos}). We ran the simulations offline, and used the physical robots for replay and illustration purposes. Real-time simulations are also essential for enabling direct interactions between humans and artificial agents (robotics or virtual humans). 

Furthermore, being too exhaustive, the optimisation scheme was not realistic from a psychological standpoint, as it explored all possible paths from a given initial condition, at a given depth of processing. We are working on optimisation strategies taking inspiration from psychological rationale, in order to jointly limit the amount of computation and ameliorate the validity of the model. These include heuristics based on contextual time pressure or time consciousness, and the integration of cost functions relating more explicitly to violations of expectations from predictions. The goal is to only involve a deep imagination loop and revision of priors when a significant divergence from prediction is experienced.More generally, we will explore the possibility of using more state-of-the-art methods of optimization, as briefly discussed in section \ref{disclaimer}.

Nevertheless, we believe that the model's implementation presented in this report is a promising basis for future refinements, and a starting point for instance to enable us to compare more specific models of emotion and decision making.

\subsection{Relevance of our approach to the science of consciousness}

Of course, we are not at all claiming that artificial agents (robots, virtual humans) governed by an implementation of the PCM are \textit{ipso facto} conscious. The question of the specific physical or computational conditions generating conscious experience as such remains open \cite{rudrauf2}, and might be beyond the reach of science \cite{merker2021integrated}. We are more willing to claim that the general principles of the model capture essential properties of consciousness, namely its subjective perspective and how it operates as a global workspace, and that, as a result, the behaviours of artificial agents governed by it could be said to be generated according to principles that govern consciousness. In our view, the scientific endeavour at stake is more about modelling and predicting \textit{how} observable phenomena are governed and relate to each other, than about explaining \textit{why} fundamental phenomena ultimately exist. 
\bigskip

\subsection{Conclusion}

The Projective Consciousness Model (PCM) accounts for key aspects of the phenomenology of consciousness, namely its subjective perspective, and in doing so allowed us to understand and model central aspects of the putative causal role of consciousness as a global workspace in information processing and the generation of behaviour. The approach, which was implemented in this report as a first proof-of-concept, yielded interpretable behaviours based on explicit psychological mechanisms and parameters related to perception, appraisal, imaginary perspective taking and drives, in non-social and social contexts. Much work remains to be accomplished in order to develop and validate the approach, but we believe it is a meaningful first step and useful foundation for future research.  

\bigskip

\section*{Acknowledgments}

We thank Profs. Daniel Bennequin, Kenneth Williford, Karl Friston, Martin Debbané, and Andrea Samson, for their help with previous related projects that contributed to the emergence of this article. We thank master students Vladimira Ivanova, Pauline Jeantet and Isabelle Rambosson, for their work under David Rudrauf's supervision on related topics, which helped support and focus this article. We thank anonymous reviewers for very useful suggestions of revisions.
The project was funded with initial seed funding form the Computer Science University Center at the University of Geneva, and with a grant to David Rudrauf from the Swiss National Science Foundation (205121-188753, Division II: Mathematics, Natural sciences and Engineering).      

\section*{Authors' contributions}
David Rudrauf: conceived and developed the principles of the model and algorithm, implemented a first version of the code, performed the analyses of the simulation results, and co-wrote the article.  
Grégoire Sergeant-Perthuis: developed the mathematical formulation of the model and algorithm, and co-wrote the article.  
Olivier Belli: developed and implemented the code used for the simulations and the interfaces with Cozmo robots, ran the simulations and robotic replay, and co-wrote the article.
Yvain Tisserand: helped with the conceptual and technical rationale, and co-wrote the article.
Giovanna Di Marzo Serugendo: contributed to initiate the robotic project, to co-design an early version of the robotic system, and co-wrote the article.

\begin{appendices}

\section{Projective Geometrical components of the model}
\label{projective-results}

\subsection{Field of Consciousness}

The PCM assumes that 3-dimensional projective geometry plays a central role in consciousness; for an overview of projective geometry in the context of the PCM, see \cite{rudrauf5}. Let us recall some definitions. \\

\begin{defn}[Euclidean frame, chart, change of coordinates]
Let $E$ be a 3-dimensional Euclidean space, any 4 points $(O,A,B,C)$ such that no 2-dimensional Euclidean space contains all of these points is a frame. Equivalently, there is a one to one correspondence between the frame and the data of a reference point $O\in E$ and a basis of the associated vector space $\overset{\rightarrow}{E}$. The chart associated to the frame is the unique affine transformation $T$ from $E$ to $\R^3$ that sends,

\begin{align}
\text{O to }& 0\\
\text{A to } &(1,0,0)\\
\text{B to } &(0,1,0)\\
\text{C to } &(0,0,1)
\end{align}

The change of coordinate between two frames $(O,A,B,C)$, $(O_1,A_1,B_1,C_1)$, or change of charts, is the affine transformation $L$,

\begin{equation}
L= T_1\circ T^{-1}
\end{equation}

\end{defn}
\vspace{0.5cm}

\begin{defn}[Projective space and projective transformations]\label{spatial-stat}
The 3-dimensional projective space, $P_3(\R)$ is the set of lines of $\R^4$, more precisely it is the quotient space of $\R^4$ for the following equivalence relation: for any $v,u\in  \R^4\setminus \{0\}$,

\begin{equation}
u\sim v \iff \exists \lambda\neq 0,  u=\lambda v
\end{equation}

Any bijective linear transformation from $\R^4$ to $\R^4$, i.e. any invertible matrix $4\times 4$, defines a projective transformation.
\end{defn}
\vspace{0.5cm}

\begin{rem}
Let $M: \R^4\to \R^4$ be a projective transformation of $P_3(\R)$, and let,

\begin{equation}
I=\{(\lambda_1,\lambda_2,\lambda_3)\in R^3: \quad M(\lambda_1,\lambda_2,\lambda_3,1)[4]=0\}
\end{equation}

Then the projective transformation can be represented as an application from $\phi:\R^3\setminus I \to \R^3$ defined as, for any $(\lambda_1,\lambda_2,\lambda_3)\in \R^3$

\begin{equation}
\phi(\lambda_1,\lambda_2,\lambda_3)=  \left(\frac{M(\lambda_1,\lambda_2,\lambda_3,1)[1]}{M(\lambda_1,\lambda_2,\lambda_3,1)[4]},\frac{M(\lambda_1,\lambda_2,\lambda_3,1)[2]}{M(\lambda_1,\lambda_2,\lambda_3,1)[4]},\frac{M(\lambda_1,\lambda_2,\lambda_3,1)[3]}{M(\lambda_1,\lambda_2,\lambda_3,1)[4]}\right)
\end{equation}

\bigskip

We shall refer to one or the other representation simply as $M$ or $\phi$.
\end{rem}

Let us recall that a subject is just an agent that we singled out. Any subject is given its own 3-dimensional Euclidean frame, i.e. 4 points of the 3-dimensional Euclidean space, from which it will take a perspective: a center $0$ and three vectors $(u_1,u_2,u_3)$ that correspond respectively to the upward direction (up-vector), to its right, and to the direction in front of it. The positions of objects and other agents are expressed in these coordinates.\\

\begin{ex}
In the context of computer graphics this frame would be related to the frame of the camera. Let $(x,y,z)$ be the center of the camera and let $(u_1,u_2,u_3)$ be the orthonormal basis that corresponds to what is on top (up vector), on its right, and in front. To be more precise if the vector plane of the frame is spanned by two orthogonal vectors $(u_1,u_2)$ of norm $1$ then,

\begin{equation}
u_3=u_1\wedge u_2
\end{equation}

\end{ex}

The change from the external world to the internal world of the subject is done by a 3-dimensional projective transformation. In order to choose a 3-dimensional projective transformation that represents the point of view of the subject and its embodied first person perspective, we need to specify some properties of this projective transformation.\\

Let $M$ be a projective transformation, we shall denote a vector of $\R^4$ as $(x,y,z,k)$; if we want the $x$-axis, the $y$-axis and the $z$-axis to be preserved by $\phi$, $M$ is constrained to be,

\begin{equation}
M=\begin{pmatrix}
a & 0 & 0 & a_1 \\
0 & b& 0& b_1 \\
0 & 0  & c& c_1 \\
\alpha & \beta & \gamma & \rho
\end{pmatrix}
\end{equation}

We assume that the $z$ axis is the axis along which the subject is aiming (looking) at its surrounding space, and that only positive values of $z$ are considered by the subject at a given time instant, i.e. the subject only represents what is in front of it (or could be in front of it in imagination). This is a restriction that is motivated by considerations about constraints imposed by normal adaptive sensorimotor contingencies, which prevents a singular half-plane to appears in the FoC. Such restriction might be released in a variety of atypical, pathological or abnormal states, which are in and of themselves of great interest but beyond the scope of this report (see \cite{rudrauf4} for preliminary considerations). $M$ is thus constrained to be such that,

\begin{equation}
\alpha,\beta=0
\end{equation}

and $\gamma$ and $\rho$ have the same sign (and $\rho>0$).\\

If furthermore the origin is sent to the origin by the projective transformation then,

\begin{equation}
a_1=b_1=c_1=0
\end{equation}

Let $\phi$ be defined as, 

\begin{equation}
\phi(x,y,z)= \left(\frac{M(x,y,z,1)[1]}{\gamma z+\rho}, \frac{M(x,y,z,1)[2]}{\gamma z+\rho},\frac{M(x,y,z,1)[3]}{\gamma z+\rho}\right)
\end{equation}

\begin{equation}
\phi(x,y,z)= \left(\frac{ax}{\gamma z+\rho}, \frac{by}{\gamma z+\rho}, \frac{cz}{\gamma z+\rho}\right)
\end{equation}

Finally, we assume that projective distorsions only appear away from the origin, which is consistent with the intuition of a 3-dimensional space in perspective. In other words, we want the space close to the subject to be represented faithfully. This translates by requiring that the differential of $\phi$ in $0$ (when $\rho \neq 0$) is the identity matrix i.e. for any $v\in \R^3$

\begin{equation}
d\phi[0; v]= v
\end{equation}

where $d\phi[0;v]$ is the differential of $\phi$ at point $0$ applied to the vector $v\in \R^3$.
And then, 

\begin{equation}
M=\begin{pmatrix}
\frac{1}{\rho} & 0 & 0 & 0 \\
0 & \frac{1}{\rho}& 0& 0 \\
0 & 0  & \frac{1}{\rho}& 0 \\
0 & 0 & \gamma & \rho
\end{pmatrix}
\end{equation}

However as for any $\lambda\neq 0$, $\lambda M$ and $M$ induce the same projective transformation we only have to consider the following form,

\begin{equation}
M=\begin{pmatrix}
1 & 0 & 0 & 0 \\
0 & 1& 0& 0 \\
0 & 0  & 1& 0 \\
0 & 0 & c & 1
\end{pmatrix}
\end{equation}

$\frac{1}{c}$ can be assimilated to a depth of field and is positive. \\

\subsection{Steven's law in the context of appraisal} 

We sum up the discussion of the previous paragraph in the following proposition.\\

\begin{prop}[Characterizing the subject's projective charts]\label{subject-projective-frame}
The projective transformations are such that,

\begin{enumerate}

\item the subject is centered in $0$ after projective transformation, i.e. in its perspective it is at the center of its frame.

\item  the three axis $x,y,z$ are preserved, i.e. the axis of the Euclidean frame associated with the agent (up-down, left-right, and back-front) must be preserved after projective transformation.

\item  no points in the ambient space appears, to the subject, to be truly at infinity; this constraint is realisable because the subject can only directly represent what is in front of it.

\item objects that are near the agent appear to have the same size as in the reference Euclidean frame. 
\end{enumerate}

Thus the retained projective transformations are the transformations, $M$, such that

\begin{equation}
M=\begin{pmatrix}
1 & 0 & 0 & 0 \\
0 & 1& 0& 0 \\
0 & 0  & 1& 0 \\
0 & 0 & \gamma & 1
\end{pmatrix}
\end{equation}

with $\gamma\in \R_+$ a positive real number.
\end{prop}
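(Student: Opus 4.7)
The plan is to start from the most general invertible $4\times 4$ matrix $M$ and successively impose each of the four properties listed in the proposition, reducing the number of free parameters at each step until only one remains, and then quotient out by the scalar equivalence of projective matrices. Throughout, I would work with the partial map $\phi:\R^3\setminus I\to \R^3$ associated with $M$ as introduced in the excerpt.

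First I would exploit the axis-preservation condition. Saying that each of the three Euclidean axes is preserved by $\phi$ means that for any $\lambda\in \R$ the points $(\lambda,0,0)$, $(0,\lambda,0)$, and $(0,0,\lambda)$ are sent to points on the same axes. Writing this out forces all off-diagonal entries of the upper $3\times 3$ block of $M$ to vanish and forces the coefficients in the last row corresponding to $x$ and $y$ to be zero, leaving
\begin{equation}
M=\begin{pmatrix} a & 0 & 0 & a_1 \\ 0 & b & 0 & b_1 \\ 0 & 0 & c & c_1 \\ 0 & 0 & \gamma & \rho \end{pmatrix}.
\end{equation}
Next I would apply the centering condition: $\phi(0)=0$ immediately gives $a_1=b_1=c_1=0$. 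Then I would use the ``no points at infinity in front of the subject'' constraint, which amounts to requiring that the denominator $\gamma z+\rho$ never vanish for $z\geq 0$; this forces $\gamma$ and $\rho$ to have the same sign, and after possibly rescaling I may assume $\rho>0$ and $\gamma\geq 0$.

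The third step is to use the ``objects near the agent look the same'' requirement, formalized by $d\phi[0;v]=v$ for every $v\in\R^3$. Differentiating $\phi(x,y,z)=(ax/(\gamma z+\rho),\,by/(\gamma z+\rho),\,cz/(\gamma z+\rho))$ at the origin yields the diagonal Jacobian $\operatorname{diag}(a/\rho,b/\rho,c/\rho)$, so the identity condition pins down $a=b=c=\rho$. Finally, I would use the projective equivalence $M\sim \lambda M$ for $\lambda\neq 0$ to normalize $\rho=1$, which leaves exactly one free parameter $c:=\gamma\in \R_+$ and gives the claimed matrix. The only subtle point to treat carefully will be the third item: rather than literally forbidding any line of $\R^4$ from landing on the hyperplane at infinity (which is impossible for a genuine projective transformation), I would interpret it as the excerpt does, namely that when the represented half-space $z\geq 0$ is taken as the domain, no point in it is sent to infinity, which is equivalent to $\gamma z+\rho>0$ for $z\geq 0$ and is the step most likely to be questioned if glossed over.
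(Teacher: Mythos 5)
Your overall route is the same as the paper's: write down a general $M$, impose the four embodiment constraints one at a time, and quotient by the scalar freedom; the centering, differential-at-the-origin, and normalization steps all match the paper's computation (in particular $d\phi[0]=\operatorname{diag}(a/\rho,b/\rho,c/\rho)$, hence $a=b=c=\rho$, then rescale).

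There is, however, one concrete slip in the first step. Preservation of the coordinate axes does \emph{not} force the bottom-row entries $\alpha,\beta$ (the coefficients of $x$ and $y$ in the fourth homogeneous coordinate) to vanish. For a point $(\lambda,0,0,1)$ on the $x$-axis, the image under your block form is $\bigl((a\lambda+a_1)/(\alpha\lambda+\rho),\,b_1/(\alpha\lambda+\rho),\,c_1/(\alpha\lambda+\rho)\bigr)$; lying on the $x$-axis only requires $b_1=c_1=0$ and says nothing about $\alpha$, since $\alpha$ enters only through the denominator that is divided out. (Indeed, if you ``write this out'' carefully, axis preservation in the mapped-into-itself sense kills the fourth-column entries $a_1,b_1,c_1$ — making your centering step redundant — but leaves $\alpha,\beta$ untouched.) Because your subsequent no-points-at-infinity step is phrased only for the denominator $\gamma z+\rho$, the vanishing of $\alpha$ and $\beta$ is never actually established in your argument. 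The repair is exactly the paper's: the denominator in general is $\alpha x+\beta y+\gamma z+\rho$, and demanding that it not vanish (equivalently, keep constant sign) on the entire represented half-space $\{z\geq 0\}$, for all $x,y\in\R$, forces $\alpha=\beta=0$ and $\gamma,\rho$ of the same sign with $\rho>0$. With that reassignment of which hypothesis does which job, the rest of your derivation goes through and yields the stated one-parameter family.
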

\vspace{0.5cm}

We shall call the frames of the previous Proposition (\ref{subject-projective-frame}) as the subject's projective chart. Let us justify this denomination. Let $E$ be the ambient Euclidean 3-dimensional space. In the Euclidean frame of the subject any point of $E$ has a unique coordinate $(\lambda_1,\lambda_2,\lambda_3)$, then when applying the previous projective transformations, $\phi$, one gets the representation of the half space in front of the subject in its FoC. If one has a reference Euclidean space $(O,e_1,e_2,e_3)$, common to all agents, in which the coordinates of each objects can be expressed, in order to express the projective transformation proper to each agent, $\psi_a$, one needs to change coordinates, from $(O,e_1,e_2, e_3)$ to $(O_a, u_{1,a}, u_{2,a}, u_{3,a})$, and then apply the projective transformation $\phi$. In other words, if one notes $L_a$ the Euclidean change of charts from $(O,e_1,e_2,e_3)$ to the Euclidean frame $(O_a, u_{1,a}, u_{2,a}, u_{3,a})$, the projective transformation proper to $a$ is,

\begin{equation}
\psi_a=\phi \circ L_a
\end{equation}

The changes of coordinates $L_a$ we consider are isometries, in other words, they preserves distances. In what follows we will characterise how lengths and volumes of objects changes when they are moved away in the direction of aiming (sight), the z-axis, of the subject. As the distances are preserved by $L_a$, it is not different from considering that the reference frame is the Euclidean frame of the agent.  In this setting, Stevens' Law with exponent $-1$, as found in (\cite{teghtsoonian1}) in the context of appraisal, appears to be a consequence of the representation of the external data in a projective chart that satisfies natural constraints (Proposition \ref{subject-projective-frame}).\\

\begin{thm}[Stevens' Law]\label{Stevens-law}
For $M$ a projective chart of a subject, the ratio between the perceived size of an (infinitesimal) object, $r_p$ and its real size, $r$ varies asymptotically as,

\begin{equation}
\frac{r_p}{r}\sim \frac{1}{cz}
\end{equation}

where $z$ is the coordinate of the direction along which the subject is aiming, i.e. the third coordinate in the rigid bases associated with the subject $(u_1,u_2,u_3)$. If the object is aligned with the orthogonal plane of the direction of sight then,

\begin{equation}
\frac{r_p}{r}= \frac{1}{cz+1}
\end{equation}

Furthermore the ratio of the volumes varies as,

\begin{equation}
\frac{v_p}{v}\sim \frac{1}{(cz)^4}
\end{equation}

To be more precise, for any $(\lambda_1,\lambda_2,\lambda_3)\in \R_{+}^3\setminus 0\times 0\times \R_{+}$, there is a constant $C$ such that,

\begin{equation}
\Vert d\phi[x,y,z;\lambda]\Vert \sim_{z\to +\infty} \frac{C}{z}
\end{equation}

and,

\begin{equation}
\det d\phi[x,y,z]\sim_{z\to +\infty} \frac{1}{(cz)^4}
\end{equation}
\end{thm}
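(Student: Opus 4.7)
The plan is to work directly from the explicit coordinate formula for $\phi$ obtained by inserting the matrix $M$ of Proposition \ref{subject-projective-frame} into the homogeneous-coordinate recipe, which yields $\phi(x,y,z) = \bigl(\tfrac{x}{cz+1},\,\tfrac{y}{cz+1},\,\tfrac{z}{cz+1}\bigr)$. All three assertions then reduce to direct computation, so the real task is organizing the asymptotic bookkeeping.

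First I would prove the exact statement. For an object sitting in a plane $\{z=z_0\}$ orthogonal to the direction of sight, any two of its points $(x_1,y_1,z_0)$ and $(x_2,y_2,z_0)$ map to images whose difference is exactly $\bigl(\tfrac{x_1-x_2}{cz_0+1},\,\tfrac{y_1-y_2}{cz_0+1},\,0\bigr)$, because the two $z$-components cancel. Hence $r_p/r = 1/(cz_0+1)$ on the nose, with no asymptotic involved.

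For the general asymptotics I would compute $d\phi[x,y,z]$ once and reuse it. Its diagonal entries are $\tfrac{1}{cz+1},\,\tfrac{1}{cz+1},\,\tfrac{1}{(cz+1)^2}$, and the only off-diagonal entries sit in the third column, equal to $-\tfrac{cx}{(cz+1)^2}$ and $-\tfrac{cy}{(cz+1)^2}$; these decay like $1/z^2$ when $x,y$ stay bounded. Applied to a fixed tangent vector $\lambda = (\lambda_1,\lambda_2,\lambda_3)$, the first two components of $d\phi[x,y,z;\lambda]$ are of order $\lambda_1/(cz)$ and $\lambda_2/(cz)$, while the third is $O(1/z^2)$; so $\|d\phi[x,y,z;\lambda]\| \sim \sqrt{\lambda_1^2+\lambda_2^2}/(cz)$, giving $C = \sqrt{\lambda_1^2+\lambda_2^2}/c$. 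The exclusion of $\{0\}\times\{0\}\times\R_+$ in the hypothesis is exactly what guarantees $C\neq 0$; along those excluded directions the correct decay rate drops to $1/z^2$. Finally, because of the triangular structure, $\det d\phi[x,y,z]$ is the product of the diagonal entries, namely $1/(cz+1)^4$, which asymptotes to $1/(cz)^4$ and yields the volume formula.

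The only delicate point, and the only thing worth flagging as a potential obstacle, is the implicit regime of the asymptotic: one reads the claim as $z\to\infty$ with $\lambda$ fixed and $x,y$ bounded. If $x$ or $y$ were permitted to grow with $z$, the third-column terms in the Jacobian could become comparable to the diagonal ones and the $1/z$ rate would degrade. Making this regime precise is the one bookkeeping subtlety; everything else is routine algebra from the explicit form of $M$.
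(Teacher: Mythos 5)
Your proposal is correct and follows essentially the same route as the paper: write $\phi(x,y,z)=\bigl(\tfrac{x}{cz+1},\tfrac{y}{cz+1},\tfrac{z}{cz+1}\bigr)$, compute the Jacobian, read off the $\tfrac{1}{cz}$ asymptotics of $\Vert d\phi[x,y,z;\lambda]\Vert$ from the leading diagonal terms, and get $\det d\phi=\tfrac{1}{(cz+1)^4}$ from the triangular structure. You are in fact a bit more careful than the paper's proof, which keeps no $x,y$ dependence in the lower-order terms of $\Vert d\phi[x,y,z;\lambda]\Vert^2$ and does not state the bounded-$(x,y)$ regime you rightly flag, nor does it spell out the exact orthogonal-plane case that you handle by direct cancellation.
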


\begin{proof}
 Let $c\geq 0$ be a positive real number and $\phi$ be the associated projective chart. For any $(x,y,z)\in \R^3$ and $\lambda=(\lambda_1,\lambda_2,\lambda_3)$,
 
 \begin{equation}
 \Vert d\phi[x,y,z; \lambda]\Vert^2= \frac{\lambda_1^2}{(cz+1)^2} + \frac{\lambda_2^2}{(cz+1)^2}+ \frac{\lambda_3^2(1+2c)}{(cz+1)^4}- \frac{2c\lambda_1\lambda_3}{(cz+1)^3}-\frac{2c\lambda_2\lambda_3}{(cz+1)^3}
 \end{equation}
 
 Therefore when $\lambda_1\neq 0$ or $\lambda_2\neq 0$, in other words, when $\lambda$ is not in the direction $u_3$ (the object is not aligned with the sight direction), then,
 
\begin{equation}
\Vert d\phi[x,y,z;\lambda]\Vert \sim_{z\to +\infty} \frac{\sqrt{\lambda_1^2+\lambda_2^2}}{cz}
\end{equation}

Furthermore,

\begin{equation}
d\phi[x,y,z;\lambda]= \frac{1}{(cz+1)^4}
\end{equation}

Therefore,

\begin{equation}
\det d\phi[x,y,z]\sim_{z\to +\infty} \frac{1}{(cz)^4}
\end{equation}

\end{proof}

\section{Model technical details}\label{model-tech-details}

As in Friston's Free Energy principle, the PCM-mediated process of active inference we focus on includes two phases: a ``perception phase", we shall refer to as an inverse inference problem, and an ``action phase", we shall refer to as a direct inference problem. The approach somehow differs from Friston's Free Energy principle, as the system is not only trying to enforce homeostasis but also to increase its utility (in the sense of a rewarding outcome) based on internal prior expectations. This intrinsic distinction changes the perspective on the action phase. \\

We explain how a given agent represents the entities of the external world in its internal 3-dimensional space, and what cost function is minimised for action selection.

Let us define the spaces that come into play in our approach to active inference.
An entity $e\in E$ is an agent $a\in A$ or an object $o\in O$. Agents can perform active inference, objects cannot. Each agent takes as sensory inputs: 1) the valence, i.e. the difference of the positive and negative emotions expressed by other agents, represented by a number in $[-1,1]$; 2) the space they occupy and that each object occupies in the external 3-d space, which is a subset of $\R^3$ (the collection of subsets of $\R^3$ is denoted as $\mathcal{P}(\R^3)$). In other words, for a given agent $s\in A$, what plays the role of the space of sensory inputs for active inference as described in Section \ref{general_settings} is,

\begin{equation}
S=[-1,1]^A\times \mathcal{P}(\R^3)^A\times  \mathcal{P}(\R^3)^O
\end{equation}

To simplify the implementation, the contexts are not inferred trough Bayesian inference (see Subsection \ref{pcm:inverse-inference}) but in a deterministic manner. 
Here the only truly internal parameter is a preference matrix $p_s$ (see below), and the space of contexts is,

\begin{equation}
\Gamma=[0,1]^{A\times E}
\end{equation}

The space of action is a finite set of basic actions that an agent can perform: moving along a given direction, orienting, and emitting signals expressing emotions. The selection of a move is given by the optimisation of a cost function that relates to a free energy (Equation \ref{appendix:PCM:total-free-energy}), however it is important to emphasize that it differs from the way the optimisation problem is handled in the action phase according to Friston's Free Energy principle \textit{stricto sensu}. 
We will use the term subject, in order to distinguish an agent under consideration from other agents.\\

The chart that corresponds to the subject's point of view of the real world is its projective chart (see Appendix \ref{projective-results}) and is denoted depending on whether it is seen in $\R^3$ or $\R^4$ as respectively $\phi$ or $M$. Following \cite{rudrauf5}, its is a subjective, internal chart that the subject uses to organise its internal model of the world, based on sensory data and prior beliefs, also called the Field of Consciousness (see Section \ref{FoCprojective}). Here that frame can be conceived as the actual or simulated 3-dimensional FoC of the subject.

\subsection{Preferences and emotion}\label{method:pref-tensor}


Emotions expressed by each agent are sensory inputs for inference of preferences. A subject embeds internal models of preferences or affective expectations, regarding objects and agents, $p_s\in [0,1]^{A\times E}$. We call the collection of the preference matrices, the preference tensor and it is denoted as $p$.




 The preferences of a subject $a\in A$ for an entity $e\in E$ is given by $p_{aae}$, and its beliefs about the preferences of another agent $b\in A$ for an entity $e\in E$ is given by $p_{abe}$. In the following simulations, we will assume that each agent tends to have a good image of itself (although it may believe that others do not), so that set $p_{abb}=1$. Note that our approach incorporates not only the preferences attributed by a subject to other agents, but also the preferences it attributes to other agents about other agents.\\

The preferences of a subject $s\in A$ is given by its preference vector, $q_{s,e}= p_{s,s,e}$ for any $e\in E$; the collection of the true preferences of the agents is a submatrix of $p$ defined as  $q_{a,e}=p_{a,a,e}$ for $a\in A$ and $e\in E$. 

Let us take an example to illustrate the previous point. 

\begin{ex}
Let Sally (S) and Anne (A) be the two agents we consider and let the marble (M) be the only object in the scene. In Sally's subjective perspective, her preferences are as follows, she really likes the marble at a level of $0.9$, and is indifferent with respect to Anne at a level of $0.5$ (here considered as a neutral level). She also believes that Anne is indifferent with respect to her at the level $0.5$, and that Anne slightly care about the marble at the level $0.6$. Sally's preferences are therefore,
\begin{center}
\begin{tabular}{l ||l | c | r | }
 & S &A& M    \\
S  &    1& 0.9 & 0.5 \\
A &    0.5 & 1 & 0.6 \\
\end{tabular}
\end{center}

In this example the preferences of Sally herself (preference, $q$) are given by,

\begin{center}
\begin{tabular}{l ||l | c | r | }
 & S& A & M    \\
S &    1& 0.9 & 0.5
\end{tabular}
\end{center}
\end{ex}

\vspace{0.5cm}

\subsection{ToM-related factors: other agents' influence on the subject}\label{tom-factors}

In the model, ToM relies on the subject taking perspectives attributed to other agents to appraise information from their standpoint based on beliefs and preferences attributed to them. This mechanism is explained in subsection \ref{def-perceived-value-uncertainty}. Here we describes two factors controlling the influence of other agents on the subject that contributes to the process. These factors translate in the model into two tensors: the preference influence tensor and the action influence tensor. The use of tensors (\textit{versus} simple matrices) is necessary to encode the information that a subject needs to perform $level-6$ (or second order) ToM, i.e. for the subject to take into account the beliefs it attributes to another agent about another agent (which can be the subject itself).
The preference influence tensor is denoted as $I^p$, and the action influence tensor as $I^m$. Both are a collection $(I_{sab}\in [0,1],(s,a,b)\in A^3)$ such that for any $s,a\in A$ $\underset{b}{\sum}I_{sab}=1$. When agent, $a$, can only do first order ToM the matrice $I_{a..}$ are replaced by vectors $(J_{ab},b\in A)$ and $\sum_b J_{ab}=1$.





\subsection{Attentional focalisation}\label{FoCprojective}

In order to further model the modulation of the salience of information through spatial attention in the FoC, we defined an attentioal focalisation function, which can be thought of as an ``attentional beam" of sorts \cite{baars, baars1}. Attentional focalisation within the FoC can be conceived as a weight on its projective coordinates, highlighting so-to-speak a subset of the FoC. Here we assume a simple scheme, in which information is weighted down as a function of its eccentricity from the center of the 3-dimensional projection (the subject pay more attention to what appears in front of it). We use a Gaussian function to estimate a weight that is equal to $1$ at the center of the projection in the $(x,y)$ plane, and decreases with eccentricity, with a rate that depends on a parameter $\sigma$. More precisely, if an object is centered on $(x,y,z)\in \R_+\times \R^+\times [0,\frac{1}{c}[$ in the projective chart of the subject with parameter $c$, then the attention focalisation weight is,

\begin{equation}
\gamma(x,y)=e^{-\frac{x^2+y^2}{\sigma^2}}
\end{equation}

\subsection{Inverse inference on preferences}\label{pcm:inverse-inference}

We shall now detail how the internal parameters, which for us corresponds to the different contexts ($\Gamma$), are computed. \\

The preference tensor is inferred by a subject from the expressed emotions of other agents as follows.
A subject, $s$, will use the expressed emotions (here the valence) of another agent $a\in A$ to update the preferences it attributes to the other agent in its preference tensor $(p^{t+1}_{s a e}, a\in A\setminus \{s\}, e\in E)$ at the next time step. Updated preferences $(p^{t+1}_{s a e}, a\in A\setminus \{s\}, e\in E)$ are,

\begin{equation}
p^{t+1}_{sae} = p^t_{sae}+g(v^t_{a},p^t_{sae})c_{sae}
\end{equation}

where $c_{sae}$ is the combined certainty of $s$ towards $a$ towards $e$:

\begin{equation}
c_{sae}=(1-\sigma_{sa})(1-\sigma_{ae})
\end{equation}

Where $1-\sigma_{ab}$ is the certainty and $\sigma_{ab}$ is the uncertainty. We present more details about uncertainty in subsection \ref{subsec:uncertainty}.\\

$g$ is a function that computes the gain to apply to update the current preference $q\in [0,1]$ as,
\begin{equation}
g(v,q) = -q_n + (q_n - q)\alpha + \frac{1}{1+e^{-\lambda f(v)}}
\end{equation}

where $\alpha$ and $\lambda$ are parameters (i.e. $\alpha = 1$ and $\lambda = 25$) and $f$ is used to modulate the sensory gain of the perceived valence. $f(x) = x$ can be used to avoid modulating the valence, but using $f(x) = x^3$ is a sound choice as it will enhance the update strength of the valence for extremities ($-1$ and $1$) and reduce it around neutral expressions ($0$) (seeing another agent express a positive emotion with small intensity will only have a very small weight on the update of preferences, whereas a strong intensity will have a quite strong update weight). $q_n$ is a reference preference, defining a neutral level, which we assume to be the same for everybody, above which entities appear as positive and below which they appear as negative. \\

The subject updates its own preferences ($q_{s,e}$) based on the influence others have on it (see subsection \ref{tom-factors}), 
\begin{equation}
    q^{t+1}_{s,e} = I^p_{sss}q^t_{s,e}+\underset{\substack{a \in A\\a \neq s}}{\sum}I_{ssa}^p p^{t}_{s,a,e}
\end{equation}

\subsection{Action of the Field of Consciousness on appraisal}\label{def-perceived-value-uncertainty}

The way a subject selects its actions with respect to prior preferences depends on its appraisal of the external world. Note that this also entails a variable uncertainty regarding sensory evidence that translates into uncertainty in its appraisal. In the first paragraph of Subsection \ref{action-decision}, we will explain how this uncertainty is integrated, in a somewhat natural manner, in the decision making process. But first, let us explain how the appraisal works. \\

\subsubsection{Perceived value}\label{spatial-statistic} 

 We explained how the perspectival properties of the FoC induce an effect of shrinking on the apparent size of entities that are far from the subject point of view, and can be hypothesized to play a direct role in the affective appraisal of those entities, and as a result act as a motivational factor. Let us recall that from Theorem \ref{Stevens-law}, the ratio of the apparent volume of an entity in the FoC, $v_p$, and its real volume, $v$, varies as the inverse of the quartic power of depth from the subject, i.e.

\begin{equation}
\frac{v_p}{v}\sim \frac{1}{(cz)^4}
\end{equation}

To recover documented effects of inverse distance on appraisal from intrinsic properties of projective geometry, we therefore quantify the quadric root of the estimated volume of the entity as a measure of apparent linear size. For each entity considered by an agent, we weight the preference for the entity with that quantity. To simplify modelling, we consider the space that is not occupied by entities as homogeneous with respect to preference. It is assigned a neutral preference that we shall note $q_n$, which is assumed to have the same value for all agents. We then weight the preference $p=q_{s,e}$ for a given entity $e\in E\setminus {s}$ by the volume it occupies in the FoC; this new quantity $\mu_s(e)$ is the perceived value of the entity,

\begin{equation}
\mu= p\gamma\frac{v_p^{1/4}}{v_{tot}^{1/4}}+ q_n(1-\gamma\frac{v_p^{1/4}}{v_{tot}^{1/4}})
\end{equation}

$q_n$ is the neutral preference, $\gamma\in [0,1]$ is the attentional focalisation factor that takes into account `eccentricity', in other words,
\begin{equation}
\gamma= \gamma(x,y,z)
\end{equation}

with $(x,y,z)$ the point at which the entity is centered and $v_{tot}$ a reference volume, e.g. the total volume encompassed by the FoC (taking into account the clipping of the plane $(x,y)$). \\

For a given subject $s$, $\mu_s(e)$ is a function of the configuration of $e$ in the external space, i.e. $X_e$, and depends on the frame, $\psi_s$, of $s$ and on its preference vector $q_s$; this can be noted as $\mu_{s,\psi_s,q_s}(X_e)$. \\


\subsection{Action 1 : Emotion expression}\label{emo-express}

In this implementation of the model, we use a rather simplistic model of emotion expression, which is a direct proxy of the preference of the subject. Thus we do not assume that the expression of emotions by the subject is itself part of its optimisation process (a subject will always express an emotion that is consistent with its state of appraisal). A subject uses the perceived values, $(\mu_s(e),e\in E)$, as a base to express emotions. The emotions are expressed along two dimensions: positive and negative emotions. The expressed valence is simply the difference between the two quantities. The relative perceived preferences $\mu'_s$ are defined as,
\begin{equation}
\mu'_s(e)=\frac{\mu_s(e)-q_{n}}{1-q_{n}}
\end{equation}
and used to express emotions. Here again the reference neutral preference $q_n$ appears; a refinement would be to consider that each subject has its own neutral level of preference, $q_{s,n}$ (an agent may consider that a certain level of well-being is overall neutral with respect to its own expectations, when another agent might consider that it is already a level of well-being that is substantially above baseline). The intensity of positive emotion expression is given by the following expression,
\begin{equation}\label{positive- emotion}
      e_s^{+}   =\frac{\sum_{\nu\in M_s^{+}} (\nu^n )^\frac{1}{n}}{|M_s^{+}|^{1/n}}
\end{equation}
where $M_s^{+}$ are the values of $\mu_s$ above $q_n$, 
\begin{equation}
M_s^{+} = \{ \mu'_s(e) \geq 0 \vert e\in E\}
\end{equation}
and $n$ is a constant (the choice of which is explained below).\\

Likewise, the intensity of the negative emotion expression is,

\begin{equation}\label{negative-emotion}
e_s^{-}   =\frac{\sum_{\nu\in M_s^{-}} (\nu^n )^\frac{1}{n}}{\vert M_s^{-}\vert^{1/n}}
\end{equation}
where $M_s^{'-}$ are the values of $\mu_s$ below $q_n$,

\begin{equation}
M_s^{-} = \{ \mu^{'}_s(e)\leq 0 \vert e\in E\}
\end{equation}

The expressed valence is the difference between $e_s^{+}$ and $e_s^{-}$ as,

\begin{equation}
v_s = e_s^{+} - e_s^{-}
\end{equation}

Let us now explain how one can set the value of $n$ and more generally why we chose the previous formula for positive and negative emotion expression (Equation \ref{positive- emotion}, \ref{negative-emotion}). Let us consider a situation in which the subject faces an object that it really likes but surrounded by neutral objects (for which $\mu(e)=0.5$, i.e. the subject is indifferent to these objects). It seems reasonable to assume that, in this situation, the subject should be fairly happy. One option would be to consider $n=1$ in Equation \ref{positive- emotion}, which would mean that the objects which the subject is indifferent to are as important as the one it really enjoys in its appreciation of the situation, which does not seem plausible. Another option could be to just take the maximum of $\mu(e)$. However, if the scene is crowded with objects that the subject is rather indifferent to, this would entail, which we may not want to assume, that the level of positive emotion would be exactly the same as when the liked object is on its own. The first option corresponds to considering the $l^1$ norm on $\R^{M_s^{+}}$, and the second option to considering the $l^\infty$; fortunately there are many norms between these two norms, which are called the $l^p$ norms, here $n$ in Equation \ref{positive- emotion} corresponds to these $p$.\\

We choose $n$ based on the following scheme. Let us consider that there is a fraction $\alpha\in [0,1]$ of the objects to which an agent is indifferent and that the agent really likes all the other objects, i.e. that for these objects the values, $\mu_s^{'}$, are respectively $0$ and $1$. In such case, we still expect the positive emotion to be high for example at a reference level $d\in [0,1]$; we can write

\begin{equation}
 e_s^{+}  = \alpha^{\frac{1}{n}}+0=d
\end{equation}

then, 

\begin{equation}
n =\frac{\ln \alpha}{\ln d}
\end{equation}

We chose $\alpha= 1/2$ and $d=0.9$ which gives us $n=7$.

\subsection{Action 2: Choice of best move} \label{action-decision}

The method of optimisation for the choice of the action of moving based on free energy (Equation (\ref{appendix:PCM:total-free-energy})) entails a certain complexity, in particular because of the ToM component of the model, which requires a graded, stepwise presentation to explain its different components, from the simplest cases to the more complex ones.\\

\subsubsection{$\mu$ and $\sigma$ parameterise probability distributions} 

The aim of the subject is to maximize positive outcomes, which correspond to high values of perceived value and low uncertainty about it (see Section \ref{subsec:uncertainty} for the definition of uncertainty). In order to achieve this aim, the subject needs to quantify the extent to which its state departs from the ideal configuration of high perceived value and low uncertainty, $(\mu_0,\sigma_0)$. A natural way to do so is to embed these parameters in the probability simplex $\mathbb{P}([0,1])$ of probability laws over the perceived value, and to quantify the ``distance" of $(\mu,\sigma)$ from $(\mu_0,\sigma_0)$ by using the relative entropy corresponding to the Kullback-Leibler divergence.\\

For a given subject, $s$, and a given entity, $e$, different from the subject, $\mu_s(e)$ and $\sigma_s(e)$ parameterise a probability law on $[0,1]$ centered on $\mu_s(e)$ and of dispersion $\sigma_s(e)$, which we shall note as $Q(\lambda|\mu_s(e), \sigma_s(e))$ with $\lambda\in [0,1]$.\\

One can for instance consider $Q(.|\mu_s(e),\sigma_s(e))$ to be a truncated Gaussian of mean $\mu_e$ and variance $\sigma_e$. The prior is a law of high certainty centered on a high weighted preference, i.e on $\lambda$ close to $1$; we shall note it as $P$. Here for simplicity sake, it is assumed to be the same for every agent.\\

We shall now explain how a subject selects its next action. The quantity we want the subject to minimise in the end (Equation (\ref{appendix:PCM:total-free-energy})) integrates multiple terms of free energy.\\

\subsubsection{Simplest case: subject without Theory of Mind}

In the simplest case, let us assume that the subject cannot take into consideration the perspective of other agents on the situation nor predict their reactions and therefore that it cannot establish a strategy by anticipating the others' actions.\\

Let us also start by assuming that the subject is confronted with only one other entity of configuration $X_e$ in the ambient space $E$. The subject can choose different actions. Here the repertory of actions is reduced to moving in a given direction, orienting in a given direction, or not moving. Let us note the set of possible moves as $M$. The subject will attempt to minimise the Kullback–Leibler divergence between $Q(.|\mu_s(X_e), \sigma_s(X_e))$, denoted simply $Q(.|\mu_s(e), \sigma_s(e))$, and a given probability law $P$ that represents an ideal state in which the perceived preference of the agent is high and the uncertainty low, i.e.,

\begin{equation}\label{first-equation-action}
m^*=\underset{m\in M}{\min}\DKL(Q\left(.|\mu_{s,\psi_s(m),q_s}(e), \sigma_{s,\psi_s(m)}(e)\right)\Vert P)
\end{equation}

\begin{rem}\label{appendix-rem-fe-dkl}
We will make the abuse of calling free energy the Kullback-Leibler divergences or sums of these divergences; the reason why we do this is to insist on the link between the free energy and this divergence that we will explain now. The free energy of a probability density $Q\in \mathbb{P}(\Omega)$ with respect to a random variable $H:\Omega\to \R$, with $\Omega$ a finite space is,

\begin{equation}
F(Q)= \E_Q[H] - S(Q)
\end{equation}

Let, $Z=\sum_{\omega\in \Omega} e^{-H(\omega)}$ and, 

\begin{equation}
P= e^{-H}/Z
\end{equation}

On can rewrite the free energy as follows,
\begin{equation}
F(Q)=  \DKL(Q\Vert P)-\ln Z
\end{equation}

Minimizing the free energy with respect to $Q$ is the same than minimizing the Kullback-Leibler divergence. In the literature of active inference one most often encounters the notion of free energy this is why we kept this denomination.
\end{rem}

It is important to emphasize that each move changes the subject's projective chart, i.e. FoC, which directly influences the perceived value and uncertainty on it, $\mu$ and $\sigma$ (see Section \ref{def-perceived-value-uncertainty}). This dependency is indicated by the index $\psi(m)$ in Equation (\ref{first-equation-action}). What is implicit in this expression is that the map from moves to projective transformations, $\psi_s(m)$, depends on the Euclidean frame $\pi_s$ of the subject.\\

When there are more than one other entity than the subject, the minimisation is performed on a weighted sum, with weight $(\omega_s(e),e\neq s)$, of the Kullback-Leibler divergences. Let us denote $X$ the vector of configuration of the entities,

\begin{equation}\label{cost_without_strategy}
L_s(m_s,q_s,X)=\underset{\substack{e\in E\\e\neq s}}{\sum} \omega_s(e) \DKL(Q(.|\mu_{s,\psi_s(m),q_s}(e), \sigma_{s,\psi_s(m)}(e))\Vert P)
\end{equation}

\subsubsection{Intermediary case: Theory of Mind without prediction of the actions of others}

Let us consider a case in which a subject can perform social-affective appraisal of others perspectives, i.e. a basic component of ToM according to the model, but without predicting the actions of other agents based on it. In other words, the subject may understand and be influenced to some extent by the affective states it attributes to others but will not be able to develop competitive strategies based on the computation of others' expected moves. 

Note that the extent to which the subject may tend to let itself be influenced by others could be a contextual state or a (personality) trait. In the following simulations, we consider such influence as a trait. Also note that the subject can also entertain beliefs about how other agents let themselves be influenced by other agents. The parameter that accounts for how much a subject takes the opinion of other agents into account in its own actions is given by the action influence tensor.\\

Thus if the subject takes into consideration the opinion of the other agents in making its move but still cannot predict their moves,

\begin{equation}\label{cost-function}
C_s(m_s,p_s,X)=\underset{a\in A}{\sum}\underset{\substack{e\in E\\e\neq a}}{\sum}  \omega_{s,a,e}\DKL(Q(.|\mu_{a,\psi_a(m),p_{s,a,.}}(e), \sigma_{a,\psi_a(m)}(e))\Vert P)
\end{equation}

where for any agent $a\in A$ and entity $e\in E$,

\begin{equation}
\omega_{s,a,e}=J^m_{s,a}\alpha_{a,e}
\end{equation}

and $\alpha_{a,e}=\frac{1}{\vert E \vert-1}$; the best move if the one that minimises $C$. This cost function is chosen when an agent can only do first order theory of mind. For each subject, minimising the energies gives a unique possible move:




\begin{rem}
Let us remark that in Equation \ref{cost-function}, a given subject uses its own beliefs on what the preferences of the other agents are, i.e. it uses $p$ and not $q$.
\end{rem}


\subsubsection{Remark on update rules}

In order for a subject to anticipate possible future outcomes, it needs to consider how preferences may be updated across several time steps but also to keep track of its own actions and those of others (moves and emotions). It assumes that the other agents do not predict others' actions. In the previous section we implicitly assumed that the spatial configuration of all the agents and objects were given, i.e. that at time $0$ $(X^0_e,e\in E)$ is given. As we want to describe what happens during several time step we will be very precise on what changes with time and what doesn't. \\

The configuration in the real space of each agent $(X_a^t, a\in A)$, the preference tensor $p^t$ and the move of each agent $(m^t_a,a\in A)$ vary with time; the influence tensor doesn't change with time. For simplicity it is assumed that the objects are kept fix, in other word agents cannot act on them. The evolution of the variables for an agent $s$ with first order ToM  are prescribed by the following rules,

\begin{align}
p^{t+1}_{s..}= &f(p^t_{s..}, X^{t+1},J^{p}_{s.})\label{strategy-update-1}\\
m^{t+1}_s= &\underset{m\in M_s}{\argmin} C_s(m, p^t_{s..}, X^t, J^{a}_{s.}) \label{strategy-update-3}\\
X^{t+1}_s=& f_1(X^t_s, m^t_s)\label{strategy-update-2}
\end{align}

\bigskip

\subsubsection{More sophisticated case: Theory of Mind with prediction of the actions of others}

When a subject $s\in A$ establishes a strategy, in other words that in our model it can do second order ToM, it iterates $n$ times the update rules in Equations (\ref{strategy-update-1}), (\ref{strategy-update-3}),and (\ref{strategy-update-2}), for all other agents but itself, taking as input the sequence of moves it makes, and it has an idea of what $(X_a^k, p_{a,..}^k, m^k_a)$ is for $k\in [n]$ and any agent $a$. However the subject does not have access to the true $p_{a,..}^0$ nor does it have access to $J_{a.}^p, J_{a.}^m$, instead it initiates its predictions with for each agent $a\neq s$ $\tilde{p}_a, \tilde{I}^p_{aa.}, \tilde{I}^m_{aa.}$ chosen as,

\begin{align}
\tilde{p}^0_{a..}= p^0_{s..}\\
\tilde{I}_{aa.}^p= I^p_{sa.}\\
\tilde{I}^m_{aa.}=I^m_{sa.} 
\end{align}

The number of steps the subject can predict, $n$, is what we call the depth of processing and we denote it as $dp$.  Let us denote the integer interval $[1,n]\subseteq \N$ as $[n]$. From this prediction, it will decide what sequence of moves $(m^{k*}_s, k\in [n])$ is the best, and the first move of this sequence will be its true move (see Equation (\ref{appendix:strategy-best-move})). Let us insist on the fact that the evolution of all the variables $X_a^k, p^k_a, m^k_a$ for $a\neq s$ and $k\in [n]$ is prescribed by Equations (\ref{strategy-update-1}), (\ref{strategy-update-3}), and (\ref{strategy-update-2}), but the subject's moves are kept free, i.e. not constrained by Equation (\ref{strategy-update-3}), and will be the variables on which the optimisation will be carried out. These collection of moves, or paths $m_s=(m^k_s,k\in [n])$, are selected from a set of paths $\Path$. For any $m\in \Path$,

\begin{equation}\label{appendix:PCM:total-free-energy}\tag{FE}
FE(m)=\underset{k\in [n]}{\sum}a_{k} C_s(m^k,p^k_s,X^k)
\end{equation}

where $\sum_{k=1...n} a_k=1$ and $a_k$ are chosen here to be $a_k=\frac{1}{n}$.\\

The best next move is then the first step move of the best path of $n$-moves. Let,

\begin{equation}
m^*=\underset{m\in \Path}{\argmin}{} FE(m)
\end{equation}

then the best next move is,

\begin{equation}\label{appendix:strategy-best-move}
b^*=m^*[1]
\end{equation}

Different sets of paths ($\Path$) can be considered, for example one can consider at each step the $m\in \N^*$ best moves.\\

\section{Behavioural metrics and data analysis}\label{section:behav-met-analysis}

\subsection{Approach-Avoidance metrics} 
Approach-Avoidance metrics corresponded to an overt behavioural estimate of approach-avoidance behaviours from an agent $i$ towards an entity $j$. We used two metrics. The first one, a position or "motion" metrics, corresponded to the Euclidean distance of the agent from the entity, with higher values when the agent was far, and lower values when it was close to the entity. The second one, a look-at metrics, quantified approach-avoidance in terms of the overt attention or orientation of the agent with respect to the entity. It was defined as the dot product $a_ij=dot(u,w)$ between the unit vector $u$ pointing from the agent location towards the entity, and the unit vector $w$ representing the current, real or imaginary orientation of the agent. It is equal to $1$ if the two vectors point in the same direction (the agent look at the entity), $0$ if they are orthogonal, and $-1$ if they point in opposite directions. 

\subsection{Joint Attention metrics}
The joint Attention metrics corresponded to an overt behavioural estimate of joint attention behaviours, between a pair of agents $(i,j)$ towards an entity $k$. It was defined as $ja_ijk=a_ik.a_jk$, the product of the look-at metrics of the two agents with respect to the entity. It is equal to $1$ when the two agents look simultaneously at the same entity, or in the opposite direction from that entity, tends towards $0$ when the orientation of at least one agent is orthogonal to the direction of the entity for this agent, and it is $-1$ when one agent look at the entity and the other look in the opposite direction.

\subsection{Data analysis}

These metrics were computed over time for each simulation, across conditions and across trials. This allowed us to compute mean and standard errors over the metrics as a function of time and condition. We were also interested in the end state of the agents (at the end of the simulations between iterations $60$ and $70$), as an index of the efficacy of the agent's global optimisation process, in order to compare the performance of agents across conditions.   

Statistical tests (t-tests based on the General Linear Model) were performed within the end state period, after averaging results within the corresponding time window for each condition, in order to compare conditions in terms of final outcomes (even though relevant differences between conditions turned out to be quite strong, and it was obvious that they would be significant).

\section{Algorithm description with pseudo-code}
\label{appendix:algorithm}

\subsection{Settings}

The algorithm was implemented in C\# with the future aim of interfacing the code with the game engine Unity3D. The simulations used virtual representations of the robots and cube objects coming with the robots, as triangular tessellations of their envelop, which were simplified to a parallelepipedic bounding box. 

\subsection{Algorithm overview}\label{section:algorithm-pc}

\subsubsection{Algorithm base}\label{section:pcm-alg-base}

The algorithm focused on agents and their states. Its goal was to allow an agent at a time $t$ in a state $s_t$ to evaluate which action $a \in A(s_t)$ (where $A(s_t)$ are all the actions available for the agent in the state $s_t$), would place the agent in the best possible state $s^{'*}$. The best possible state $s^{'*}$ was defined as the state $s' \in S'$ for which the associated free energy $fe(s')$ is minimum. Algorithm \ref{alg:alg_1} summarizes that process. Note: in case of multiple results with identical levels of free energy, the $argmin^*$ function (algorithm~\ref{alg:alg_1}, line~\ref{cline:argmin}) picked a solution randomly.\\

\begin{algorithm}[H]
\label{alg:alg_1}
\caption{Core}
  \SetKwProg{NextState}{NextState}{}{}
  \SetKwInOut{Input}{inputs}
  \SetKwInOut{Output}{output}
\NextState{$(s,A)$}{
\Input{Agent's current state $s$; available actions $A$ for agent in state $s$}
\Output{Agent's next best state $s^{'*}$}
 $S' \gets A(s)$\\
 $FE(S') \gets ComputeFreeEnergy(S')$\\
 $best \gets argmin^*(FE(S'))$\label{cline:argmin}\\
 $s^{'*} \gets S'(best)$\\
 \KwRet{$s^{'*}$}
 }
\end{algorithm}

\subsubsection{States}

A state was a representation of the world as perceived by an agent. Basically, a state consisted of beliefs about:
\begin{itemize}
    \item[a.] entities (i.e. agents and objects) positions in space;
    \item[b.] preferences associated to those agents and objects;
    \item[c.] other agents preferences and beliefs;
    \item[d.] influence of agents on each other;
\end{itemize}

\subsubsection{Actions}

An action was defined as a function that takes a state $s$ as an input and produces a new state $s'$. For example, an action such as "move forward" created a $s'$ where the position of the agent was ahead of its previous position in $s$.

\subsubsection{Free energy computation process}

The main part of this algorithm was to compute the free energy associated with a given state $s$. Algorithm \ref{alg:alg_2} presents an overview of that process.\\

\begin{algorithm}[H]
\label{alg:alg_2}
\caption{Free energy computation process}
  \SetKwProg{ComputeFreeEnergy}{ComputeFreeEnergy}{}{}
  \SetKwInOut{Input}{inputs}
  \SetKwInOut{Output}{output}
\ComputeFreeEnergy{$(s)$}{
\Input{Agent state $s$}
\Output{Free energy $fe$ associated with $s$}
 $s' \gets ComputeUncertainty(s)$\label{cline:uncertainty}\\
 $s^{''} \gets UpdatePreferences(s')$\label{cline:prefs}\\
 $s^{'''} \gets GetPerceivedValue(s^{''})$\label{cline:spatial}\\
 $fe \gets fe(s^{'''})$\label{cline:fe}\\
 \KwRet{$fe$}
 }
\end{algorithm}

Lines \ref{cline:uncertainty} to \ref{cline:fe} are formally described in section~\ref{mat-and-meth}
\begin{itemize}
    \item[a.] $ComputeUncertainty$ (line~\ref{cline:uncertainty}) refers to \ref{subsec:uncertainty};
    \item[b.] $UpdatePreferences$ (line~\ref{cline:prefs}) refers to \ref{pcm:inverse-inference};
    \item[c.] $GetPerceivedValue$ (line~\ref{cline:spatial}) refers to \ref{spatial-stat};
\end{itemize}
Free energy was computed following Equation \ref{f-e} using the uncertainty $\sigma$ ($\sigma \gets ComputeUncertainty$) and $mu$ ($\mu \gets GetPerceivedValue$).

\subsubsection{Simulation algorithm skeleton}

Simulations were done using algorithm \ref{alg:alg_1} as their core. Algorithm \ref{alg:alg_3} describes a general skeleton for a simulation.\\

\begin{algorithm}[H]
\label{alg:alg_3}
\caption{Simulation skeleton}
  \SetKwProg{Run}{Run}{}{}
  \SetKwInOut{Params}{parameters}
\Run{}{
\Params{Initial entities positions in the world $W$; Initial agents states $S$;}
$S' \gets S$\\
$W' \gets W$\\
\While{simulation is running}{
    $render(W')$\label{cline:render}\\
    $S'' \gets predict(S')$\label{cline:predict}\\
    $W'' \gets updateEntitiesPositions(W',S'') $\label{cline:update-world}\\
    $S'' \gets correctPredictions(S'',W'')$\label{cline:correct}\\
    $S' \gets S''$\\
    $W' \gets W''$\\
 }
 }
\end{algorithm}

A $render$ function (line~\ref{cline:render}) could be used for display purposes. It renders the current positions of entities, and the emotions and predictions of the agents about themselves and others.

The $predict$ function (line~\ref{cline:predict}) was the base of the algorithm. Basically, it applied the core algorithm (cf algorithm~\ref{alg:alg_1}) for each agent.
Algorithm~\ref{alg:alg_4} describes this function where $availableActions$ (line~\ref{cline:available-actions}) returns a list of available actions $A' \subset A$ for the state $s$.\\

\begin{algorithm}[H]
\label{alg:alg_4}
\caption{Prediction of next state}
  \SetKwProg{Predict}{Predict}{}{}
  \SetKwInOut{Input}{inputs}
  \SetKwInOut{Params}{parameters}
  \SetKwInOut{Output}{output}
\Predict{$(S)$}{
\Params{Set of actions $A$;}
\Input{Agents' states $S$;}
\Output{Best agents' states $S^{'*}$;}
$S^{'*} \gets \emptyset$\\
\ForEach{Agent state $s \in S$}{
    $A' \gets availableActions(s,A)$\label{cline:available-actions}\\
    $s^{'*} \gets NextState(s,A')$\\
    $S^{'*} \gets S^{'*} + s^{'*}$
 }
  \KwRet{$S^{'*}$}

 }
\end{algorithm}

$updateEntitiesPositions$ (algorithm~\ref{alg:alg_3}, line~\ref{cline:update-world}) used the predicted desired states $S^{'*}$ and tried to place the agents in such states. In short, it tried for each agent to execute the action $a$ ($a = a(s) \to s^{'*}$). The execution of $a$ in the external ambient world could succeed or fail. For example, if $a$ is "move forward X meters" and the path is clear, $a$ is then doable and the agent will move forward. However, if an obstacle that the agent did not anticipate blocks the path, $a$ will fail (or not complete). Thus, if an agent's belief about the world was close to the actual external ambient world, its actions would almost always succeed and the agent would end up in a state $s'$ that is almost equal to the predicted $s^{'*}$ ($a(s) \to s', s' \sim s^{'*}$). Otherwise, its predictions would always be bad and it would end up in states $s'$ that are far from the expected state $s^{'*}$ ($a(s) \to s', s' \nsim s^{'*}$).

$correctPredictions$ (algorithm~\ref{alg:alg_3}, line~\ref{cline:correct}) was implemented to correct agents' states. After the execution of the actions, agents were witnesses of the results of those actions. Their inner states were corrected according to what they could actually see (sensory feedback). For example, if an agent expected another agent to move and could see that it actually did not happen, it could then correct its inner state. However, an agent was unable to correct what it could not witness. For example, if something happened in its back, it would stick with its beliefs and prediction, even if they were wrong.

\subsubsection{Prediction of others and planning}

On top of that base algorithm (section~\ref{section:pcm-alg-base}), two layers were added to enable agents to predict others and to foresee or plan actions.

\bigskip

\textbf{Prediction of others}\\

As presented in algorithm~\ref{alg:alg_4}, the current state $s$ of the agent was used to determine the next best state $s^{'*}$. To be proactive, an agent could try to guess what the other agents would be doing before choosing an action. In doing so, the agent would be able (provided its assumptions were good enough) to end up in a state $s'$ that was much closer to $s^{'*}$. Algorithm~\ref{alg:alg_5} describes this process.\\

\begin{algorithm}[H]
\label{alg:alg_5}
\caption{Prediction of next state while predicting others}
  \SetKwProg{Predict}{PredictProactive}{}{}
  \SetKwInOut{Input}{inputs}
  \SetKwInOut{Params}{parameters}
  \SetKwInOut{Output}{output}
\Predict{$(s)$}{
\Params{Set of actions $A$;}
\Input{Agent state $s$;}
\Output{Best agent state $s^{'*}$;}
$s^{p} \gets s$\\
$beliefs(s^{p}) \gets \emptyset$\\
\ForEach{Agent state $s_i \in beliefs(s)$}{
    $A' \gets availableActions(s_i,A)$\\
    $s_i^{'*} \gets NextState(s_i,A')$\\
    $beliefs(s^{p}) \gets beliefs(s^{p}) + s_i^{'*}$\\
 }
 $A' \gets availableActions(s^{p},A)$\\
 $s^{'*} \gets NextState(s^{p},A')$\\
 \KwRet{$s^{'*}$}
 }
\end{algorithm}

\bigskip

\textbf{Planning}\\

Alongside trying to predict others, an agent could be given the ability to plan a chain of actions. This made, for instance, an agent more resilient when facing obstacles.

To illustrate the approach, let us assume a simple situation with an agent $a$ in a room. In this room, there is a fruit and a closed door to go outside. Outside the room there is a very large box of chocolates. If $a$ always choose the next best action without planning, it will always end up approaching and eating the fruit. That is because the free energy associated with eating the fruit is lower than the one associated with opening the door. If given the possibility to plan, it could imagine a chain of actions leading to the box of chocolate: open the door $\to$ eat the chocolate. This is presented in algorithm~\ref{alg:alg_6} and works using a tree as a data structure. In this tree, the root is the current state and other nodes are expected states modified by actions. Each node has an expected free energy associated with it. Once the tree is built, branches (from the root to a leaf) are evaluated and the best branch is picked. Note that there are many ways to pick the best branch(\ref{cline:best-branch}), for example picking the one with the lowest mean free energy (which we used for the simulations). The next best state is then defined as the first node of the best branch.\\

\begin{algorithm}[H]
\label{alg:alg_6}
\caption{Planning}
  \SetKwProg{ExploreActions}{ExploreActions}{}{}
  \SetKwInOut{Input}{inputs}
  \SetKwInOut{Output}{output}
\ExploreActions{$(s,A,d,n)$}{
\Input{Agent's current state $s$; actions $A$; depth of the tree $d$; parent node $n$}
\Output{Agent's next best state $s^{'*}$}
 $n' \gets node(s,ComputeFreeEnergy(s))$\\
 $children(n) \gets children(n) + n'$\\
 $A' \gets availableActions(s,A)$\\
 $S' \gets A'(s)$\\
 \uIf{$d > 1$}{
    \ForEach{$s' \in S'$}{
        $ExploreActions(s',A,d-1,n')$\\
    }
 }
 }
\SetKwProg{NextState}{NextState}{}{}
\NextState{$(s,A)$}{
\Input{Agent's current state $s$; actions $A$}
\Output{Agent's next best state $s^{'*}$}
 $root \gets node()$\\
 $ExploreActions(s,A,d,root)$\\
 $b \gets BestBranch(root)$\label{cline:best-branch}\\
 $s^{'*} \gets FirstNode(b)$\\
 \KwRet{$s^{'*}$}
 }

\end{algorithm}

\bigskip

\subsubsection{Note on performance}

Algorithm~\ref{alg:alg_6} is a polynomial time algorithm ($O(n^\alpha)$ where $n$ is the number of actions and $\alpha$ the tree depth). Some strategies could be used to mitigate that computational burden, for example:
\begin{itemize}
    \item[a.] Pruning the tree while building it: the deeper the three, the less branches are kept.
    \item[b.] "Blurring" while planning: instead of having a planned action corresponding exactly to the action executed in the external ambient world, the predictions could use compound actions or quicker actions. For example, movement actions could be faster while predicting ($speed(agent) < speed_{pred}(agent)$) and accelerate with tree depth. That allowed agents to cover more space by imagination with a lower depth.
\end{itemize}

It is important to note that this complexity is due to a certain immaturity of the algorithm and will be greatly reduced when a variety of heuristics will be integrated (see Discussion section). Also note that we are currently in the process of porting the algorithm to a small dedicated cluster of computers with hybrid CPU/GPU architecture. 

\bigskip

\section{(Anki) Cozmo Robot controller and spatial referential}
\label{appendix:cozmos}

In order to make the PCM more "tangible", we decided to bring it to the real world using Cozmo robots. Those robots from Anki are primarily toys connected to a smartphone. However, a complete SDK to control them is provided.

\subsection{Cozmo robots' features and SDK}

Cozmo robots are able to "see" things using a low-resolution camera, to talk (using text-to-speech) and make sounds using a speaker, and to display faces or various patterns on their front screen. They can move around using their wheels, and use their arms to interact with objects. A smartphone running the Cozmo application is required to pilot them, using the robot's WIFI network. The robots have no internal computational resources on board to run an external algorithm. 
The smartphone application can be switched to "SDK mode" which allows to run custom code. The SDK is available in Python and requires a computer connected to a smartphone to work.

\subsection{Multiple robots and common spatial reference system}
Cozmo robots are not designed to interact between each others. However, it is still possible to connect multiple smartphone connected to different Cozmo robots to a computer running external code, to control multiple robots. To be able to efficiently control the robots, we created a common coordinate system. The robots already have a internal coordinate system that they update based on the movement of their wheels. The common coordinate system was built using Aruco markers (see \cite{aruco1}, \cite{aruco2}), which the robots could process through their camera to control and update their internal positions. Once a robot could locate itself, it knew how to transform its own coordinate system into the common coordinate system, and then was able to move around, even when it would not actually see any marker.

\subsection{Interfacing Cozmo robots and the PCM}

To interface the PCM with Cozmo robots, we added a server to the python code controlling the robots. The server sent spatial information and sensory inputs from robots to the PCM. Those data could then be used to inform the internal models of the agents about the current state of the world. Knowing the states, the algorithm could run and choose the next action for each agent. Those actions could then be forwarded via the server to the robots. In short, we could map functions described in the algorithm (see Appendix \ref{appendix:algorithm}) to the action of the robots.

\subsection{Note on offline and online modes}

The mechanisms of interfacing and control presented in the previous paragraph were implemented as an online mode for real time interactions, which does work. However, due to the complexity of the algorithm, real time performance could not be achieved at this point when using higher depth of processing ($dp>2$), which was required by the simulations. 

To run simulations more effectively and create smooth videos, we decided to switch to offline mode. We ran the simulations without using the robots and replayed them later. Basically, pre-computed positional and emotional data were fed back to the server and ran on the robots. The robots were thus used for illustration purposes only.

\end{appendices}

\newpage

\bibliographystyle{ieeetr}
\bibliography{biblio}

\begin{thebibliography}{100}

\bibitem{braitenberg1}
V.~Braitenberg, ``Vehicles: Experiments in synthetic psychology.,'' {\em MIT
  press.}, 1986.

\bibitem{dehaene2017consciousness}
S.~Dehaene, H.~Lau, and S.~Kouider, ``What is consciousness, and could machines
  have it?,'' {\em Science}, vol.~358, no.~6362, pp.~486--492, 2017.

\bibitem{rudrauf3}
D.~Rudrauf and M.~Debban{\'e}, ``Building a cybernetic model of
  psychopathology: beyond the metaphor,'' {\em Psychological Inquiry}, vol.~29,
  no.~3, pp.~156--164, 2018.

\bibitem{rudrauf4}
D.~Rudrauf, D.~Bennequin, I.~Granic, G.~Landini, K.~Friston, and K.~Williford,
  ``A mathematical model of embodied consciousness,'' {\em Journal of
  theoretical biology}, vol.~428, pp.~106--131, 2017.

\bibitem{rudrauf5}
D.~Rudrauf, D.~Bennequin, and K.~Williford, ``The moon illusion explained by
  the projective consciousness model,'' {\em Journal of Theoretical Biology},
  vol.~507, p.~110455, 2020.

\bibitem{williford}
K.~Williford, D.~Bennequin, K.~Friston, and D.~Rudrauf, ``The projective
  consciousness model and phenomenal selfhood,'' {\em Frontiers in Psychology},
  vol.~9, p.~2571, 2018.

\bibitem{haugeland}
J.~Haugeland, {\em Artificial intelligence: The very idea}.
\newblock MIT press, 1989.

\bibitem{varela}
F.~Varela, {\em Principles of biological autonomy}.
\newblock Appleton \& Lange, 1979.

\bibitem{bach1}
D.~R. Bach and P.~Dayan, ``Algorithms for survival: a comparative perspective
  on emotions,'' {\em Nature Reviews Neuroscience}, vol.~18, no.~5,
  pp.~311--319, 2017.

\bibitem{georgeff1998belief}
M.~Georgeff, B.~Pell, M.~Pollack, M.~Tambe, and M.~Wooldridge, ``The
  belief-desire-intention model of agency,'' in {\em International workshop on
  agent theories, architectures, and languages}, pp.~1--10, Springer, 1998.

\bibitem{broekens2008formal}
J.~Broekens, D.~Degroot, and W.~A. Kosters, ``Formal models of appraisal:
  Theory, specification, and computational model,'' {\em Cognitive Systems
  Research}, vol.~9, no.~3, pp.~173--197, 2008.

\bibitem{apps}
M.~A. Apps and M.~Tsakiris, ``The free-energy self: a predictive coding account
  of self-recognition,'' {\em Neuroscience \& Biobehavioral Reviews}, vol.~41,
  pp.~85--97, 2014.

\bibitem{friston2}
K.~Friston, ``The free-energy principle: a unified brain theory?,'' {\em Nature
  reviews neuroscience}, vol.~11, no.~2, pp.~127--138, 2010.

\bibitem{friston3}
K.~Friston, T.~FitzGerald, F.~Rigoli, P.~Schwartenbeck, and G.~Pezzulo,
  ``Active inference: a process theory,'' {\em Neural computation}, vol.~29,
  no.~1, pp.~1--49, 2017.

\bibitem{limanowski}
J.~Limanowski and F.~Blankenburg, ``Minimal self-models and the free energy
  principle,'' {\em Frontiers in human neuroscience}, vol.~7, p.~547, 2013.

\bibitem{seth4}
A.~K. Seth, ``The cybernetic bayesian brain.,'' {\em Open MIND. Frankfurt am
  Main: MIND Group.}, 2014.

\bibitem{feldman}
H.~Feldman and K.~J. Friston, ``Attention, uncertainty, and free-energy,'' {\em
  Frontiers in Human Neuroscience}, vol.~4, p.~215, 2010.

\bibitem{friston4}
K.~Friston, S.~Samothrakis, and R.~Montague, ``Active inference and agency:
  optimal control without cost functions,'' {\em Biological Cybernetics},
  vol.~106, no.~8--9, pp.~523--541, 2012.

\bibitem{brown}
H.~Brown, R.~A. Adams, I.~Parees, M.~Edwards, and K.~Friston, ``Active
  inference, sensory attenuation and illusions,'' {\em Cognitive processing},
  vol.~14, no.~4, pp.~411--427, 2013.

\bibitem{parr3}
T.~Parr and K.~J. Friston, ``The active construction of the visual world,''
  {\em Neuropsychologia 104: 92-101.}, vol.~28, pp.~92--101, 2017.

\bibitem{parr4}
T.~Parr and K.~J. Friston, ``The computational anatomy of visual neglect,''
  {\em Cerebral Cortex}, vol.~28, no.~2, pp.~777--790, 2018.

\bibitem{parr5}
T.~Parr, D.~A. Benrimoh, P.~Vincent, and K.~J. Friston, ``Precision and false
  perceptual inference,'' {\em Frontiers in integrative neuroscience}, vol.~12,
  p.~39, 2018.

\bibitem{parr6}
T.~Parr, G.~Rees, and K.~J. Friston, ``Computational neuropsychology and
  bayesian inference,'' {\em Frontiers in human neuroscience}, vol.~12, p.~61,
  2018.

\bibitem{veissiere}
S.~P. Veissi{\`e}re, A.~Constant, M.~J. Ramstead, K.~J. Friston, and L.~J.
  Kirmayer, ``Thinking through other minds: A variational approach to cognition
  and culture,'' {\em Behavioral and Brain Sciences}, vol.~43, 2020.

\bibitem{friston5}
K.~Friston and C.~Frith, ``A duet for one,'' {\em Conscious Cogn 36: 390-405.},
  2015.

\bibitem{constant}
A.~Constant, M.~J. Ramstead, S.~P. Veissiere, J.~O. Campbell, and K.~J.
  Friston, ``A variational approach to niche construction,'' {\em Journal of
  The Royal Society Interface}, vol.~15, no.~141, p.~20170685, 2018.

\bibitem{constant2}
A.~Constant, M.~J. Ramstead, S.~P. Veissi{\`e}re, and K.~Friston, ``Regimes of
  expectations: An active inference model of social conformity and human
  decision making,'' {\em Frontiers in psychology}, vol.~10, p.~679, 2019.

\bibitem{veissiere2020thinking}
S.~P. Veissi{\`e}re, A.~Constant, M.~J. Ramstead, K.~J. Friston, and L.~J.
  Kirmayer, ``Thinking through other minds: A variational approach to cognition
  and culture,'' {\em Behavioral and Brain Sciences}, vol.~43, 2020.

\bibitem{joffily}
M.~Joffily and G.~Coricelli, ``Emotional valence and the free-energy
  principle,'' {\em PLoS Comput Biol}, vol.~9, no.~6, p.~e1003094, 2013.

\bibitem{cunningham1}
W.~A. Cunningham, K.~A. Dunfield, and P.~E. Stillman, ``Emotional states from
  affective dynamics,'' {\em Emotion Review}, vol.~5, no.~4, pp.~344--355,
  2013.

\bibitem{hassabis}
D.~Hassabis, D.~Kumaran, C.~Summerfield, and M.~Botvinick,
  ``Neuroscience-inspired artificial intelligence,'' {\em Neuron, 95(2),
  245-258}, 2017.

\bibitem{bach}
D.~R. Bach and P.~Dayan, ``Algorithms for survival: a comparative perspective
  on emotions,'' {\em Nature Reviews Neuroscience}, vol.~18, no.~5,
  pp.~311--319, 2017.

\bibitem{velmans1991human}
M.~Velmans, ``Is human information processing conscious?,'' {\em Behavioural
  and Brain Sciences}, vol.~14, no.~4, pp.~651--726, 1991.

\bibitem{kihlstrom1996perception}
J.~F. Kihlstrom, ``Perception without awareness of what is perceived, learning
  without awareness of what is learned,'' {\em The science of consciousness},
  pp.~23--46, 1996.

\bibitem{doyle2011architecture}
J.~C. Doyle and M.~Csete, ``Architecture, constraints, and behavior,'' {\em
  Proceedings of the National Academy of Sciences}, vol.~108, no.~Supplement 3,
  pp.~15624--15630, 2011.

\bibitem{merker2013body}
B.~Merker, ``Body and world as phenomenal contents of the brain’s reality
  model,'' {\em The unity of mind, brain and world: Current perspectives on a
  science of consciousness}, pp.~7--42, 2013.

\bibitem{van2012unconscious}
S.~Van~Gaal and V.~A. Lamme, ``Unconscious high-level information processing:
  implication for neurobiological theories of consciousness,'' {\em The
  neuroscientist}, vol.~18, no.~3, pp.~287--301, 2012.

\bibitem{reggia}
J.~A. Reggia, ``The rise of machine consciousness: studying consciousness with
  computational models,'' {\em Neural Netw. 44, 112–131. doi:
  10.1016/j.neunet.2013.03.011}, 2013.

\bibitem{baars}
B.~Baars, {\em A Cognitive Theory of Consciousness}.
\newblock Cambridge: Cambridge University Press, 1988.

\bibitem{tononi}
G.~Tononi, ``An information integration theory of consciousness,'' {\em BMC
  Neuroscience}, vol.~5, no.~1, pp.~1--22, 2004.

\bibitem{tononi2016integrated}
G.~Tononi, M.~Boly, M.~Massimini, and C.~Koch, ``Integrated information theory:
  from consciousness to its physical substrate,'' {\em Nature Reviews
  Neuroscience}, vol.~17, no.~7, pp.~450--461, 2016.

\bibitem{merker2021integrated}
B.~Merker, K.~Williford, and D.~Rudrauf, ``The integrated information theory of
  consciousness: A case of mistaken identity,'' {\em Behavioral and Brain
  Sciences}, pp.~1--72, 2021.

\bibitem{baars1}
B.~Baars, {\em In the Theater of Consciousness: The Workspace of the Mind}.
\newblock New York: Oxford University Press., 1997.

\bibitem{dehaene2011global}
S.~Dehaene, J.-P. Changeux, and L.~Naccache, ``The global neuronal workspace
  model of conscious access: from neuronal architectures to clinical
  applications,'' {\em Characterizing consciousness: From cognition to the
  clinic?}, pp.~55--84, 2011.

\bibitem{bahrami2010optimally}
B.~Bahrami, K.~Olsen, P.~E. Latham, A.~Roepstorff, G.~Rees, and C.~D. Frith,
  ``Optimally interacting minds,'' {\em Science}, vol.~329, no.~5995,
  pp.~1081--1085, 2010.

\bibitem{aleksander}
I.~Aleksander, ``The world in my mind my mind in the world: Key mechanisms of
  consciousness in humans animals and machines.,'' {\em Exeter, Imprint
  Academic}, 2005.

\bibitem{seth2012interoceptive}
A.~K. Seth, K.~Suzuki, and H.~D. Critchley, ``An interoceptive predictive
  coding model of conscious presence,'' {\em Frontiers in psychology}, vol.~2,
  p.~395, 2012.

\bibitem{damasio1999feeling}
A.~R. Damasio, {\em The feeling of what happens: Body and emotion in the making
  of consciousness}.
\newblock Houghton Mifflin Harcourt, 1999.

\bibitem{man2019homeostasis}
K.~Man and A.~Damasio, ``Homeostasis and soft robotics in the design of feeling
  machines,'' {\em Nature Machine Intelligence}, vol.~1, no.~10, pp.~446--452,
  2019.

\bibitem{blanke2012multisensory}
O.~Blanke, ``Multisensory brain mechanisms of bodily self-consciousness,'' {\em
  Nature Reviews Neuroscience}, vol.~13, no.~8, pp.~556--571, 2012.

\bibitem{seth}
A.~K. Seth, ``Consciousness: The last 50 years (and the next),'' {\em Brain and
  neuroscience advances, 2, 2398212818816019.}, 2018.

\bibitem{seth2}
A.~K. Seth and J.~Hohwy, ``Elusive phenomenology, counterfactual awareness, and
  presence without mastery,'' {\em Cognitive neuroscience, 5(2), 127-128.},
  2014.

\bibitem{seth3}
A.~K. Seth, ``Presence, objecthood, and the phenomenology of predictive
  perception,'' {\em Cognitive neuroscience, 6(2-3), 111-117.}, 2015.

\bibitem{revonsuo}
A.~Revonsuo, {\em Consciousness as a Biological Phenomenon}.
\newblock MA: MIT Press, 2005.

\bibitem{chella}
A.~Chella and R.~Manzotti, ``Machine consciousness: a manifesto for
  robotics.,'' {\em Int. J. Mach. Conscious. 1, 33–51.}, 2009.

\bibitem{manzotti}
R.~Manzotti and A.~Chella, ``Good old-fashioned artificial consciousness and
  the intermediate level fallacy.,'' {\em Frontiers in Robotics and AI, 5,
  39.}, 2018.

\bibitem{james1890principles}
W.~James, F.~Burkhardt, F.~Bowers, and I.~K. Skrupskelis, {\em The principles
  of psychology}, vol.~1.
\newblock Macmillan London, 1890.

\bibitem{nagel1974like}
T.~Nagel, ``What is it like to be a bat?,'' {\em The philosophical review},
  vol.~83, no.~4, pp.~435--450, 1974.

\bibitem{trehub2007space}
A.~Trehub, ``Space, self, and the theater of consciousness,'' {\em
  Consciousness and cognition}, vol.~16, no.~2, pp.~310--330, 2007.

\bibitem{velmans1990consciousness}
M.~Velmans, ``Consciousness, brain and the physical world,'' {\em Philosophical
  Psychology}, vol.~3, no.~1, pp.~77--99, 1990.

\bibitem{lehar2003world}
S.~M. Lehar, {\em The world in your head: A gestalt view of the mechanism of
  conscious experience}.
\newblock Psychology Press, 2003.

\bibitem{merker1}
B.~Merker, ``From probabilities to percepts a subcortical “global best
  estimate buffer” as locus of phenomenal experience,'' {\em Being in Time:
  Dynamical models of phenomenal experience}, vol.~88, p.~37, 2012.

\bibitem{godfrey2019evolving}
P.~Godfrey-Smith, ``Evolving across the explanatory gap,'' {\em Philosophy,
  Theory, and Practice in Biology}, vol.~11, 2019.

\bibitem{hering1942spatial}
E.~Hering, ``Spatial sense and movements of the eye.,'' 1879/1942.

\bibitem{roelofs1959considerations}
C.~O. Roelofs, ``Considerations on the visual egocentre.,'' {\em Acta
  Psychologica, Amsterdam}, 1959.

\bibitem{howard1966human}
I.~P. Howard and W.~B. Templeton, ``Human spatial orientation.,'' 1966.

\bibitem{starmans2012windows}
C.~Starmans and P.~Bloom, ``Windows to the soul: Children and adults see the
  eyes as the location of the self,'' {\em Cognition}, vol.~123, no.~2,
  pp.~313--318, 2012.

\bibitem{trehub1977neuronal}
A.~Trehub, ``Neuronal models for cognitive processes: Networks for learning,
  perception and imagination,'' {\em Journal of Theoretical Biology}, vol.~65,
  no.~1, pp.~141--169, 1977.

\bibitem{cox1999initial}
P.~H. Cox~Jr, {\em An initial investigation of the auditory egocenter: Evidence
  for a" cyclopean ear"}.
\newblock North Carolina State University, 1999.

\bibitem{neelon2004isoazimuthal}
M.~F. Neelon, D.~S. Brungart, and B.~D. Simpson, ``The isoazimuthal perception
  of sounds across distance: A preliminary investigation into the location of
  the audio egocenter,'' {\em Journal of Neuroscience}, vol.~24, no.~35,
  pp.~7640--7647, 2004.

\bibitem{sukemiya2008location}
H.~Sukemiya, S.~Nakamizo, and H.~Ono, ``Location of the auditory egocentre in
  the blind and normally sighted,'' {\em Perception}, vol.~37, no.~10,
  pp.~1587--1595, 2008.

\bibitem{limanowski2011we}
J.~Limanowski, H.~Hecht, {\em et~al.}, ``Where do we stand on locating the
  self?,'' {\em Psychology}, vol.~2, no.~04, p.~312, 2011.

\bibitem{berthoz2010spatial}
A.~Berthoz and B.~Thirioux, ``A spatial and perspective change theory of the
  difference between sympathy and empathy,'' {\em Paragrana}, vol.~19, no.~1,
  pp.~32--61, 2010.

\bibitem{cattaneo2011blind}
Z.~Cattaneo and T.~Vecchi, {\em Blind vision: the neuroscience of visual
  impairment}.
\newblock MIT press, 2011.

\bibitem{metzinger2005out}
T.~Metzinger, ``Out-of-body experiences as the origin of the concept of
  a'soul','' {\em Mind and Matter}, vol.~3, no.~1, pp.~57--84, 2005.

\bibitem{heller1990perspective}
M.~A. Heller and J.~M. Kennedy, ``Perspective taking, pictures, and the
  blind,'' {\em Perception \& Psychophysics}, vol.~48, no.~5, pp.~459--466,
  1990.

\bibitem{tinti2018my}
C.~Tinti, S.~Chiesa, R.~Cavagli{\`a}, S.~Dalmasso, L.~Pia, and S.~Schmidt, ``On
  my right or on your left? spontaneous spatial perspective taking in blind
  people,'' {\em Consciousness and cognition}, vol.~62, pp.~1--8, 2018.

\bibitem{crick1990towards}
F.~Crick and C.~Koch, ``Towards a neurobiological theory of consciousness,'' in
  {\em Seminars in the Neurosciences}, vol.~2, p.~203, 1990.

\bibitem{riva1}
G.~Riva, ``The neuroscience of body memory: From the self through the space to
  the others.,'' {\em Cortex, 104, 241-260.}, 2018.

\bibitem{mchugh1}
L.~McHugh and I.~Stewart, {\em The self and perspective taking: Contributions
  and applications from modern behavioral science}.
\newblock New Harbinger Publications, California, 2012.

\bibitem{constantinescu}
A.~O. Constantinescu, J.~X. O’Reilly, and T.~E. Behrens, ``Organizing
  conceptual knowledge in humans with a gridlike code,'' {\em Science},
  vol.~352, no.~6292, pp.~1464--1468, 2016.

\bibitem{Premack1}
D.~Premack and G.~Woodruff, ``Does the chimpanzee have a theory of mind?,''
  {\em Behavioral and brain sciences}, vol.~1, no.~4, pp.~515--526, 1978.

\bibitem{hamilton1}
A.~F. d.~C. Hamilton, R.~Brindley, and U.~Frith, ``Visual perspective taking
  impairment in children with autistic spectrum disorder,'' {\em Cognition},
  vol.~113, no.~1, pp.~37--44, 2009.

\bibitem{kalbe1}
E.~Kalbe, F.~Grabenhorst, M.~Brand, J.~Kessler, R.~Hilker, and H.~J.
  Markowitsch, ``Elevated emotional reactivity in affective but not cognitive
  components of theory of mind: A psychophysiological study,'' {\em Journal of
  Neuropsychology}, vol.~1, no.~1, pp.~27--38, 2007.

\bibitem{baroncohen1}
S.~Baron-Cohen, ``Precursors to a theory of mind: Understanding attention in
  others,'' in {\em Natural theories of mind: Evolution, development and
  simulation of everyday mindreading} (A.~Whiten and R.~Byrne, eds.), vol.~1,
  pp.~233--251, Basil Blackwell Oxford, 1991.

\bibitem{wimmer1}
H.~Wimmer and J.~Perner, ``Beliefs about beliefs: Representation and
  constraining function of wrong beliefs in young children's understanding of
  deception,'' {\em Cognition}, vol.~13, no.~1, pp.~103--128, 1983.

\bibitem{frith1}
C.~D. Frith and U.~Frith, ``Interacting minds--a biological basis,'' {\em
  Science}, vol.~286, no.~5445, pp.~1692--1695, 1999.

\bibitem{gergely1}
G.~Gergely, P.~Fonagy, E.~Jurist, and M.~Target, ``Affect regulation,
  mentalization, and the development of the self,'' {\em Int. J. Psycho Anal},
  vol.~77, pp.~217--234, 2002.

\bibitem{barnesholmes1}
Y.~Barnes-Holmes, L.~McHugh, and D.~Barnes-Holmes, ``Perspective-taking and
  theory of mind: A relational frame account.,'' {\em The Behavior Analyst
  Today}, vol.~5, no.~1, p.~15, 2004.

\bibitem{hadwin1}
J.~A. Hadwin, P.~Howlin, and S.~Baron-Cohen, {\em Teaching children with autism
  to mind-read: The workbook}.
\newblock John Wiley \& Sons, 2015.

\bibitem{baroncohen2}
S.~Baron-Cohen, A.~M. Leslie, U.~Frith, {\em et~al.}, ``Does the autistic child
  have a “theory of mind”,'' {\em Cognition}, vol.~21, no.~1, pp.~37--46,
  1985.

\bibitem{leslie1}
A.~M. Leslie and U.~Frith, ``Autistic children's understanding of seeing,
  knowing and believing,'' {\em British Journal of Developmental Psychology},
  vol.~6, no.~4, pp.~315--324, 1988.

\bibitem{onishi1}
K.~H. Onishi and R.~Baillargeon, ``Do 15-month-old infants understand false
  beliefs?,'' {\em science}, vol.~308, no.~5719, pp.~255--258, 2005.

\bibitem{lamm}
C.~Lamm, C.~Lamm, C.~D. Batson, and J.~Decety, ``The neural substrate of human
  empathy: effects of perspective-taking and cognitive appraisal,'' {\em
  Journal of cognitive neuroscience, 19(1), 42-58.}, 2007.

\bibitem{gross}
J.~J. Gross, ``The emerging field of emotion regulation: An integrative
  review,'' {\em Review of general psychology, 2(3), 271-299.}, 1988.

\bibitem{clement}
F.~Clément and D.~Dukes, ``The role of interest in the transmission of social
  values,'' {\em Frontiers in Psychology, 4, 349.}, 2013.

\bibitem{gergely}
G.~Gergely and J.~S. Watson, ``The social biofeedback theory of parental
  affect-mirroring: The development of emotional self-awareness and
  self-control,'' {\em Inf. International Journal of Psycho-Analysis, 77,
  1181-1212.}, 1996.

\bibitem{beckes}
L.~Beckes and J.~A. Coan, ``Social baseline theory: The role of social
  proximity in emotion and economy of action.,'' {\em Social and Personality
  Psychology Compass, 5(12), 976-988.}, 2011.

\bibitem{fonagy}
P.~fonagy Fonagy and E.~Allison, ``The role of mentalizing and epistemic trust
  in the therapeutic relationship.,'' {\em Psychotherapy, 51(3), 372.}, 2014.

\bibitem{ellsworth}
P.~C. Ellsworth and K.~R. Scherer, {\em Appraisal processes in emotion}.
\newblock Handbook of affective sciences, 572, V595, 2003.

\bibitem{moors}
A.~Moors, P.~C. Ellsworth, K.~R. Scherer, and N.~H. Frijda, ``Appraisal
  theories of emotion: State of the art and future development.,'' {\em Emotion
  Review, 5(2), 119-124.}, 2013.

\bibitem{scherer}
K.~R. Scherer, ``The dynamic architecture of emotion: Evidence for the
  component process model.,'' {\em Cognition and emotion, 23(7), 1307-1351.},
  2009.

\bibitem{kalisch}
R.~Kalisch, M.~B. Müller, and O.~Tüscher, ``A conceptual framework for the
  neurobiological study of resilience,'' {\em Behavioral and Brain Sciences},
  vol.~38, 2015.

\bibitem{scaife1}
M.~Scaife and J.~S. Bruner, ``The capacity for joint visual attention in the
  infant,'' {\em Nature}, vol.~253, no.~5489, pp.~265--266, 1975.

\bibitem{baroncohen3}
S.~Baron-Cohen, ``Joint-attention deficits in autism: Towards a cognitive
  analysis,'' {\em Development and psychopathology}, vol.~1, no.~3,
  pp.~185--189, 1989.

\bibitem{ciompi1991affects}
L.~Ciompi, ``Affects as central organising and integrating factors a new
  psychosocial/biological model of the psyche,'' {\em The British Journal of
  Psychiatry}, vol.~159, no.~1, pp.~97--105, 1991.

\bibitem{dawson1}
G.~Dawson, A.~N. Meltzoff, J.~Osterling, J.~Rinaldi, and E.~Brown, ``Children
  with autism fail to orient to naturally occurring social stimuli,'' {\em
  Journal of autism and developmental disorders}, vol.~28, no.~6, pp.~479--485,
  1998.

\bibitem{dsm5}
A.~P. Association {\em et~al.}, {\em Diagnostic and statistical manual of
  mental disorders (DSM-5{\textregistered})}.
\newblock American Psychiatric Pub, 2013.

\bibitem{samson1}
A.~C. Samson, J.~M. Phillips, K.~J. Parker, S.~Shah, J.~J. Gross, and A.~Y.
  Hardan, ``Emotion dysregulation and the core features of autism spectrum
  disorder,'' {\em Journal of Autism and developmental Disorders}, vol.~44,
  no.~7, pp.~1766--1772, 2014.

\bibitem{baroncohen4}
S.~Baron-Cohen, ``Autism: the empathizing-systemizing (es) theory,'' {\em
  Annals of the New York Academy of Sciences}, vol.~1156, no.~1, pp.~68--80,
  2009.

\bibitem{lombardo1}
M.~V. Lombardo and S.~Baron-Cohen, ``The role of the self in mindblindness in
  autism,'' {\em Consciousness and cognition}, vol.~20, no.~1, pp.~130--140,
  2011.

\bibitem{chevallier1}
C.~Chevallier, G.~Kohls, V.~Troiani, E.~S. Brodkin, and R.~T. Schultz, ``The
  social motivation theory of autism,'' {\em Trends in cognitive sciences},
  vol.~16, no.~4, pp.~231--239, 2012.

\bibitem{washburn1}
D.~Washburn, G.~Wilson, M.~Roes, K.~Rnic, and K.~L. Harkness, ``Theory of mind
  in social anxiety disorder, depression, and comorbid conditions,'' {\em
  Journal of anxiety disorders}, vol.~37, pp.~71--77, 2016.

\bibitem{moscovitch1}
D.~A. Moscovitch, ``What is the core fear in social phobia? a new model to
  facilitate individualized case conceptualization and treatment,'' {\em
  Cognitive and Behavioral Practice}, vol.~16, no.~2, pp.~123--134, 2009.

\bibitem{morrison1}
A.~S. Morrison and R.~G. Heimberg, ``Social anxiety and social anxiety
  disorder,'' {\em Annual review of clinical psychology}, vol.~9, pp.~249--274,
  2013.

\bibitem{steinstein1}
M.~B. Stein and D.~J. Stein, ``Social anxiety disorder,'' {\em The lancet},
  vol.~371, no.~9618, pp.~1115--1125, 2008.

\bibitem{clarkwells1}
D.~M. Clark, A.~Wells, {\em et~al.}, ``A cognitive model of social phobia,''
  {\em Social phobia: Diagnosis, assessment, and treatment}, vol.~41, no.~68,
  pp.~00022--3, 1995.

\bibitem{teghtsoonian1}
R.~Teghtsoonian and R.~O. Frost, ``The effects of viewing distance on fear of
  snakes,'' {\em Journal of behavior therapy and experimental psychiatry},
  vol.~13, no.~3, pp.~181--190, 1982.

\bibitem{loschky2005certainty}
L.~Loschky, G.~Mcconkie, J.~Yang, and M.~Miller, ``The limits of visual
  resolution in natural scene viewing,'' {\em Visual Cognition}, vol.~12,
  pp.~1057--1092, 06 2005.

\bibitem{craik1972levels}
F.~I. Craik and R.~S. Lockhart, ``Levels of processing: A framework for memory
  research,'' {\em Journal of verbal learning and verbal behavior}, vol.~11,
  no.~6, pp.~671--684, 1972.

\bibitem{lecun2015deep}
Y.~LeCun, Y.~Bengio, and G.~Hinton, ``Deep learning,'' {\em nature}, vol.~521,
  no.~7553, pp.~436--444, 2015.

\bibitem{silver}
D.~Silver, A.~Huang, C.~J. Maddison, A.~Guez, L.~Sifre, G.~Van Den~Driessche,
  J.~Schrittwieser, I.~Antonoglou, V.~Panneershelvam, M.~Lanctot, {\em et~al.},
  ``Mastering the game of go with deep neural networks and tree search,'' {\em
  nature}, vol.~529, no.~7587, pp.~484--489, 2016.

\bibitem{Yoshida2008game}
W.~Yoshida, R.~J. Dolan, and K.~J. Friston, ``Game theory of mind,'' {\em PLoS
  Comput Biol}, vol.~4, no.~12, p.~e1000254, 2008.

\bibitem{yoshida2010cooperation}
W.~Yoshida, I.~Dziobek, D.~Kliemann, H.~R. Heekeren, K.~J. Friston, and R.~J.
  Dolan, ``Cooperation and heterogeneity of the autistic mind,'' {\em Journal
  of Neuroscience}, vol.~30, no.~26, pp.~8815--8818, 2010.

\bibitem{Silver2016}
D.~Silver, A.~Huang, C.~J. Maddison, A.~Guez, L.~Sifre, G.~{Van Den Driessche},
  J.~Schrittwieser, I.~Antonoglou, V.~Panneershelvam, M.~Lanctot, S.~Dieleman,
  D.~Grewe, J.~Nham, N.~Kalchbrenner, I.~Sutskever, T.~Lillicrap, M.~Leach,
  K.~Kavukcuoglu, T.~Graepel, and D.~Hassabis, ``{Mastering the game of Go with
  deep neural networks and tree search},'' {\em Nature}, vol.~529, no.~7587,
  pp.~484--489, 2016.

\bibitem{Emery-Montemerlo2004}
R.~Emery-Montemerlo, G.~Gordon, J.~Schneider, and S.~Thrun, ``{Approximate
  solutions for partially observable stochastic games with common payoffs},''
  {\em Proceedings of the Third International Joint Conference on Autonomous
  Agents and Multiagent Systems, AAMAS 2004}, vol.~1, pp.~136--143, 2004.

\bibitem{Brown2018}
N.~Brown, T.~Sandholm, and B.~Amos, ``{Depth-limited solving for
  imperfect-information games},'' {\em Advances in Neural Information
  Processing Systems}, vol.~2018-December, pp.~7663--7674, 2018.

\bibitem{saygin2000turing}
A.~P. Saygin, I.~Cicekli, and V.~Akman, ``Turing test: 50 years later,'' {\em
  Minds and machines}, vol.~10, no.~4, pp.~463--518, 2000.

\bibitem{tisserand1}
Y.~Tisserand, R.~Aylett, M.~Mortillaro, and D.~Rudrauf, ``Real-time simulation
  of virtual humans' emotional facial expressions, harnessing autonomic
  physiological and musculoskeletal control,'' in {\em Proceedings of the 20th
  ACM International Conference on Intelligent Virtual Agents}, pp.~1--8, 2020.

\bibitem{rudrauf2}
D.~Rudrauf, ``Structure-function relationships behind the phenomenon of
  cognitive resilience in neurology: insights for neuroscience and medicine,''
  {\em Advances in Neuroscience}, 2014.

\bibitem{aruco1}
F.~Romero-Ramirez, R.~Muñoz-Salinas, and R.~Medina-Carnicer, ``Speeded up
  detection of squared fiducial markers,'' {\em Image and Vision Computing},
  vol.~76, 06 2018.

\bibitem{aruco2}
S.~Garrido-Jurado, R.~Muñoz-Salinas, F.~Madrid-Cuevas, and R.~Medina-Carnicer,
  ``Generation of fiducial marker dictionaries using mixed integer linear
  programming,'' {\em Pattern Recognition}, vol.~51, 10 2015.

\end{thebibliography}
\end{document}